\documentclass[aps,pra,11pt,onecolumn,nofootinbib,tightenlines,superscriptaddress]{revtex4-2}

\usepackage[T1]{fontenc}
\usepackage{lmodern}
\usepackage{microtype}
\usepackage{amsmath,amssymb,mathtools,amsthm}
\usepackage{braket}
\usepackage{array,booktabs}
\usepackage[table]{xcolor}
\usepackage{placeins}
\usepackage[colorlinks=true,urlcolor=blue,citecolor=blue,linkcolor=blue]{hyperref}

\definecolor{verylightgray}{gray}{0.96}
\definecolor{morelightgray}{gray}{0.975}

\usepackage{tikz}
\usetikzlibrary{positioning, arrows.meta, decorations.pathmorphing, fit}

\newtheorem{theorem}{Theorem}[section]
\newtheorem{lemma}[theorem]{Lemma}
\newtheorem{proposition}[theorem]{Proposition}

\newtheorem{remark}[theorem]{Remark}
\renewcommand{\thetheorem}{\arabic{section}.\arabic{theorem}}
\let\oldremark\remark
\renewcommand{\remark}{\oldremark\upshape}

\renewcommand{\P}{\mathcal{P}}
\newcommand{\norm}[1]{\left\lVert#1\right\rVert}
\DeclareMathOperator{\Tr}{Tr}
\DeclareMathOperator{\supp}{supp}

\begin{document}

\title{Strong Converse Exponent of Quantum State Merging}
\hypersetup{
 pdftitle={Strong Converse Exponent of Quantum State Merging},
 pdfauthor={Mario Berta, Hao-Chung Cheng, Marco Tomamichel},
 pdfkeywords={quantum state merging, strong converse exponent, Renyi entropy, partial smoothing}
}

\begin{abstract}
    We determine the strong converse exponent for the entanglement cost of quantum state merging, showing that it is characterized by the optimized $\alpha$-$z$ conditional R\'enyi entropies with $z=\alpha/2\in[1/2,1]$. This contrasts with the sandwiched conditional R\'enyi entropies that typically govern strong converse exponents in quantum information theory. As a consequence, we derive the strong converse exponent of the partially smoothed conditional min-entropy in purified distance. This exponent is governed by club-sandwiched conditional entropies, whereas global smoothing in purified distance leads to a sandwiched expression.
\end{abstract}

\author{Mario Berta}
\affiliation{Institute for Quantum Information, RWTH Aachen University, Germany}
\affiliation{Department of Computing, Imperial College London, United Kingdom}

\author{Hao-Chung Cheng}
\affiliation{Department of Electrical Engineering, National Taiwan University, Taipei 10617, Taiwan}
\affiliation{Physics/Mathematics Division, National Center for Theoretical Sciences, Taiwan}
\affiliation{Hon Hai (Foxconn) Quantum Computing Center, Taiwan}

\author{Roberto Rubboli}
\affiliation{Department of Mathematical Sciences, University of Copenhagen, Universitetsparken 5, 2100 Denmark}

\author{Marco Tomamichel}
\affiliation{Department of Electrical and Computer Engineering,
National University of Singapore, Singapore 117583, Singapore}
\affiliation{Centre for Quantum Technologies, National University of Singapore, Singapore 117543, Singapore}

\maketitle

\section{Introduction}
\label{sec:introduction}

Quantum state merging is the task of transferring Alice's share $A$ of a tripartite pure state $\psi_{ABR}$ to Bob while preserving the global state, including its coherence with the inaccessible reference $R$.  In the one-way setting, Alice and Bob may use local quantum operations, arbitrary forward classical communication, and preshared entanglement.
A schematic illustration of the one-way quantum state merging task is shown in Figure~\ref{fig:QSM}. 
Introduced by Horodecki, Oppenheim, and Winter~\cite{horodecki2005partial,horodecki2007merging}, state merging gives the conditional entropy a direct operational meaning: the optimal asymptotic net entanglement cost is $H(A|B)_\psi$, and a negative conditional entropy means that the protocol generates rather than consumes entanglement.  State merging has since become a fundamental primitive of quantum Shannon theory and a basic ingredient in distributed quantum compression, coding with quantum side information, assisted entanglement distillation, and a family of decoupling-based quantum protocols~\cite{horodecki2007merging,abeyesinghe2009mother}.
Its one-shot formulations have also played an important role in finite-resource quantum information theory~\cite{berta2009single,anshu2020partially,berta2026tight}.
Since state merging is an intrinsically fully quantum task whose success requires preserving coherence with the inaccessible reference system, fidelity (equivalently, purified distance) is a particularly natural performance criterion, which we adopt throughout this work.
As our results reveal, this natural choice has nontrivial consequences beyond the first-order rate, giving rise to the unconventional R\'enyi structures that govern the corresponding exponents.

\begin{figure}[htbp]
    \centering
    \begin{tikzpicture}[
        sys/.style={circle, draw=black, thick, fill=white, minimum size=0.9cm, inner sep=1pt, font=\large},
        ent/.style={thick, draw=red, decorate, decoration={snake, amplitude=1.5pt, segment length=6pt}},
        statebg/.style={fill=blue!10, draw=blue!60, thick, rounded corners=6pt, fill opacity=0.8},
    ]

    \coordinate (R_L) at (1.5, 3.8);
    \coordinate (A_L) at (0, 1.2);
    \coordinate (B_L) at (3, 1.2);

    \draw[statebg] (A_L) -- (B_L) -- (R_L) -- cycle;
    \node[color=blue!80!black, font=\large] at (1.5, 2.3) {$|\psi\rangle_{ABR}^{\otimes n}$};

    \node[sys] (node_R_L) at (R_L) {$R^n$};
    \node[sys] (node_A_L) at (A_L) {$A^n$};
    \node[sys] (node_B_L) at (B_L) {$B^n$};

    \node[font=\bfseries] at (0, 0.2) {Alice};
    \node[font=\bfseries] at (3, 0.2) {Bob};

    \draw[ent] (0, -0.5) -- (3, -0.5);
    \draw[ent] (0, -0.7) -- (3, -0.7);
    \draw[ent] (0, -0.9) -- (3, -0.9);
    
    \filldraw (0, -0.5) circle (1.5pt); \filldraw (3, -0.5) circle (1.5pt);
    \filldraw (0, -0.7) circle (1.5pt); \filldraw (3, -0.7) circle (1.5pt);
    \filldraw (0, -0.9) circle (1.5pt); \filldraw (3, -0.9) circle (1.5pt);

    \node[below=0.15cm of {(1.5, -0.9)}] {Schmidt rank $|K_n|$};

    \draw[-{Stealth[length=4mm, width=2.5mm]}, line width=1.5pt, draw=black!80] 
        (4.2, 1.8) -- (6.2, 1.8) 
        node[midway, above=0.15cm, font=\bfseries, text=black] {One-way LOCC};

    \coordinate (A_R) at (7.5, 1.2);       
    
    \coordinate (B_R1) at (9.5, 1.2);      
    \coordinate (B_R2) at (10.7, 1.2);     
    \coordinate (R_R) at (9.5, 3.8);       

    \draw[statebg] (B_R1) -- (B_R2) -- (R_R) -- cycle;
    
    \node[color=blue!80!black, font=\large] at (11.0, 2.3) {$|\psi\rangle_{\hat{A}BR}^{\otimes n}$};

    \node[sys] (node_R_R) at (R_R) {$R^n$};
    \node[sys, dashed, draw=gray, fill=gray!5] (node_A_R) at (A_R) {}; 
    \node[sys] (node_B_R1) at (B_R1) {$\hat{A}^n$};
    \node[sys] (node_B_R2) at (B_R2) {$B^n$};

    \node[draw, dashed, rounded corners, inner sep=0.25cm, fit=(node_B_R1) (node_B_R2)] (bob_box) {};

    \node[font=\bfseries, text=gray] at (7.5, 0.2) {Alice};
    \node[font=\bfseries] at (10.1, 0.2) {Bob};
    
    \draw[ent] (7.5, -0.5) -- (10.1, -0.5);
    \draw[ent] (7.5, -0.7) -- (10.1, -0.7);
    
    \filldraw (7.5, -0.5) circle (1.5pt); \filldraw (10.1, -0.5) circle (1.5pt);
    \filldraw (7.5, -0.7) circle (1.5pt); \filldraw (10.1, -0.7) circle (1.5pt);

    \node[below=0.15cm of {(8.8, -0.7)}] {Schmidt rank $|L_n|$};

    \end{tikzpicture}
    \caption{Illustration of the quantum state merging task protocol.
    Left: Alice and Bob share $n$ copies of a tripartite pure state $|\psi\rangle_{ABR}^{\otimes n}$ alongside preshared entanglement of Schmidt rank $|K_n|$. Right: After applying a one-way local operation and classical communication (LOCC) protocol, Alice's system $A^n$ is merged at Bob's side, consuming or distilling entanglement such that the final entangled state has a Schmidt rank of $|L_n|$.}
    \label{fig:QSM}
\end{figure}

For an independent and identically distributed (i.i.d.) source $\psi_{ABR}^{\otimes n}$, let $r$ denote the net entanglement cost per copy.  Whenever $r>H(A|B)_\psi$, the optimal purified-distance error vanishes exponentially in $n$, as does the infidelity.  Berta, Cheng, and Yao determined the exact error exponent in the low-entanglement-cost portion of this direct regime, expressed in terms of the Petz conditional R\'enyi entropy~\cite{berta2026tight}.  In this paper, we study the complementary \emph{strong converse regime} $r<H(A|B)_\psi$, where the optimal fidelity converges to zero exponentially.  Sharma established an exponential strong converse for state merging~\cite{sharma2014strong}, and Leditzky, Wilde, and Datta obtained related R\'enyi strong-converse bounds for coherent state merging and other protocols~\cite{leditzky2016strong}; however, the exact squared-fidelity decay rate remained open.
Exact strong converse exponents in quantum hypothesis testing, classical-quantum channel coding, entanglement-assisted communication, smoothing, and catalytic decoupling are often governed by sandwiched R\'enyi quantities~\cite{mosonyiogawa2015hypothesis,mosonyi2017strong,li2023strong,li2024operational}.  
In contrast, we show that state merging exhibits a different R\'enyi structure.
Writing $F_n^{\mathrm{merg}}(\psi,r)$ for the optimal squared fidelity under net entanglement cost at most $nr$, we prove
\begin{align}
 \lim_{n\to\infty}-\frac{1}{n}\log F_n^{\mathrm{merg}}(\psi,r)
 &=\max_{\alpha\in[1,2]}\frac{2(\alpha-1)}{\alpha}
 \left[H_{\alpha,\alpha/2}^{\uparrow}(A|B)_\psi-r\right],
 \label{eq:introduction-main-result}
\end{align}
where $H_{\alpha,\alpha/2}^{\uparrow}(A|B)_\psi$ is the $\alpha$-$z$ conditional R\'enyi entropy defined in Eq.~\eqref{eq:alpha-z-conditional}.
The exponent is positive when $r<H(A|B)_\psi$.
For $1<\alpha<2$, the parameter choice $z=\alpha/2$ belongs to neither the Petz family $z=1$ nor the sandwiched family $z=\alpha$.  
Hence, to our knowledge, Eq.~\eqref{eq:introduction-main-result} provides the first operational interpretation of this $\alpha$-$z$ path for exponents in a general quantum information processing task~\cite{audenaert13_alphaz}.\footnote{Verhagen, Tomamichel, and Haapasalo obtained operational interpretations of $\alpha$-$z$ relative entropies with $\alpha<1$ for large-sample and catalytic relative majorization of flat-state pairs and extensions~\cite{verhagen2026conditions}.}

For the converse, we use the one-shot reduction of Anshu, Berta, Jain, and Tomamichel to the \emph{partially smoothed conditional min-entropy}~\cite{anshu2020partially}.
The subsequent noncommutative H\"older inequality and auxiliary-state optimization adapt the club-sandwiched converse method developed by Rubboli and Tomamichel for composable randomness extraction against quantum side information~\cite{rubboli2026strong}, together with standard Schatten-norm H\"older theory~\cite{bhatia1997matrix}.  The club-sandwiched duality of Refs.~\cite{rubboli2024quantum,rubboli2026strong} then converts the resulting expression into the $\alpha$-$z$ conditional entropy in Eq.~\eqref{eq:introduction-main-result}.

For achievability, we combine the recent relative-entropy decoupling theorem of Ref.~\cite{berta2026tight} with the Log-Euclidean change-of-measure and double-blocking method introduced by Mosonyi and Ogawa for classical-quantum channel coding~\cite{mosonyi2017strong} and subsequently developed in related strong-converse-exponent problems~\cite{CHDH-2018,CHDH2-2018,li2023strong,li2024operational,berta2024strong,berta2025strong}.
A controlled Uhlmann recovery argument then converts the resulting decoupling construction into a state-merging protocol.  Finally, adapting the fully quantum double-pinching method of Rubboli and Tomamichel~\cite{rubboli2026strong}, we pass from the Log-Euclidean quantity to the club-sandwiched conditional entropy.

The principal noncommutative difficulty is that the relevant club-sandwiched quantity simultaneously involves $\psi_{AR}$, its marginal $\psi_R$, and an auxiliary state on $R$; a single pinching does not make these three operators commute.  Our fully quantum construction adapts and extends the two-pinching universal-state architecture of Rubboli and Tomamichel~\cite{rubboli2026strong} from classical-quantum randomness extraction to state merging, with the additional requirement that the state-merging marginal be preserved.

The globally and partially smoothed conditional min-entropies provide an instructive comparison.
Consistently with the fidelity criterion adopted for state merging, we use purified distance as the smoothing metric for both notions.
For a bipartite state $\rho_{AR}$, let $q$ denote the target conditional min-entropy rate.  In the strong converse regime $q>H(A|R)_\rho$, global smoothing is governed by the (non-optimized) arrow-down sandwiched conditional entropy $\widetilde H_\alpha^{\downarrow}(A|R)_\rho$~\cite{li2024operational}, whereas Appendix~\ref{app:partial-smoothing-exponent} shows that imposing the marginal constraint changes the strong converse exponent to one governed by the club-sandwiched conditional entropy $\widetilde H_\alpha^{\lambda}(A|R)_\rho$, $\lambda = \frac{1-2\alpha}{1-\alpha}$~\cite{rubboli2026strong}.
The same club-sandwiched quantity determines the strong converse exponent of quantum state merging.
Hence, global and partial smoothing generally have different strong converse exponents.
In the direct regime, however, the one-shot comparison of Ref.~\cite{anshu2020partially}, together with the global-smoothing exponent of Ref.~\cite{li2022tight} and the state-merging exponent of Ref.~\cite{berta2026tight}, shows that global smoothing, partial smoothing, and state merging all have the same error exponent in the low-entanglement-cost regime. 
Table~\ref{tab:exponent-comparison} summarizes these results under the purified-distance criterion.

\FloatBarrier
\begin{table*}[h]
	\caption{Strong converse and error exponents in the i.i.d.\ setting under the purified-distance criterion $P$.  The strong converse exponent is the decay rate of $1-P$ (equivalently, $1-P^2$, which is the squared fidelity), whereas the error exponent is the decay rate of $P$.
	Here $q$ is the target conditional min-entropy rate.  In the state-merging row, $q=-r$ is the net entanglement-distillation rate and $r$ is the net entanglement-cost rate.  
	The error exponent of the globally smoothed conditional min-entropy $H_{\min}^{\varepsilon,P}(A|R)_\rho$ defined in \eqref{eq:globally-smoothed-min} follows directly from Ref.~\cite[Theorem~1]{li2022tight}.
	The error exponent of the partially smoothed conditional min-entropy $H_{\min}^{\varepsilon,P}(A|\dot R)_\rho$ defined in \eqref{eq:partially-smoothed-min} follows by combining Ref.~\cite[Theorem~1]{li2022tight}, Ref.~\cite[Theorem~6]{anshu2020partially}, and Ref.~\cite[Theorem~3]{berta2026tight}.
	The symbol $*$ marks formulas known to be tight only in the low-entanglement-cost regime of Ref.~\cite{berta2026tight}.
	When expressed on the $AR$ system, the three error exponents have the same form.  By contrast, the strong converse exponent for quantum state merging matches the club-sandwiched form~\cite{rubboli2026strong} of the partially smoothed conditional min-entropy rather than the form of the globally smoothed conditional min-entropy.
	}
	\label{tab:exponent-comparison}
	\centering
	\renewcommand{\arraystretch}{1.16}
	\setlength{\tabcolsep}{2pt}
	\scriptsize
	\begin{tabular}{@{}lcc@{}}
		\toprule
		\textbf{Setting} & \textbf{Strong converse exponent} & \textbf{Error exponent} \\
		\midrule
		\begin{tabular}[c]{@{}l@{}}Globally smoothed\\conditional min-entropy\end{tabular}
		&
		\begin{tabular}[c]{@{}c@{}}
			$\displaystyle \max_{\alpha\in[1/2,1]}\frac{1-\alpha}{\alpha}\left[q-\widetilde H_\alpha^{\downarrow}(A|R)_\rho\right]$ \\[2pt]
			\cite{li2024operational}
		\end{tabular}
		&
		\colorbox{morelightgray}{%
			\makebox[5.5cm][c]{%
		\begin{tabular}[c]{@{}c@{}}
			$\displaystyle \frac{1}{2}\max_{\alpha\in[1,\infty]}(\alpha-1)\left[\widetilde H_\alpha^{\downarrow}(A|R)_\rho-q\right]$ \\[2pt]
			\cite{li2022tight}
		\end{tabular}
		}}
		\\
		\arrayrulecolor{lightgray}\midrule\arrayrulecolor{black}
		\begin{tabular}[c]{@{}l@{}}Partially smoothed\\conditional min-entropy\end{tabular}
		&
		\colorbox{verylightgray}{%
			\makebox[5.5cm][c]{%
				\begin{tabular}[c]{@{}c@{}}
					$\displaystyle \max_{\alpha\in[1/2,1]}\frac{1-\alpha}{\alpha}\left[q-\widetilde H_\alpha^{\frac{1-2\alpha}{1-\alpha}}(A|R)_\rho\right]$ \\[2pt]
					(Theorem~\ref{thm:partial-smoothing-exponent})
				\end{tabular}%
		}}
		&
		\colorbox{morelightgray}{%
				\makebox[5.5cm][c]{%
		\begin{tabular}[c]{@{}c@{}}
			$\displaystyle \frac{1}{2}\max_{\alpha\in[1,2]}(\alpha-1)\left[\widetilde H_\alpha^{\downarrow}(A|R)_\rho-q\right]$ \\[2pt]
			\cite{anshu2020partially,li2022tight,berta2026tight}$^*$
		\end{tabular}
		}}
		\\
		\arrayrulecolor{lightgray}\midrule\arrayrulecolor{black}
		\begin{tabular}[c]{@{}l@{}}Quantum state merging\end{tabular}
		&
		\colorbox{verylightgray}{%
			\makebox[5.5cm][c]{%
				\begin{tabular}[c]{@{}c@{}}
					$\displaystyle \max_{\alpha\in[1/2,1]}\frac{1-\alpha}{\alpha}\left[q-\widetilde H_\alpha^{\frac{1-2\alpha}{1-\alpha}}(A|R)_\psi\right]$ \\[10pt]
					$\displaystyle =\max_{\alpha\in[1,2]}\frac{2(\alpha-1)}{\alpha}\left[H_{\alpha,\alpha/2}^{\uparrow}(A|B)_\psi-r\right]$ \\[2pt]
					(Theorem~\ref{thm:main})
				\end{tabular}%
		}}
		&
		\colorbox{morelightgray}{%
				\makebox[5.5cm][c]{%
		\begin{tabular}[c]{@{}c@{}}
			$\displaystyle \frac{1}{2}\max_{\alpha\in[1,2]}(\alpha-1)\left[\widetilde H_\alpha^{\downarrow}(A|R)_\psi-q\right]$ \\[10pt]
			$\displaystyle =\frac{1}{2}\max_{\alpha\in[1/2,1]}\frac{1-\alpha}{\alpha}\left[r-H_{\alpha,1}^{\uparrow}(A|B)_\psi\right]$ \\[2pt]
			\cite{berta2026tight}$^*$
		\end{tabular}
		}}
		\\
		\bottomrule
	\end{tabular}
\end{table*}
\FloatBarrier

The remainder of the paper is organized as follows.  Section~\ref{sec:preliminaries} introduces the entropic quantities, smoothing conventions, and preliminary results used throughout.  Section~\ref{sec:state-merging} defines the state-merging protocol and states the main theorem.  Section~\ref{sec:converse} proves the converse through the partially smoothed conditional min-entropy.  Section~\ref{sec:achievability} establishes achievability using relative-entropy decoupling, a Log-Euclidean change of measure, controlled recovery, and double pinching.
Section~\ref{sec:conclusion} concludes the paper with a comparison between global and partial smoothing.
Appendix~\ref{app:partial-smoothing-exponent} derives the exact strong converse exponent of the partially smoothed conditional min-entropy.

\section{Entropic quantities and preliminaries}
\label{sec:preliminaries}

\subsection{Notation and basic entropies}

All Hilbert spaces are finite-dimensional, and $\mathbb{N}:=\{1,2,\ldots\}$.  All logarithms are base 2, and for a Hermitian operator $X$ we write $\exp X:=2^X$, defined by functional calculus.  We write $|A|$ for the dimension of a system $A$, $\mathcal{S}(A)$ for the set of states on $A$, and $\mathcal{S}_{\leq}(A)$ for the positive semidefinite operators of trace at most one.  Identity operators and identity tensor factors are suppressed when their systems are clear.  The support of a positive semidefinite operator $X$ is denoted by $\supp(X)$.  Powers and logarithms are evaluated on the relevant support; negative powers denote generalized inverses on that support.  For $p\geq1$, $\norm{X}_p:=(\Tr|X|^p)^{1/p}$.  For a scalar $x$, $[x]_+:=\max\{x,0\}$, whereas for a Hermitian operator $X$, $[X]_+:=(|X|+X)/2$ denotes its positive part.

For isomorphic $K$-dimensional systems $K_A$ and $K_B$ with fixed orthonormal bases, we use
\begin{align}
 |\Phi_K\rangle_{K_AK_B}
 &:=\frac{1}{\sqrt K}\sum_{i=1}^K |i\rangle_{K_A}|i\rangle_{K_B},
 &
 \Phi_{K_AK_B}&:=|\Phi_K\rangle\!\langle\Phi_K|.
 \label{eq:maximally-entangled-state}
\end{align}
The von Neumann entropy is
\begin{align}
 H(A)_\rho:=-\Tr[\rho_A\log\rho_A].
 \label{eq:von-Neumann-entropy}
\end{align}

For $\rho,\sigma\in\mathcal{S}_{\leq}(A)$, the generalized squared fidelity and purified distance are
\begin{align}
 F(\rho,\sigma)
 &:=\left(\norm{\sqrt{\rho}\sqrt{\sigma}}_1
 +\sqrt{(1-\Tr\rho)(1-\Tr\sigma)}\right)^2,
 \label{eq:generalized-fidelity}\\
 P(\rho,\sigma)&:=\sqrt{1-F(\rho,\sigma)}.
 \label{eq:purified-distance}
\end{align}
When both arguments are normalized, $F(\rho,\sigma)$ reduces to the squared Uhlmann fidelity. 
For positive semidefinite operators $X$ and $Y$, we use the homogeneous Umegaki relative entropy
\begin{align}
 D(X\|Y):=
 \begin{cases}
 \Tr\left[X(\log X-\log Y)\right],&\supp(X)\subseteq\supp(Y),\\
 +\infty,&\text{otherwise}
 \end{cases}.
 \label{eq:relative-entropy}
\end{align}
 No trace-correction term such as $-\Tr X+\Tr Y$ is included.  All later uses of $D$ for subnormalized operators, including those in Appendix~\ref{app:partial-smoothing-exponent}, follow this homogeneous convention. 
For a normalized bipartite state, the von Neumann conditional entropy is
\begin{align}
 H(A|B)_\rho
 &:=-D\left(\rho_{AB}\middle\|I_A\otimes\rho_B\right)
 =H(AB)_\rho-H(B)_\rho.
 \label{eq:von-Neumann-conditional-entropy}
\end{align}
The max-relative entropy is
\begin{align}
 D_{\max}(X\|Y):=\inf\{\gamma\in\mathbb{R}:X\leq 2^\gamma Y\}.
 \label{eq:max-relative-entropy}
\end{align}

\subsection{R\'enyi divergences and conditional entropies}

We next collect the R\'enyi quantities used in the exponent formulas.  Whenever a displayed parametrization is indeterminate at an endpoint, its value is understood by the stated continuous extension.

For $\alpha,z>0$ with $\alpha\neq1$, the $\alpha$-$z$ R\'enyi divergence is~\cite{audenaert13_alphaz}
\begin{align}
 D_{\alpha,z}(\rho\|\sigma)
 :=\frac{1}{\alpha-1}\log\Tr\left[
 \left(\rho^{\frac{\alpha}{2z}}
 \sigma^{\frac{1-\alpha}{z}}
 \rho^{\frac{\alpha}{2z}}\right)^z\right],
 \label{eq:alpha-z-divergence}
\end{align}
with the standard support convention.  The sandwiched R\'enyi divergence is $\widetilde D_\alpha=D_{\alpha,\alpha}$~\cite{MDS+13,Wilde3}.
We define the $\alpha$-$z$ conditional R\'enyi entropy
\begin{align}
 H_{\alpha,z}^{\uparrow}(A|B)_\rho
 :=\sup_{\sigma_B\in\mathcal{S}(B)}
 -D_{\alpha,z}\!\left(\rho_{AB}\middle\|I_A\otimes\sigma_B\right).
 \label{eq:alpha-z-conditional}
\end{align}
We also use the fixed-marginal (arrow-down) conditional entropy
\begin{align}
 H_{\alpha,z}^{\downarrow}(A|B)_\rho
 &:=-D_{\alpha,z}\!\left(\rho_{AB}\middle\|I_A\otimes\rho_B\right),
 \label{eq:alpha-z-conditional-down}\\
 \widetilde H_\alpha^{\downarrow}(A|B)_\rho
 &:=H_{\alpha,\alpha}^{\downarrow}(A|B)_\rho.
 \label{eq:sandwiched-conditional-down}
\end{align}
At $\alpha=1$, these quantities are defined by continuity and both equal to $H(A|B)_\rho$.

For $0<\alpha<1$ and $\lambda\leq0$, the club-sandwiched conditional entropy of Ref.~\cite{rubboli2024quantum} is
\begin{align}
 \widetilde H_\alpha^\lambda(A|R)_\rho
 :=\min_{\sigma_R\in\mathcal{S}(R)}\frac{1}{1-\alpha}
 \log\Tr\left[
 \left(
 \rho_{AR}^{1/2}
 \left(I_A\otimes
 \rho_R^{\frac{1-\lambda}{2}\frac{1-\alpha}{\alpha}}
 \sigma_R^{\lambda\frac{1-\alpha}{\alpha}}
 \rho_R^{\frac{1-\lambda}{2}\frac{1-\alpha}{\alpha}}
 \right)
 \rho_{AR}^{1/2}
 \right)^\alpha\right].
 \label{eq:club-entropy}
\end{align}
The objective is set to $+\infty$ unless
$\supp(\rho_{AR})\subseteq\supp(I_A\otimes\sigma_R)$.  With this convention,
negative powers are generalized inverses on the relevant support.  We use the boundary of the data-processing region,
\begin{align}
 \widetilde H_\alpha^{\frac{1-2\alpha}{1-\alpha}}(A|R)_\rho
 =\min_{\sigma_R\in\mathcal{S}(R)}\frac{1}{1-\alpha}
 \log\Tr\left[
 \left(
 \rho_{AR}^{1/2}\rho_R^{1/2}
 \sigma_R^{\frac{1-2\alpha}{\alpha}}
 \rho_R^{1/2}\rho_{AR}^{1/2}
 \right)^\alpha\right].
 \label{eq:club-boundary}
\end{align}
The endpoint values at $\alpha=1/2$ and $\alpha=1$ are defined by the appropriate one-sided limits.  In the parameter range
$\alpha\in[1/2,1)$ and $\frac{1-2\alpha}{1-\alpha}\leq\lambda\leq0$, the club-sandwiched conditional entropy is additive and satisfies data processing under channels on the conditioning system~\cite[Theorems~4.1 and~5.1]{rubboli2024quantum}.  Moreover,
\begin{align}
 \lim_{\alpha\nearrow1}
 \widetilde H_\alpha^{\frac{1-2\alpha}{1-\alpha}}(A|R)_\rho
 &=H(A|R)_\rho,
 \label{eq:club-limit}\\
 \widetilde H_\alpha^{\frac{1-2\alpha}{1-\alpha}}(A|R)_\rho
 &\geq H(A|R)_\rho,
 \qquad \alpha\in[1/2,1].
 \label{eq:club-lower-bound}
\end{align}
These properties are proved in Ref.~\cite[Lemmas~17 and~18]{rubboli2026strong}.

For a normalized state $\eta$, a positive semidefinite operator $\sigma$, and $0<\alpha<1$, the Log-Euclidean R\'enyi divergence is~\cite{audenaert13_alphaz,mosonyi2017strong}
\begin{align}
 D_{\alpha,\infty}(\eta\|\sigma)
 :=\frac{1}{\alpha-1}\log\Tr\left[
 Q\exp\left(Q(\alpha\log\eta+(1-\alpha)\log\sigma)Q\right)\right],
 \label{eq:LE-divergence}
\end{align}
where $Q$ projects onto $\supp(\eta)\cap\supp(\sigma)$.  Its Gibbs variational representation, obtained from the variational formula for quantum relative entropy~\cite{petz1988variational} (see also the use of this representation in Refs.~\cite{mosonyi2017strong,rubboli2026strong}), is
\begin{align}
 D_{\alpha,\infty}(\eta\|\sigma)
 =\frac{1}{1-\alpha}\min_{\substack{\tau\in\mathcal{S}\\
 \supp(\tau)\subseteq\supp(\eta)\cap\supp(\sigma)}}
 \left\{\alpha D(\tau\|\eta)+(1-\alpha)D(\tau\|\sigma)\right\}.
 \label{eq:LE-Gibbs}
\end{align}
At $\alpha=1/2$, the Araki--Lieb--Thirring inequality~\cite{araki1990inequality} gives
\begin{align}
 -\log F(\eta,\sigma)=\widetilde D_{1/2}(\eta\|\sigma)
 \leq D_{1/2,\infty}(\eta\|\sigma).
 \label{eq:fidelity-LE}
\end{align}
The formulas above remain valid when $\sigma$ is subnormalized.  

For $0<\alpha<1$ and $\lambda\leq0$, the associated conditional entropy is
\begin{align}
 H_{\alpha,\infty}^{\lambda}(A|R)_\rho
 &:=\min_{\sigma_R\in\mathcal{S}(R)}\frac{1}{1-\alpha}
 \log\Tr\Big[Q_\rho\exp\Big(
 \alpha\log\rho_{AR}
+(1-\alpha)Q_\rho
 \big((1-\lambda)\log\rho_R+\lambda\log\sigma_R\big)Q_\rho
 \Big)\Big],
 \label{eq:LE-conditional}
\end{align}
where $Q_\rho$ projects onto $\supp(\rho_{AR})$ and identities on $A$ are omitted inside the logarithm.
The objective is set to $+\infty$ unless
$\supp(\rho_{AR})\subseteq\supp(I_A\otimes\sigma_R)$, and logarithms are evaluated on the corresponding supports.

\subsection{Smoothing and auxiliary results}

The smoothing quantities and auxiliary one-shot tools below connect the entropic formulas to state merging.  A quantum instrument is a finite family of completely positive, trace-nonincreasing maps whose sum is trace preserving.  We say that it has \emph{one Kraus operator per outcome} when every outcome map has Kraus rank one.

The globally smoothed conditional min-entropy with the conditioning operator fixed to $\rho_R$ is~\cite{anshu2020partially}
\begin{align}
 H_{\min}^{\varepsilon,P}(A|R)_\rho
 :=\max_{\substack{\tau_{AR}\in\mathcal{S}_{\leq}(AR)\\
 P(\tau_{AR},\rho_{AR})\leq\varepsilon}}
 \left[-D_{\max}\!\left(\tau_{AR}\middle\|I_A\otimes\rho_R\right)\right].
 \label{eq:globally-smoothed-min}
\end{align}
The partially smoothed conditional min-entropy that enters the merging converse also imposes the marginal constraint $\tau_R\leq\rho_R$:
\begin{align}
 H_{\min}^{\varepsilon,P}(A|\dot R)_\rho
 :=\max_{\substack{\tau_{AR}\in\mathcal{S}_{\leq}(AR)\\
 P(\tau_{AR},\rho_{AR})\leq\varepsilon,\;\tau_R\leq\rho_R}}
 \left[-D_{\max}\!\left(\tau_{AR}\middle\|I_A\otimes\rho_R\right)\right].
 \label{eq:partially-smoothed-min}
\end{align}

\begin{proposition}[Partially smoothed one-way merging converse {\cite{anshu2020partially}}]
\label{prop:partially-smoothed-converse}
Let $\psi_{ABR}$ be pure.  If a one-way merging protocol has purified-distance error at most $\varepsilon\in(0,1]$, initial and final Schmidt ranks $K$ and $L$, and net entanglement cost $c=\log K-\log L$, then
\begin{align}
 c\geq-H_{\min}^{\varepsilon,P}(A|\dot R)_\psi.
 \label{eq:partially-smoothed-converse}
\end{align}
\end{proposition}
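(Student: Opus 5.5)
The plan is to reconstruct the converse of Anshu, Berta, Jain, and Tomamichel. From the output of the protocol we read off a candidate operator $\tau_{AR}$ that satisfies three things at once: it is $\varepsilon$-close to $\psi_{AR}$ in purified distance, it obeys $\tau_R\leq\psi_R$, and it satisfies $\tau_{AR}\leq 2^{c}\, I_A\otimes\psi_R$. Substituting this $\tau$ into the maximization in Eq.~\eqref{eq:partially-smoothed-min} then gives $H_{\min}^{\varepsilon,P}(A|\dot R)_\psi\geq -c$, which is Eq.~\eqref{eq:partially-smoothed-converse}.

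\emph{Step 1: reduce Alice's operation.} Alice acts on $AK_A$ with an instrument whose outcome $j$ is sent to Bob. After Naimark dilation and coarse-graining of purifications, we may assume the instrument has one Kraus operator per outcome, $\{M_j\}$. Bob then applies a decoding isometry $V_j$ on $BK_B$. Since Bob's operations do not touch $R$, the post-measurement marginal on $R\,K_B$ (before decoding) is unchanged by $V_j$.

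\emph{Step 2: use fidelity with the target.} The target is $\psi_{\hat A BR}\otimes\Phi_{L}$. The fidelity criterion gives
\begin{equation}
 \sum_j p_j F(\text{output}_j,\text{target})\geq 1-\varepsilon^2 .
\end{equation}
We pull this back to Alice's side using the fact that, for each $j$, $R\,K_B B$ is purified by Alice's residual system.

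\emph{Step 3: construct $\tau$.} Define
\begin{equation}
 \tau_{AR}:=\sum_j \Tr_{K_A}\!\big[\Pi_j(\psi_{AR}\otimes\Phi_{K_A})\Pi_j\big]\cdot K,
\end{equation}
where $\Pi_j$ are the parts of Alice's measurement that are consistent with the output being locally maximally entangled of rank $L$. More concretely, we would take the Uhlmann-optimal map and define $\tau_{AR}$ as the reduced state obtained by projecting onto successful outcomes. The two structural constraints should follow from this form:
\begin{itemize}
 \item[(a)] $\tau_R\leq\psi_R$, because Alice's trace-nonincreasing operation cannot increase the $R$ marginal, since $R$ is untouched.
 \item[(b)] $\tau_{AR}\leq \tfrac{K}{L}\,I_A\otimes\psi_R$, via the standard bound: a maximally mixed $K_B$ marginal of dimension $K$ gives an operator inequality with factor $K$, and the $L$-dimensional ebit in the output contributes the factor $1/L$ through $\Phi_L\leq I/L$ on one side.
\end{itemize}
Combining (b) with $c=\log K-\log L$ gives $D_{\max}(\tau_{AR}\|I_A\otimes\psi_R)\leq c$.

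\emph{Step 4: distance.} Show $P(\tau_{AR},\psi_{AR})\leq\varepsilon$ by monotonicity of the generalized fidelity under the trace-nonincreasing partial isometry that relates $\tau$ to the protocol output. Here we use the fact that the target $\hat A R$ marginal is exactly $\psi_{AR}$.

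The main obstacle is Step 3: obtaining simultaneously the marginal constraint $\tau_R\leq\psi_R$ and the $D_{\max}$ bound, while keeping the fidelity tied to $\varepsilon$. The first thing to try is to define $\tau$ as the image of the ideal target under the adjoint of the protocol restricted to the Uhlmann-aligned subspace. This is the ABJT construction, which we would follow, citing \cite[Theorem~6]{anshu2020partially} for the technical details.
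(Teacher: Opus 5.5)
The paper does not re-derive this proposition. It imports the converse half of Theorem~6 of Anshu, Berta, Jain and Tomamichel, translated to the squared-fidelity convention. So your final appeal to that theorem is exactly the paper's route, and your top-level reduction is the right way to read it: exhibit a $\tau_{AR}$ with $P(\tau_{AR},\psi_{AR})\le\varepsilon$, $\tau_R\le\psi_R$ and $\tau_{AR}\le 2^{c}I_A\otimes\psi_R$.

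The reconstruction you give for Step~3, which you correctly call the crux, does not work as written:
\begin{itemize}
\item The displayed $\tau_{AR}$ is not well defined. $\Phi_{K_A}$ is a state on $K_AK_B$, not on $K_A$, and the ``$\Pi_j$ consistent with the output'' are never specified. The prefactor $K$ would also generically make $\Tr\tau_{AR}$ of order $K$, which is incompatible with $\tau_R\le\psi_R$.
\item Your justification of (b) uses $\Phi_L\le I/L$, which is false: $\Phi_L$ is a rank-one projector, and only its one-sided marginal equals $I/L$.
\item The factor $K$ does not come from a maximally mixed $K_B$. It comes from $\Tr_{K_A}I_{K_A}=K$, once a bound on $AK_AR$ has been established.
\item Most importantly, the pulled-back vector is a \emph{coherent} superposition over Alice's outcomes. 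You give no reason why an operator inequality survives the cross terms, and that is the actual content of the converse.
\end{itemize}
Two smaller points: one Kraus operator per outcome is obtained by refining the instrument (Bob ignores the extra label), not by coarse-graining. And Uhlmann is applied because $L_AR$ is purified by Bob's $BK_B$ in each branch, not the reverse.

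Here is a way to close Step~3.
\begin{itemize}
\item Set $\mathbf{E}:=\sum_x|x\rangle_X\otimes E_x:AK_A\to XL_A$, with the label kept by Alice and copied to $X'$ for Bob.
\item Let $V_x:BK_B\to\widehat ABL_BE_B$ be Bob's Stinespring isometries and $|t\rangle:=|\Phi_L\rangle|\psi\rangle_{\widehat ABR}$.
\item Let $|\xi\rangle$ on $XX'E_B$ be Uhlmann-optimal, so the global output vector has overlap $\sqrt f$ with $|t\rangle|\xi\rangle$, where $f\ge1-\varepsilon^2$.
\end{itemize}
Pulling back with the adjoint of the protocol isometry gives $|\theta\rangle=\mathbf{E}^\dagger|\Theta\rangle$, where $|\Theta\rangle:=\sum_x|x\rangle_X\otimes V_x^\dagger(|t\rangle|\xi_x\rangle)$ and $|\xi_x\rangle:=(\langle x|_X\langle x|_{X'}\otimes I)|\xi\rangle$. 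Set $\tau_{AR}:=\Tr_{K_ABK_B}|\theta\rangle\langle\theta|$. Then:
\begin{itemize}
\item \emph{Distance.} $F(\tau_{AR},\psi_{AR})\ge|\langle\psi\otimes\Phi_K|\theta\rangle|^2=f$.
\item \emph{Marginal.} $\tau_R\le\psi_R$, because only adjoints of isometries act off $R$.
\item \emph{Cap.} The vector $\psi_R^{-1/2}|\Theta\rangle$ has $BK_B$-marginal
\[
\sum_x V_x^\dagger\bigl(\pi_{L_B}\otimes\Pi_{\widehat AB}\otimes|\xi_x\rangle\langle\xi_x|\bigr)V_x\le L^{-1}I_{BK_B},
\]
using $\sum_x\langle\xi_x|\xi_x\rangle\le1$. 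Here $\Pi_{\widehat AB}$ projects onto $\supp(\psi_{\widehat AB})$. The cross terms vanish because the label is held by Alice; this is exactly where one-way communication enters.
\end{itemize}
Complementary marginals of a pure vector share their nonzero spectrum, and the $R$-support of $\Theta$ lies in $\supp(\psi_R)$. Hence $\Theta_{XL_AR}\le L^{-1}I_{XL_A}\otimes\psi_R$, and therefore $\Tr_{BK_B}|\theta\rangle\langle\theta|=\mathbf{E}^\dagger\Theta_{XL_AR}\mathbf{E}\le L^{-1}I_{AK_A}\otimes\psi_R$. Tracing out $K_A$ gives $\tau_{AR}\le(K/L)\,I_A\otimes\psi_R$, i.e.\ $D_{\max}(\tau_{AR}\|I_A\otimes\psi_R)\le c$.
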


Proposition~\ref{prop:partially-smoothed-converse} is the converse part of Theorem~6 of Ref.~\cite{anshu2020partially}, translated to the squared-fidelity convention in \eqref{eq:generalized-fidelity}.

\begin{proposition}[Fixed-output relative-entropy decoupling {\cite{berta2026tight}}]
\label{prop:fixed-output-decoupling}
Let $\tau_{AR}\in\mathcal{S}(AR)$ and $q<H(A|R)_\tau$.  There exist finite-dimensional registers $K_{A,n}$ and $L_{A,n}$, a finite classical register $J_n$, and a one-Kraus-per-outcome instrument
$\mathcal{A}_n:A^nK_{A,n}\to J_nL_{A,n}$, such that
\begin{align}
 \log|L_{A,n}|-\log|K_{A,n}|&\geq nq,
 \label{eq:decoupling-gain}\\
 \frac{1}{n}\left(\log|L_{A,n}|-\log|K_{A,n}|\right)&\longrightarrow q,
 \label{eq:decoupling-gain-limit}
\end{align}
and, for a constant $c=c(\tau,q)>0$ independent of $n$ and all sufficiently large $n$,
\begin{align}
 D\!\left(
 (\mathcal{A}_n\otimes\mathrm{id}_{R^n})
 (\tau_{AR}^{\otimes n}\otimes\pi_{K_{A,n}})
 \middle\|
 \pi_{J_n}\otimes\pi_{L_{A,n}}\otimes\tau_R^{\otimes n}
 \right)
 \leq2^{-cn}.
 \label{eq:fixed-output-decoupling}
\end{align}
The statement permits both isometric padding of $A^nK_{A,n}$ and an isometric embedding of the physical output into $L_{A,n}$.
\end{proposition}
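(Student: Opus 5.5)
The plan is a random-coding argument: a Haar-random unitary followed by a projective measurement into equal blocks. The expected relative entropy will be controlled by a Petz-type R\'enyi quantity with $\alpha\in(1,2]$, and existence will follow by derandomization.

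\textbf{Dimensions.} Fix $\delta>0$ with $q+2\delta<H(A|R)_\tau$. Choose integers $|K_{A,n}|$ and $|L_{A,n}|$ with
\begin{align}
 nq\le\log|L_{A,n}|-\log|K_{A,n}|\le nq+O(\log n),
\end{align}
which gives \eqref{eq:decoupling-gain} and \eqref{eq:decoupling-gain-limit}. Using the permitted isometric padding, embed $A^nK_{A,n}$ into a space of dimension $|J_n|\,|L_{A,n}|$, where $|J_n|$ is the smallest integer making this possible. Fix a decomposition of that space into $|J_n|$ orthogonal blocks, each identified with $L_{A,n}$.

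\textbf{Instrument.} For a unitary $U$ on the padded space, define the instrument
\begin{align}
 \mathcal{A}_n^U(X)=\sum_j |j\rangle\!\langle j|_{J_n}\otimes V_j\Pi_jUXU^\dagger\Pi_jV_j^\dagger,
\end{align}
where $\Pi_j$ projects onto block $j$ and $V_j$ is the identification of that block with $L_{A,n}$. Each outcome has the single Kraus operator $V_j\Pi_jU$, so the one-Kraus-per-outcome requirement holds automatically.

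\textbf{Averaged bound.} I will show that $\mathbb{E}_U D(\cdot\|\cdot)\le 2^{-cn}$ for the relative entropy in \eqref{eq:fixed-output-decoupling}; some $U$ then achieves the bound. The steps are as follows.
\begin{itemize}
 \item First use $D\le D_\alpha$ for the Petz divergence, with $\alpha\in(1,2]$ to be chosen.
 \item Then use $\frac{1}{\alpha-1}\log x\le\frac{\log e}{\alpha-1}(x-1)$, which reduces the problem to bounding
 \begin{align}
  \mathbb{E}_U\Tr\big[\omega^\alpha(\pi_J\otimes\pi_L\otimes\tau_R^{\otimes n})^{1-\alpha}\big]-1,
 \end{align}
 where $\omega$ is the output state.
 \item Conditioned on each block, write $\omega^\alpha=\omega\cdot\omega^{\alpha-1}$. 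Apply operator concavity of $t\mapsto t^{\alpha-1}$ (valid since $\alpha-1\in(0,1]$) together with Jensen's inequality, in the style of the Hayashi-type smoothing bounds, to move the Haar expectation inside the power.
 \item Use the second-moment (Weingarten) formula for the Haar-random block projection. The "diagonal" contribution reproduces the target $\pi_L\otimes\tau_R^{\otimes n}$. The "off-diagonal" contribution is bounded by
 \begin{align}
  |L_{A,n}|^{\alpha-1}\,|K_{A,n}|^{-(\alpha-1)}\,2^{-n(\alpha-1)H_\alpha(A|R)_\tau},
 \end{align}
 where $H_\alpha$ is the Petz conditional R\'enyi entropy with $\tau_R$ as the conditioning marginal, and where $H_\alpha(AK|R)=H_\alpha(A|R)+\log|K|$ because $\pi_K$ is maximally mixed and the entropy is additive over tensor powers.
\end{itemize}

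\textbf{Choice of $\alpha$.} This yields
\begin{align}
 \mathbb{E}D\le\frac{\log e}{\alpha-1}\,2^{(\alpha-1)\left(nq+O(\log n)-nH_\alpha(A|R)_\tau\right)}
 +\text{(padding error)}.
\end{align}
Since $H_\alpha\to H$ as $\alpha\searrow1$, pick $\alpha>1$ with $H_\alpha(A|R)_\tau\ge q+\delta$. This gives the exponent $c\approx(\alpha-1)\delta$ for large $n$.

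\textbf{Main obstacle.} The step I expect to be hardest is the Jensen/second-moment step. It must produce exactly the fixed product output $\pi_J\otimes\pi_L\otimes\tau_R^{\otimes n}$ rather than an optimized $\sigma_R$. It must also handle the mismatch caused by padding: the padded dimension $|J_n||L_{A,n}|$ may exceed $|A|^n|K_{A,n}|$, so the $J$-distribution is not exactly uniform. I would first try choosing $|K_{A,n}|$ so that $|A|^n|K_{A,n}|$ is divisible by $|L_{A,n}|$, which avoids padding altogether. Otherwise, the small excess in $|J_n|$ contributes only an additive term of order $|L|/(|A|^n|K|)$, which is exponentially small.
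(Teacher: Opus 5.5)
The paper does not prove this proposition; it imports it from the cited decoupling work (Lemma~9 and the construction in Theorem~10 there).

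\textbf{What is fine.} Your outer structure is sound. A Haar unitary on the padded space followed by equal-rank block projections has one Kraus operator $V_j\Pi_jU$ per outcome. The rate conditions are bookkeeping, and $D\geq0$ makes derandomization immediate. Your padding worry is a non-issue: the Haar average over the padded space is exactly $\pi_{J_n}\otimes\pi_{L_{A,n}}\otimes\tau_R^{\otimes n}$, and an isometric embedding does not change conditional R\'enyi entropies.

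\textbf{The gap.} It is the averaged bound for $\alpha\in(1,2)$, which is precisely what the paper takes from the cited work, and the obstruction is not the one you flag. As written, your Jensen step cannot be carried out, for two reasons.
\begin{enumerate}
\item The Petz term is $\Tr[\omega^j(\omega^j)^{\alpha-1}(I_L\otimes\tau_R^{\otimes n})^{1-\alpha}]$, and $\omega^j_{LR}$ does not commute with $I_L\otimes\tau_R^{\otimes n}$. Hence $Y\mapsto\Tr[\omega^jY(I_L\otimes\tau_R^{\otimes n})^{1-\alpha}]$ is not order preserving (not even real-valued), so an operator-Jensen bound on $(\omega^j)^{\alpha-1}$ cannot be inserted.
\item You never say what you condition on. Conditioning on the block is vacuous, since the output is already block diagonal. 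Jensen needs a split of $\omega^j$ into a part measurable with respect to some of the randomness and a remainder with conditional mean $\lesssim\zeta^j$. In Hayashi's hashing bounds this split is $\omega_z=\sum_x 1\{f(x)=z\}\rho_E^x$. For a Haar unitary on a quantum register, splitting along eigenvectors of the input fails: conditioning on the image of an entangled eigenvector fixes $V_j\Pi_jU$ on its whole Schmidt support in $A^n$, so the remainder keeps a large deterministic part. The natural alternative is to split along the maximally mixed register $K_{A,n}$. This leaves a self term that is a random compression of all of $\tau_{AR}^{\otimes n}$. Its order-$\alpha$ trace must still be controlled, and separating it from the cross term requires a pinching applied to that term alone.
\end{enumerate}

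\textbf{Why your other tools do not close it.} The Weingarten formula gives an exact average only in the quadratic case $\alpha=2$. That case alone yields decay only for $q<H_{2,1}^{\downarrow}(A|R)_\tau$, which is in general strictly below $H(A|R)_\tau$. A pinching with respect to $\tau_R^{\otimes n}$ applied to $D$ or $D_\alpha$ as a whole costs an additive $\Theta(\log n)$, which is incompatible with a $2^{-cn}$ bound.

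\textbf{An elementary repair.} You can avoid $\alpha\in(1,2)$ altogether.
\begin{enumerate}
\item Apply the one-shot trace-distance decoupling theorem to your block map. Its Choi state has $H_{\min}(A'|J_nL_{A,n})=-\log|L_{A,n}|$, and the target carries the true marginal $\tau_R^{\otimes n}$.
\item Use smoothing parameter $\varepsilon=2^{-c'n}$ together with $H_{\min}^{\varepsilon}(A^n|R^n)_{\tau^{\otimes n}}\geq nH_{\alpha,\alpha}^{\uparrow}(A|R)_\tau-\frac{\log(2/\varepsilon^2)}{\alpha-1}$, choosing $\alpha$ close to $1$ and $c'$ small. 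This makes the trace distance $T$ between output and target exponentially small.
\item Convert to relative entropy via $D(\omega\|\zeta)=H(\zeta)-H(\omega)+\Tr[(\zeta-\omega)\log\zeta]\leq O(n)\,T+h(T)$, where $h$ is the binary entropy. This uses Fannes--Audenaert continuity and the fact that $\log\zeta$ has operator norm $O(n)$ on its support, which holds once you choose $|K_{A,n}|=2^{O(n)}$. The result is again at most $2^{-cn}$ for large $n$.
\end{enumerate}
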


This is the equal-rank partial-isometry specialization of Lemma~9 and the construction in the proof of Theorem~10 of Ref.~\cite{berta2026tight}.  We use Proposition~\ref{prop:fixed-output-decoupling} as an existing decoupling theorem and do not reproduce its proof.

\begin{proposition}[Log-Euclidean conditional variational formula {\cite{rubboli2026strong}}]
\label{prop:LE-conditional-variational}
For every $\rho_{AR}\in\mathcal{S}(AR)$, $0<\alpha<1$, and $\lambda\leq0$,
\begin{align}
 (\alpha-1)H_{\alpha,\infty}^{\lambda}(A|R)_\rho
 &=\min_{\tau_{AR}\in\mathcal{S}(AR)}\Big\{
 (1-\lambda)(1-\alpha)D(\tau_R\|\rho_R)
+\alpha D(\tau_{AR}\|\rho_{AR})
 +(\alpha-1)H(A|R)_\tau\Big\}.
 \label{eq:LE-conditional-variational}
\end{align}
\end{proposition}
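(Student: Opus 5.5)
The plan is to reduce the formula to the Gibbs variational representation \eqref{eq:LE-Gibbs}, and then remove the auxiliary state $\sigma_R$ by a minimax exchange.

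\textbf{Step 1: rewrite the objective as a Log-Euclidean divergence.} Fix $\sigma_R$ with $\supp(\rho_{AR})\subseteq\supp(I_A\otimes\sigma_R)$. Define the positive operator
\begin{align}
 \omega_R:=\exp\big((1-\lambda)\log\rho_R+\lambda\log\sigma_R\big).
\end{align}
It is defined on the relevant support and may be unnormalized. By \eqref{eq:LE-divergence}, the objective in \eqref{eq:LE-conditional} equals $-D_{\alpha,\infty}(\rho_{AR}\|I_A\otimes\omega_R)$. Since $\alpha-1<0$, this gives
\begin{align}
 (\alpha-1)H_{\alpha,\infty}^{\lambda}(A|R)_\rho
 =\sup_{\sigma_R}(1-\alpha)D_{\alpha,\infty}(\rho_{AR}\|I_A\otimes\omega_R).
\end{align}
Applying \eqref{eq:LE-Gibbs}, which remains valid for unnormalized second argument, yields
\begin{align}
 (\alpha-1)H_{\alpha,\infty}^{\lambda}(A|R)_\rho
 =\sup_{\sigma_R}\min_{\tau_{AR}}\Big\{\alpha D(\tau_{AR}\|\rho_{AR})+(1-\alpha)D(\tau_{AR}\|I_A\otimes\omega_R)\Big\}.
\end{align}

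\textbf{Step 2: decompose the second term.} Expanding the logarithm of $\omega_R$ gives
\begin{align}
 D(\tau_{AR}\|I_A\otimes\omega_R)=-H(AR)_\tau-(1-\lambda)\Tr[\tau_R\log\rho_R]-\lambda\Tr[\tau_R\log\sigma_R].
\end{align}
Write $-H(AR)_\tau=-H(A|R)_\tau-H(R)_\tau$, and split $-H(R)_\tau=\Tr[\tau_R\log\tau_R]$ with weights $(1-\lambda)$ and $\lambda$. This produces the decomposition
\begin{align}
 D(\tau_{AR}\|I_A\otimes\omega_R)=-H(A|R)_\tau+(1-\lambda)D(\tau_R\|\rho_R)+\lambda D(\tau_R\|\sigma_R).
\end{align}
The objective is therefore the target functional of \eqref{eq:LE-conditional-variational} plus the term $(1-\alpha)\lambda D(\tau_R\|\sigma_R)$. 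This extra term is $\leq0$ because $\lambda\leq0$, and it vanishes exactly when $\sigma_R=\tau_R$.

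\textbf{Step 3: exchange the order of optimization.} I would apply Sion's minimax theorem.

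\emph{Convexity in $\tau$.} Use the undecomposed form of the objective. The term $\alpha D(\tau\|\rho)$ is convex, $-H(AR)_\tau$ is convex, and $-\Tr[\tau_R\log\omega_R]$ is linear in $\tau$. Note that the split form would hide this: the piece $\lambda D(\tau_R\|\sigma_R)$ taken alone is concave in $\tau$.

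\emph{Concavity in $\sigma$.} The only $\sigma$-dependence is $-(1-\alpha)\lambda\Tr[\tau_R\log\sigma_R]$. Since $-\lambda\geq0$ and $\log$ is operator concave, this is concave in $\sigma_R$.

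\emph{Compactness.} The state space of $\tau$ is compact, so Sion applies.

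After swapping to $\min_\tau\sup_\sigma$, the inner supremum of $\lambda D(\tau_R\|\sigma_R)$ equals $0$, attained at $\sigma_R=\tau_R$. This gives exactly \eqref{eq:LE-conditional-variational}.

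\textbf{Main obstacle: supports and semicontinuity.} The feasible set of $\sigma_R$ is constrained by $\supp(\rho_{AR})\subseteq\supp(I_A\otimes\sigma_R)$, and $\log\sigma_R$ is unbounded near the boundary of that set. Likewise, the $\tau$ in \eqref{eq:LE-Gibbs} is restricted to $\supp(\tau)\subseteq\supp(\rho_{AR})$.

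First, I would restrict everything to $\supp(\rho_R)$. This is without loss of generality, since $D(\tau_R\|\rho_R)<\infty$ forces $\supp(\tau_R)\subseteq\supp(\rho_R)$.

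Next, I would run the minimax argument over $\sigma_R\geq\delta\,\Pi_{\rho_R}$, normalized. On this set the objective is continuous in both variables, so Sion applies without difficulty.

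Finally, I would let $\delta\to0$. On the $\sup_\sigma\min_\tau$ side, monotone convergence of the feasible sets handles the limit. On the $\min_\tau\sup_\sigma$ side, the restricted supremum $\sup_{\sigma\geq\delta}\lambda D(\tau_R\|\sigma_R)$ increases to $0$. Lower semicontinuity in $\tau$ together with compactness then shows that the minima converge.
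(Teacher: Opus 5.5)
Your proof is correct. The paper gives no proof of its own and imports the result as Lemma~7 of Ref.~\cite{rubboli2026strong}. Your route is the standard Gibbs-plus-minimax derivation, the same mechanism the paper uses in Lemma~\ref{lem:LE-exponent-variational}: the representation \eqref{eq:LE-Gibbs} with $\omega_R=\exp((1-\lambda)\log\rho_R+\lambda\log\sigma_R)$, isolating the auxiliary state in the nonpositive term $(1-\alpha)\lambda D(\tau_R\|\sigma_R)$, Sion's theorem on the $\delta$-regularized set, and the monotone-limit step.

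The one loose end is the restriction to $\supp(\rho_R)$: your justification covers $\tau$ but not $\sigma_R$. For $\sigma_R$, note that it enters only through $\Pi_{\rho_R}(\log\sigma_R)\Pi_{\rho_R}\leq\log(\Pi_{\rho_R}\sigma_R\Pi_{\rho_R})$ (operator Jensen). Since $\lambda\leq0$ and $\Tr\exp$ is monotone, replacing $\sigma_R$ by its normalized compression onto $\supp(\rho_R)$ can only increase the objective you are maximizing.
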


Proposition~\ref{prop:LE-conditional-variational} is Lemma~7 of Ref.~\cite{rubboli2026strong}; its statement is valid for every bipartite quantum state.

\begin{proposition}[Club-sandwiched duality {\cite{rubboli2024quantum,rubboli2026strong}}]
\label{prop:club-duality}
Let $\psi_{ABR}$ be pure.  For $\alpha\in[1/2,1]$, with endpoint values defined by continuity,
\begin{align}
 \widetilde H_\alpha^{\frac{1-2\alpha}{1-\alpha}}(A|R)_\psi
 =-H_{\frac{2\alpha}{3\alpha-1},\frac{\alpha}{3\alpha-1}}^{\uparrow}(A|B)_\psi.
 \label{eq:club-duality}
\end{align}
\end{proposition}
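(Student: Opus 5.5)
The plan is to reduce both sides to the same optimization over a state on one party, using the purity of $\psi_{ABR}$. Set $\beta:=\frac{1-2\alpha}{\alpha}\le 0$, $\alpha':=\frac{2\alpha}{3\alpha-1}\in[1,2]$ and $z':=\frac{\alpha}{3\alpha-1}=\alpha'/2$. First I would handle the interior $\alpha\in(1/2,1)$ and obtain the endpoints from the continuity conventions. At $\alpha\to1$ both sides tend to $H(A|R)_\psi=-H(A|B)_\psi$ by \eqref{eq:club-limit}. At $\alpha=1/2$ we have $\beta=0$, $\alpha'=2$, $z'=1$, so both sides are explicit Petz/fidelity-type expressions that can be compared directly.

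\emph{Step 1 (spectral transfer).} Write the Schmidt decomposition $|\psi\rangle=\sum_i\sqrt{p_i}\,|i\rangle_{AB}|i\rangle_R$. Then $\rho_R^{1/2}|\psi\rangle=\psi_{AB}^{1/2}|\psi\rangle$, and $X_R|\psi\rangle=(X^{T})_{AB}|\psi\rangle$ on supports, with the transpose taken in the Schmidt basis. The operator inside \eqref{eq:club-boundary} has the same nonzero spectrum as
\[
\sigma_R^{\beta/2}\rho_R^{1/2}\psi_{AR}\rho_R^{1/2}\sigma_R^{\beta/2}
=\Tr_B|\phi\rangle\!\langle\phi|,\qquad |\phi\rangle:=\sigma_R^{\beta/2}\rho_R^{1/2}|\psi\rangle .
\]
Its spectrum therefore equals that of the $B$-marginal of $|\phi\rangle\!\langle\phi|$. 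Moving $\sigma_R^{\beta/2}\rho_R^{1/2}$ through $|\psi\rangle$ onto $AB$ as $\psi_{AB}^{1/2}(\sigma^{T})^{\beta/2}_{AB}$ should rewrite the objective as
\[
\Tr\Big[\big(\Tr_A\big[\psi_{AB}^{1/2}\,\omega_{AB}^{\beta}\,\psi_{AB}^{1/2}\big]\big)^{\alpha}\Big],
\]
where $\omega_{AB}$ is the isometric image of $\sigma_R^{T}$ on $\supp\psi_{AB}$. The outer power is now taken on $B$ alone.

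\emph{Step 2 (dualize the power on $B$).} I would use the variational (Hölder or reverse-Hölder) characterization of $\Tr[Y_B^{\alpha}]$ for $\alpha\in(1/2,1)$: it is an infimum over states $\sigma_B$ of $\Tr[Y_B\,\sigma_B^{(\alpha-1)/\alpha}]^{\alpha}$, after using homogeneity. This produces a joint optimization over $\omega_{AB}$ (from $\sigma_R$) and $\sigma_B$ of a bilinear-type expression
\[
\Tr\big[\psi_{AB}^{1/2}\,\omega_{AB}^{\beta}\,\psi_{AB}^{1/2}\,(I_A\otimes\sigma_B^{(\alpha-1)/\alpha})\big].
\]
Next I would carry out the optimization over $\omega_{AB}$, which enters only through the negative power $\beta$. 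A second Hölder step gives $\min_\omega \Tr[\omega^{\beta}M]\propto (\Tr M^{1/(1-\beta)})^{1-\beta}$ with $M=\psi_{AB}^{1/2}\sigma_B^{(\alpha-1)/\alpha}\psi_{AB}^{1/2}$, together with an explicit optimizer $\omega\propto M^{1/(1-\beta)}$. The result should be of the form $\Tr\big[(\psi_{AB}^{1/2}\sigma_B^{s}\psi_{AB}^{1/2})^{t}\big]$. I would then check the exponent bookkeeping: $1/(1-\beta)=\alpha/(3\alpha-1)=z'$ and $s=(\alpha-1)/\alpha=(1-\alpha')/z'$. Up to the outer power and sign, this is exactly the $\alpha'$-$z'$ trace functional in \eqref{eq:alpha-z-divergence} with $\rho=\psi_{AB}$ and $\sigma=I_A\otimes\sigma_B$. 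The prefactor $\frac1{1-\alpha}$ then becomes $-\frac1{\alpha'-1}$, and the optimization over $\sigma_B$ becomes the supremum defining $H^{\uparrow}_{\alpha',z'}(A|B)_\psi$, giving \eqref{eq:club-duality}.

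\emph{Main obstacle.} The delicate point is Step 2, for two reasons. The min over $\sigma_R$ and the inf or sup arising from Hölder on $B$ must be combined with the correct ordering, which I would justify by Sion's minimax theorem using concavity/convexity of the relevant power maps in this parameter range. Also, the directions of the Hölder inequalities flip because $\beta\le0$ and $\alpha<1$. I would first check the commuting (classical) case to fix all signs and exponents, then verify that the quantum Hölder optimizers are attained on the supports.
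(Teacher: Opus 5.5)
Your plan (move the $R$-side operator through the purification, then remove the two matrix powers with Hölder-type variational formulas) is sound. Once repaired, it gives a self-contained derivation of an identity that the paper does not prove but imports from Lemma~16 of Ref.~\cite{rubboli2026strong}. As written, however, Step~1 rests on a false identity, and your final identification contains a second error that hides it.

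\textbf{The error in Step~1.} The rule $X_R|\psi\rangle=(X^T)_{AB}|\psi\rangle$ holds for the unnormalized vector $|\Gamma\rangle:=\sum_i|e_i\rangle_{AB}|f_i\rangle_R$ built from the Schmidt bases. It does not hold for $|\psi\rangle=\psi_R^{1/2}|\Gamma\rangle$ unless $X$ commutes with $\psi_R$. Done correctly, $|\phi\rangle=\sigma_R^{\beta/2}\psi_R|\Gamma\rangle=\psi_{AB}\,\omega_{AB}^{\beta/2}|\Gamma\rangle$. The operator on $B$ is therefore $\Tr_A[\psi_{AB}\,\omega_{AB}^{\beta}\,\psi_{AB}]$, not $\Tr_A[\psi_{AB}^{1/2}\omega_{AB}^{\beta}\psi_{AB}^{1/2}]$. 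A trace check confirms this: $\langle\phi|\phi\rangle=\Tr[\psi_R^2\sigma_R^\beta]$, whereas your operator has trace $\Tr[\psi_R\sigma_R^\beta]$.

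\textbf{The compensating error in the last step.} You identify $\Tr[(\psi_{AB}^{1/2}\sigma_B^{s}\psi_{AB}^{1/2})^{z'}]$ with the $\alpha'$-$z'$ functional. But for $z'=\alpha'/2$ the sandwiching exponent in \eqref{eq:alpha-z-divergence} is $\alpha'/(2z')=1$, so the target is $\Tr[(\psi_{AB}\sigma_B^{s}\psi_{AB})^{z'}]$. Your bookkeeping checked only the $\sigma$-exponent and the outer power. The two slips cancel on paper, but neither claim is true.

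\textbf{The repair.} Carry the factor $\psi_{AB}$ (not $\psi_{AB}^{1/2}$) through; your Step~2 then works verbatim and gives
\[
\min_{\sigma_R}\Tr\Big[\big(\psi_{AR}^{1/2}\psi_R^{1/2}\sigma_R^{\beta}\psi_R^{1/2}\psi_{AR}^{1/2}\big)^{\alpha}\Big]
=\min_{\sigma_B}\Big(\Tr\big[\big(\psi_{AB}(I_A\otimes\sigma_B^{\frac{\alpha-1}{\alpha}})\psi_{AB}\big)^{z'}\big]\Big)^{3\alpha-1}.
\]
Since $\frac{3\alpha-1}{1-\alpha}=\frac{1}{\alpha'-1}$, this is exactly $\inf_{\sigma_B}D_{\alpha',z'}(\psi_{AB}\|I_A\otimes\sigma_B)$.

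\textbf{No minimax theorem is needed.} For $a\in(0,1)$, the formula $(\Tr Y^a)^{1/a}=\min_\sigma\Tr[Y\sigma^{1-1/a}]$ is a minimum, by Hölder. You use it with $a=\alpha$ on $B$ and with $a=1/(1-\beta)=z'$ for $\omega$. The outer optimization over $\sigma_R$ is also a minimum, so all three commute trivially.

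\textbf{Remaining details.}
\begin{itemize}
\item The bijection $\sigma_R\leftrightarrow\omega_{AB}$ requires $\sigma_R$ to be supported on $\supp(\psi_R)$, while \eqref{eq:club-boundary} minimizes over all $\sigma_R$. Restricting loses nothing: use the operator Jensen inequality $P\sigma_R^\beta P\geq(P\sigma_RP)^\beta$ for the operator convex map $x\mapsto x^\beta$, $\beta\in[-1,0]$, together with $\Tr[P\sigma_RP]\leq1$.
\item Equality on $(1/2,1)$ then gives both endpoints, since the statement defines them by one-sided limits.
\end{itemize}
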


This is the duality relation used in Lemma~16 of Ref.~\cite{rubboli2026strong}; it is a specialization of the general duality theory in Ref.~\cite{rubboli2024quantum}.

For a positive semidefinite operator $X=\sum_i x_iP_i$ with distinct eigenvalues $x_i$, let
\begin{align}
 \P_X(Y):=\sum_iP_iYP_i
 \label{eq:pinching-map}
\end{align}
be the pinching with respect to $X$.  We write $|\operatorname{spec}(X)|$ for the number of distinct eigenvalues of $X$.

\begin{proposition}[Universal state and symmetric optimizer {\cite{hayashitomamichel15c,rubboli2026strong}}]
\label{prop:universal-state}
Let $R$ have dimension $d_R$.  For every $m\in\mathbb{N}$, there exists a permutation-invariant universal state $\omega_{R^m}$ such that every permutation-invariant state $\sigma_{R^m}$ satisfies
\begin{align}
 \sigma_{R^m}\leq(m+1)^{d_R^2-1}\omega_{R^m}.
 \label{eq:universal-domination}
\end{align}
Moreover, $\omega_{R^m}$ commutes with every permutation-invariant state, and
\begin{align}
 \max\left\{
 |\operatorname{spec}(\rho_R^{\otimes m})|,
 |\operatorname{spec}(\omega_{R^m})|
 \right\}\leq(m+1)^{d_R-1}.
 \label{eq:spectral-cardinality}
\end{align}
If a bipartite state on $A^mR^m$ is invariant under simultaneous permutations of the $m$ copies, then for every $0<\alpha<1$ and $\lambda\leq0$, an optimizer in \eqref{eq:LE-conditional} can be chosen permutation-invariant on $R^m$. 
\end{proposition}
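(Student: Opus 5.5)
Three parts need proof: the universal state with domination \eqref{eq:universal-domination} and commutation, the spectral counts \eqref{eq:spectral-cardinality}, and the existence of a permutation-invariant optimizer in \eqref{eq:LE-conditional}.

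\textbf{Construction of $\omega_{R^m}$.} We use Schur--Weyl duality. Decompose
\[
R^{\otimes m}\cong\bigoplus_{\lambda}\mathcal{U}_\lambda\otimes\mathcal{V}_\lambda ,
\]
where $\lambda$ runs over Young diagrams with $m$ boxes and at most $d_R$ rows, $\mathcal{U}_\lambda$ carries the $U(d_R)$ irrep and $\mathcal{V}_\lambda$ the $S_m$ irrep. We take
\[
\omega_{R^m}:=\sum_\lambda \frac{1}{N\,\dim\mathcal{U}_\lambda}\, I_{\mathcal{U}_\lambda}\otimes \frac{I_{\mathcal{V}_\lambda}}{\dim\mathcal{V}_\lambda}\cdot(\dim\mathcal{V}_\lambda),
\]
normalized so that each block $\lambda$ receives weight $1/N$, where $N\le(m+1)^{d_R-1}$ is the number of diagrams.

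\textbf{Domination and commutation.} A permutation-invariant $\sigma$ commutes with the $S_m$ action. By Schur's lemma it therefore has the form $\bigoplus_\lambda \sigma_\lambda\otimes I_{\mathcal{V}_\lambda}$. Since $\Tr\sigma=1$, each block satisfies $\sigma_\lambda\otimes I\le I_{\mathcal{U}_\lambda}\otimes I$. Hence
\[
\sigma\le N\max_\lambda\dim\mathcal{U}_\lambda\,\omega .
\]
The standard bounds $N\le(m+1)^{d_R-1}$ and $\dim\mathcal{U}_\lambda\le(m+1)^{d_R(d_R-1)/2}$ give a total exponent at most $d_R^2-1$, which is \eqref{eq:universal-domination}. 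Commutation follows because $\omega$ is a multiple of the identity on each block while $\sigma$ is block diagonal.

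\textbf{Spectral counts.} $\omega_{R^m}$ has at most one eigenvalue per $\lambda$, so at most $N$ distinct eigenvalues. The eigenvalues of $\rho_R^{\otimes m}$ are $\prod_i r_i^{k_i}$ with $\sum_i k_i=m$. There are $\binom{m+d_R-1}{d_R-1}\le(m+1)^{d_R-1}$ such types, which gives \eqref{eq:spectral-cardinality}.

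\textbf{Symmetric optimizer.} Let $W_\pi$ act as the permutation on $A^mR^m$ and $V_\pi$ on $R^m$. Write $f(\sigma)$ for the trace functional inside \eqref{eq:LE-conditional}. Since $\rho$ is invariant, $\rho_{AR}$, $\rho_R$ and $Q_\rho$ are all invariant under conjugation. Unitary covariance of $\log$ and $\exp$ then gives $f(V_\pi\sigma V_\pi^\dagger)=f(\sigma)$.

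Next, $\sigma\mapsto f(\sigma)$ is convex for $\lambda\le0$ and $\alpha<1$. To see this, note that $\lambda(1-\alpha)\log\sigma$ is operator convex in $\sigma$ because $-\log$ is operator convex. Moreover, $H\mapsto\Tr\exp(L+H)$ is convex and monotone, and compressing by $Q_\rho$ preserves the operator inequality. Together these show that $f(\sigma)$ is convex.

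The prefactor $1/(1-\alpha)>0$ and $\log$ is monotone, so minimizing $f$ is equivalent to minimizing the objective. Take any optimizer $\sigma^\star$ and form the twirl $\bar\sigma=\frac{1}{m!}\sum_\pi V_\pi\sigma^\star V_\pi^\dagger$. It is permutation-invariant, satisfies the support condition because it contains $\sigma^\star$ with positive weight, and by convexity obeys $f(\bar\sigma)\le f(\sigma^\star)$. Hence $\bar\sigma$ is also an optimizer.

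\textbf{Main obstacle.} The delicate step is the convexity argument when $\sigma$ lacks full support, since the generalized logarithm is then not operator convex. I would handle this by restricting to $\sigma\ge\delta I$, mixing $\sigma$ with $I/d$ and letting $\delta\to0$, and invoking lower semicontinuity of $f$ together with the $+\infty$ support convention.
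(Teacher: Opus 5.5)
\textbf{There is a genuine error in the domination step.} The formula you give simplifies to $\omega_{R^m}=\sum_\lambda (N\dim\mathcal U_\lambda)^{-1} I_{\mathcal U_\lambda}\otimes I_{\mathcal V_\lambda}$, because the two factors of $\dim\mathcal V_\lambda$ cancel. Block $\lambda$ of this operator therefore carries trace $\dim\mathcal V_\lambda/N$, not $1/N$. Its total trace is $N^{-1}\sum_\lambda\dim\mathcal V_\lambda$, which grows exponentially in $m$; for $d_R=2$ it equals $\binom{m}{\lfloor m/2\rfloor}/(\lfloor m/2\rfloor+1)$. So this $\omega_{R^m}$ is not a state, and normalizing it multiplies your domination constant by an exponential factor.

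Suppose instead you use the normalized state your text describes, $\omega_{R^m}=\frac1N\sum_\lambda \frac{I_{\mathcal U_\lambda}}{\dim\mathcal U_\lambda}\otimes\frac{I_{\mathcal V_\lambda}}{\dim\mathcal V_\lambda}$. Then your block estimate $\sigma_\lambda\otimes I\le I$ only yields $\sigma\le N\max_\lambda(\dim\mathcal U_\lambda\dim\mathcal V_\lambda)\,\omega_{R^m}$, which is again exponential. In either reading, the polynomial constant in \eqref{eq:universal-domination} is not obtained.

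\textbf{The missing idea} is to use the trace constraint together with the multiplicity of the permutation irreps. Since $1=\Tr\sigma=\sum_\lambda\dim\mathcal V_\lambda\,\Tr\sigma_\lambda$ and $\sigma_\lambda\geq0$, you get $\sigma_\lambda\le\Tr(\sigma_\lambda)\,I_{\mathcal U_\lambda}\le(\dim\mathcal V_\lambda)^{-1}I_{\mathcal U_\lambda}$. With the normalized $\omega_{R^m}$, this gives $\sigma_\lambda\otimes I_{\mathcal V_\lambda}\le N\dim\mathcal U_\lambda$ times the $\lambda$-block of $\omega_{R^m}$. Hence $\sigma\le N\max_\lambda\dim\mathcal U_\lambda\,\omega_{R^m}\le(m+1)^{(d_R-1)(d_R+2)/2}\,\omega_{R^m}$, and $(d_R-1)(d_R+2)/2\le d_R^2-1$. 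This is precisely how the exponentially large $\dim\mathcal V_\lambda$ drops out, so it must appear explicitly in the argument.

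\textbf{The other parts are fine.} The paper gives no proof of its own; it imports these facts from Lemmas~26 and~20 of Ref.~\cite{rubboli2026strong}. Your spectral counts, the commutation claim, and the twirling-plus-convexity argument for the symmetric optimizer are all correct. Your regularization for the support issue also works, and is easy to close. The state $\sigma_\delta=(1-\delta)\sigma+\delta I/d$ commutes with $\sigma$, and the support condition gives $Q_\rho\le I\otimes\Pi_\sigma$. So after compression by $Q_\rho$, the $\log(\delta/d)$ contribution on $\ker\sigma$ vanishes and $f(\sigma_\delta)\to f(\sigma)$. Since twirling commutes with this mixing, $f(\bar\sigma)=\lim_\delta f(\overline{\sigma_\delta})\le\lim_\delta f(\sigma_\delta)=f(\sigma)$. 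This needs no separate lower-semicontinuity argument.
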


The universal-state properties are collected in Lemma~26 of Ref.~\cite{rubboli2026strong}, and the optimizer symmetry is Lemma~20 in the same reference.

\begin{proposition}[Fidelity loss under pinching {\cite{tomamichel16_book,rubboli2026strong}}]
\label{prop:fidelity-pinching}
Let $\P(Y)=\sum_{i=1}^vP_iYP_i$ be a pinching with $v$ spectral projections.  If $\eta\in\mathcal{S}$, $\sigma\in\mathcal{S}_{\leq}$, and $\P(\sigma)=\sigma$, then
\begin{align}
 -\log F(\eta,\sigma)
 \leq-\log F(\P(\eta),\sigma)+\log v.
 \label{eq:fidelity-pinching}
\end{align}
\end{proposition}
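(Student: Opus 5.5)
We prove Proposition~\ref{prop:fidelity-pinching} through the pinching inequality $\eta\leq v\,\P(\eta)$ combined with a lifting argument for the fidelity. Since $\P(\sigma)=\sigma$, the operator $\sigma$ is block diagonal, $\sigma=\sum_iP_i\sigma P_i$. Hence $\sqrt\sigma$ commutes with every $P_i$, and $\sqrt\sigma=\sum_iP_i\sqrt\sigma P_i$.

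Rather than the operator inequality, we use a purification argument. Write $F(\eta,\sigma)=(\norm{\sqrt\eta\sqrt\sigma}_1+\sqrt{(1-\Tr\sigma)\cdot0})^2=\norm{\sqrt\eta\sqrt\sigma}_1^2$, since $\Tr\eta=1$ kills the correction term; likewise $\Tr\P(\eta)=1$. So we need
\begin{align}
 \norm{\sqrt{\P(\eta)}\sqrt\sigma}_1^2\leq v\,\norm{\sqrt\eta\sqrt\sigma}_1^2 .
\end{align}
Use the variational form $\norm{\sqrt X\sqrt\sigma}_1=\max_U|\Tr[U\sqrt X\sqrt\sigma]|$, or equivalently Uhlmann's theorem. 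We bound the left side via the mixture representation $\P(\eta)=\frac1v\sum_{k=1}^v U_k\eta U_k^\dagger$, with $U_k=\sum_j e^{2\pi i jk/v}P_j$. Each $U_k$ commutes with $\sigma$, so
\begin{align}
 F(U_k\eta U_k^\dagger,\sigma)=F(\eta,U_k^\dagger\sigma U_k)=F(\eta,\sigma).
\end{align}

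The key step is a fidelity inequality for mixtures, which is where the factor $v$ enters. Joint concavity of $\sqrt F$ goes the wrong direction, so we argue directly instead. We have $\sqrt{F(\P(\eta),\sigma)}=\norm{\sqrt\sigma\,\P(\eta)\sqrt\sigma}_{1/2}^{1/2}$. Since $\sqrt\sigma$ commutes with the $P_i$, we get $\sqrt\sigma\P(\eta)\sqrt\sigma=\P(\sqrt\sigma\eta\sqrt\sigma)$, and the problem reduces to showing
\begin{align}
 \Tr\sqrt{\P(Y)}\leq\sqrt v\,\Tr\sqrt Y
\end{align}
for $Y=\sqrt\sigma\eta\sqrt\sigma\geq0$. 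Write $\P(Y)=\bigoplus_iP_iYP_i$, so that $\Tr\sqrt{\P(Y)}=\sum_i\Tr\sqrt{P_iYP_i}$. With $Y=G^\dagger G$, we have $\Tr\sqrt{P_iYP_i}=\norm{GP_i}_1$. Therefore
\begin{align}
 \sum_i\norm{GP_i}_1\leq\sqrt v\,\Big(\sum_i\norm{GP_i}_1^2\Big)^{1/2}
\end{align}
by Cauchy--Schwarz.

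The remaining task, which I expect to be the main obstacle, is to show $\sum_i\norm{GP_i}_1^2\leq\norm{G}_1^2$. I would prove it as follows. Take the polar decomposition $GP_i=W_i|GP_i|$ and set $c_i=\norm{GP_i}_1$. Then
\begin{align}
 \sum_ic_i^2=\Tr\Big[\Big(\sum_ic_iP_iW_i^\dagger\Big)G\Big]\leq\norm{M}_\infty\norm{G}_1,
 \qquad M=\sum_ic_iP_iW_i^\dagger .
\end{align}
The rows of $M$ lie in orthogonal ranges $P_i$, so $MM^\dagger=\bigoplus_ic_i^2P_iW_i^\dagger W_iP_i$ and hence $\norm{M}_\infty\leq\max_ic_i\leq(\sum_ic_i^2)^{1/2}$. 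Dividing gives $(\sum_ic_i^2)^{1/2}\leq\norm G_1$.

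Combining the pieces, $\sqrt{F(\P(\eta),\sigma)}\leq\sqrt v\,\sqrt{F(\eta,\sigma)}$. Squaring yields $F(\P(\eta),\sigma)\leq v\,F(\eta,\sigma)$, and taking $-\log$ gives exactly Eq.~\eqref{eq:fidelity-pinching}. The subnormalization of $\sigma$ plays no role, because the generalized correction term vanishes whenever $\Tr\eta=\Tr\P(\eta)=1$.
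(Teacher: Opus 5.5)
Your reduction is sound. Normalization of $\eta$ and $\P(\eta)$ kills the correction term in \eqref{eq:generalized-fidelity}; this is the same observation the paper uses to extend the cited lemma to subnormalized $\sigma$. Moreover, $\sqrt\sigma$ commutes with every $P_i$, and the claim becomes $\Tr\sqrt{\P(Y)}\le\sqrt v\,\Tr\sqrt Y$. You correctly split this into Cauchy--Schwarz plus $\sum_i\norm{GP_i}_1^2\le\norm{G}_1^2$.

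That last inequality is true, but your justification of it contains a false step. The operators $c_iP_iW_i^\dagger$ have mutually orthogonal \emph{ranges}. This kills the cross terms of $M^\dagger M$, not of $MM^\dagger$: the $(i,j)$ block of $MM^\dagger$ is $c_ic_jP_iW_i^\dagger W_jP_j$, which is nonzero whenever the ranges of $GP_i$ and $GP_j$ overlap. Consequently $\norm{M}_\infty\le\max_ic_i$ is false in general. For example, take $G=|0\rangle(\langle0|+\langle1|)$, $P_1=|0\rangle\langle0|$, $P_2=|1\rangle\langle1|$. Then $c_1=c_2=1$, $W_1=|0\rangle\langle0|$, $W_2=|0\rangle\langle1|$, and $M=(|0\rangle+|1\rangle)\langle0|$, so $\norm{M}_\infty=\sqrt2>\max_ic_i$.

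The repair is one line and gives exactly the bound your final step uses:
\[
 M^\dagger M=\sum_{i,j}c_ic_j\,W_iP_iP_jW_j^\dagger=\sum_ic_i^2\,W_iP_iW_i^\dagger\leq\Big(\sum_ic_i^2\Big)I .
\]
Hence $\norm{M}_\infty\le(\sum_ic_i^2)^{1/2}$, and the rest of your argument goes through. In the example above, $\sum_ic_i^2=2=\norm{G}_1^2$, so this weaker bound is tight there.

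With that fix your proof is correct, but it takes a different route from the one behind the cited lemma. The paper does not reproduce a proof; it only cites Lemma~27 of \cite{rubboli2026strong} and Lemma~4.11 of \cite{tomamichel16_book}. The standard argument for such pinching estimates is the one you started and then abandoned. Write $\P(Y)=\frac1v\sum_kU_kYU_k^\dagger$; the ingredient needed is not concavity but subadditivity, $\Tr(X_1+X_2)^{1/2}\le\Tr X_1^{1/2}+\Tr X_2^{1/2}$ for positive semidefinite $X_1,X_2$. This gives $\Tr\sqrt{\P(Y)}\le v^{-1/2}\sum_k\Tr\sqrt{U_kYU_k^\dagger}=\sqrt v\,\Tr\sqrt Y$ in one line. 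It also extends verbatim to $\Tr[\P(Y)^\alpha]\le v^{1-\alpha}\Tr Y^\alpha$ for all $\alpha\in(0,1)$.

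Your route replaces that trace inequality with Cauchy--Schwarz and a noncommutative Minkowski-type bound for column blocks, proved using only H\"older duality and polar decomposition. It is more elementary and self-contained, but longer.

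Finally, delete the opening references to $\eta\le v\P(\eta)$ and to a purification argument. Neither is used, and the pinching inequality points the wrong way here: it only yields $F(\eta,\sigma)\le vF(\P(\eta),\sigma)$, which is weaker than what data processing already gives.
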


This is the $\alpha=1/2$ pinching estimate used as Lemma~27 of Ref.~\cite{rubboli2026strong}; see also Lemma~4.11 of Ref.~\cite{tomamichel16_book}.  The same proof applies when the second argument is subnormalized, because the first argument is normalized.

\section{Quantum state merging and the strong converse exponent}
\label{sec:state-merging}

Let $\psi_{ABR}$ be pure, where Alice holds $A$, Bob holds $B$, and $R$ is an inaccessible reference system.  A blocklength-$n$ \emph{one-way quantum state-merging protocol} consists of an initially shared maximally entangled state $\Phi_{K_{A,n}K_{B,n}}$ with $|K_{A,n}|=|K_{B,n}|=K_n$, an instrument $\{\mathcal{E}_{x,n}:A^nK_{A,n}\to L_{A,n}\}_x$ applied by Alice, transmission of the outcome $x$ to Bob, and a channel $\mathcal{D}_{x,n}:B^nK_{B,n}\to\widehat A^nB^nL_{B,n}$ applied by Bob.  Here, $\widehat A$ is an isomorphic copy of $A$ held by Bob.  The output state is
\begin{align}
 \omega^{\mathcal{M}_n}_{L_{A,n}L_{B,n}\widehat A^nB^nR^n}
 :=\sum_x
 (\mathcal{E}_{x,n}\otimes\mathcal{D}_{x,n}\otimes\mathrm{id}_{R^n})
 \left(\psi_{ABR}^{\otimes n}\otimes\Phi_{K_{A,n}K_{B,n}}\right).
 \label{eq:merging-output}
\end{align}
The ideal output is
\begin{align}
 \Phi_{L_{A,n}L_{B,n}}\otimes\psi_{\widehat A BR}^{\otimes n},
 \label{eq:merging-target}
\end{align}
where $|L_{A,n}|=|L_{B,n}|=L_n$ and $\Phi_{L_{A,n}L_{B,n}}$ has Schmidt rank $L_n$.  This is the usual state-merging task introduced and developed in Refs.~\cite{horodecki2005partial,horodecki2007merging}, with arbitrary forward classical communication.

The net entanglement cost of $\mathcal{M}_n$ is
\begin{align}
 c(\mathcal{M}_n):=\log K_n-\log L_n.
 \label{eq:net-entanglement-cost}
\end{align}
Only this difference is constrained; no separate upper bound is imposed on $K_n$ or $L_n$.  For $r\in\mathbb{R}$, define the optimal squared fidelity
\begin{align}
 F_n^{\mathrm{merg}}(\psi,r)
 :=\sup_{\substack{\mathcal{M}_n:\;c(\mathcal{M}_n)\leq nr}}
 F\!\left(
 \omega^{\mathcal{M}_n},
 \Phi_{L_{A,n}L_{B,n}}\otimes\psi_{\widehat A BR}^{\otimes n}
 \right),
 \label{eq:optimal-merging-fidelity}
\end{align}
where the supremum ranges over all finite choices of $K_n$ and $L_n$ and all protocols of the form above.

\begin{theorem}[Exact strong converse exponent]
\label{thm:main}
For every finite-dimensional pure state $\psi_{ABR}$ and every $r\in\mathbb{R}$, the limit
\begin{align}
 E_{\mathrm{sc}}^{\mathrm{merg}}(\psi,r)
 :=\lim_{n\to\infty}-\frac{1}{n}\log F_n^{\mathrm{merg}}(\psi,r)
 \label{eq:strong-converse-exponent-definition}
\end{align}
exists and satisfies
\begin{align}
 \boxed{
  E_{\mathrm{sc}}^{\mathrm{merg}}(\psi,r)
 =\max_{\alpha\in[1,2]}\frac{2(\alpha-1)}{\alpha}
 \left[H_{\alpha,\alpha/2}^{\uparrow}(A|B)_\psi-r\right]
 }
 \label{eq:main-conditional-entropy}
\end{align}
Moreover,
\begin{align}
 E_{\mathrm{sc}}^{\mathrm{merg}}(\psi,r)>0
 \quad\Longleftrightarrow\quad
 r<H(A|B)_\psi.
 \label{eq:positive-region}
\end{align}
\end{theorem}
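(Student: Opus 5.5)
The proof proceeds in three parts. First we rewrite the target expression in club-sandwiched form on $AR$. Then we prove a converse (lower bound on the exponent) through the partially smoothed min-entropy. Finally we prove achievability (upper bound on the exponent) through decoupling, a Log-Euclidean change of measure, and double pinching.

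\textbf{Reparametrization.} Set $\beta=\alpha/(2(\alpha-1))$ inverted appropriately: for $\alpha'\in[1/2,1]$ define $\alpha=\frac{2\alpha'}{3\alpha'-1}\in[1,2]$. Then $\alpha/2=\frac{\alpha'}{3\alpha'-1}$ and $\frac{2(\alpha-1)}{\alpha}=\frac{1-\alpha'}{\alpha'}$. By Proposition~\ref{prop:club-duality} the right-hand side of \eqref{eq:main-conditional-entropy} therefore equals
\begin{align}
 E^*(r):=\max_{\alpha'\in[1/2,1]}\frac{1-\alpha'}{\alpha'}\Big[-r-\widetilde H_{\alpha'}^{\frac{1-2\alpha'}{1-\alpha'}}(A|R)_\psi\Big],
\end{align}
which matches the table with $q=-r$. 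Positivity \eqref{eq:positive-region} then follows. If $r<H(A|B)_\psi=-H(A|R)_\psi$, \eqref{eq:club-limit} makes the bracket positive for $\alpha'$ close to $1$. Conversely, \eqref{eq:club-lower-bound} makes every bracket nonpositive when $r\ge H(A|B)_\psi$.

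\textbf{Converse.} Fix a protocol with cost $\le nr$ and fidelity $F$. Set $\varepsilon=\sqrt{1-F}$. Proposition~\ref{prop:partially-smoothed-converse} yields $\tau_{A^nR^n}$ with the following properties:
\begin{itemize}
 \item $F(\tau,\psi_{AR}^{\otimes n})\ge F$;
 \item $\tau_{R^n}\le\psi_R^{\otimes n}$;
 \item $\tau\le 2^{nr}\,I\otimes\psi_R^{\otimes n}$.
\end{itemize}
We then bound $\sqrt F\le\|\sqrt{\tau}\sqrt{\rho}\|_1$ with $\rho=\psi_{AR}^{\otimes n}$ by a noncommutative Hölder argument, following the club-sandwiched method of Ref.~\cite{rubboli2026strong}. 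Insert $\rho_R^{1/2}\sigma_R^{(1-2\alpha')/(2\alpha')}$ factors for an arbitrary state $\sigma_R$, and split via Hölder with exponents tied to $\alpha'$. The $D_{\max}$ constraint contributes a factor $2^{nr(1-\alpha')/\alpha'\cdot(\cdot)}$. The marginal constraint $\tau_R\le\rho_R$ controls the $\sigma$-weighted trace, contributing at most $1$ after optimizing over $\sigma$. The remaining factor is exactly the trace in \eqref{eq:club-boundary} raised to the appropriate power.

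Tensor additivity of the club quantity gives
\begin{align}
 -\log F\ge n\frac{1-\alpha'}{\alpha'}\big[-r-\widetilde H_{\alpha'}^{\lambda}(A|R)_\psi\big]
\end{align}
for every $\alpha'$. Hence $\liminf -\frac1n\log F_n^{\mathrm{merg}}\ge E^*(r)$.

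\textbf{Achievability.} For any state $\tau_{AR}$ with $H(A|R)_\tau>-r$, apply Proposition~\ref{prop:fixed-output-decoupling} to $\tau^{\otimes n}$. This gives a decoupling instrument with gain $\approx -nr$ whose output is nearly $\pi_J\otimes\pi_L\otimes\tau_R^{\otimes n}$. By Uhlmann's theorem, Bob can then recover from the purification of $\tau^{\otimes n}$ with error $2^{-cn}$.

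Apply the same instrument to $\psi^{\otimes n}$ instead. Monotonicity of fidelity and the change of measure from $\tau$ to $\psi$ lose a factor roughly
\begin{align}
 2^{-n[D(\tau_{AR}\|\psi_{AR})+D(\tau_R\|\psi_R)]}.
\end{align}
The second term arises because Bob's recovery must match the true marginal $\psi_R$. We make this rigorous with a controlled Uhlmann recovery together with \eqref{eq:fidelity-LE}.

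To handle the regime where $H(A|R)_\tau<-r$, use double blocking. Pad with extra entanglement so that only the deficit $-r-H(A|R)_\tau$ is paid, which contributes an exponent $[-r-H(A|R)_\tau]_+$-type penalty. Optimizing over $\tau$, and using the variational formula Proposition~\ref{prop:LE-conditional-variational} with $\lambda=\frac{1-2\alpha'}{1-\alpha'}$ (so that $(1-\lambda)(1-\alpha')=\alpha'$, matching the coefficients above), followed by Lagrange duality in $\alpha'$, yields the Log-Euclidean exponent with $H^{\lambda}_{\alpha',\infty}$ in place of $\widetilde H^{\lambda}_{\alpha'}$.

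\textbf{Main obstacle: Log-Euclidean to club-sandwiched.} The hardest step is replacing $H^{\lambda}_{\alpha',\infty}$ by $\widetilde H^{\lambda}_{\alpha'}$. The plan is to apply the above construction to $m$-blocks after two successive pinchings:
\begin{itemize}
 \item first with respect to $\psi_R^{\otimes m}\otimes$ the universal state $\omega_{R^m}$, which commute by Proposition~\ref{prop:universal-state};
 \item then with respect to the resulting pinched $\psi_{AR}$.
\end{itemize}
These pinchings must be implemented so that the $R$-marginal is preserved. Their cost is polynomial by \eqref{eq:spectral-cardinality} and Proposition~\ref{prop:fidelity-pinching}, and the symmetric optimizer together with \eqref{eq:universal-domination} absorbs $\sigma_R$.

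Taking $m\to\infty$, the Log-Euclidean quantity of the commuting pinched operators converges to the club-sandwiched one, as in Ref.~\cite{rubboli2026strong}. Combined with the converse, this shows that the limit exists and equals $E^*(r)$.
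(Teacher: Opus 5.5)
Your proposal follows essentially the same route as the paper. The converse goes through Proposition~\ref{prop:partially-smoothed-converse}, the H\"older split of Lemma~\ref{lem:partial-smoothing-one-shot}, additivity, and the duality of Proposition~\ref{prop:club-duality}; achievability uses fixed-output decoupling, the Log-Euclidean change of measure, controlled Uhlmann recovery, an output-embedding penalty $[q-H(A|R)_\tau]_+$ in the high-gain regime, minimax with $(1-\lambda)(1-\alpha')=\alpha'$, and double pinching with the universal state.

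One correction to the pinching step: the paper's two pinchings are $\P_{\omega_{R^m}}$ and $\P_{\psi_R^{\otimes m}}$, and both act only on $R^m$. Your first pinching already makes the joint state commute with $I\otimes\psi_R^{\otimes m}$ and $I\otimes\omega_{R^m}$, so the extra pinching with respect to the pinched joint state is vacuous. Moreover, any pinching acting on $A^m$ could not be used here, because it would not commute with Alice's instrument as Lemma~\ref{lem:operational-pinching} requires.
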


\noindent The converse direction ($\geq$) in \eqref{eq:main-conditional-entropy} is proved in Section~\ref{sec:converse}, and the achievability direction ($\leq$) is proved in Section~\ref{sec:achievability}.

Writing $q:=-r$ for the net entanglement-distillation rate, the club-sandwiched conditional entropy duality in Proposition~\ref{prop:club-duality} gives the equivalent representation
\begin{align}
 E_{\mathrm{sc}}^{\mathrm{merg}}(\psi,r)
 =\max_{\alpha\in[1/2,1]}\frac{1-\alpha}{\alpha}
 \left[q-
 \widetilde H_\alpha^{\frac{1-2\alpha}{1-\alpha}}(A|R)_\psi\right].
 \label{eq:main-club-form}
\end{align}
The positivity of the strong converse exponent is then
\begin{align}
 E_{\mathrm{sc}}^{\mathrm{merg}}(\psi,r)>0
 \quad\Longleftrightarrow\quad
 q> \lim_{\alpha\nearrow1}  \widetilde H_\alpha^{\frac{1-2\alpha}{1-\alpha}}(A|R)_\psi =  H(A|R)_\psi,
 \label{eq:positive-region-club}
\end{align}
where purity gives $H(A|B)_\psi=-H(A|R)_\psi$.

\section{Converse}
\label{sec:converse}

For $\rho_{AR}\in\mathcal{S}(AR)$ and $q\in\mathbb{R}$, define the fixed-marginal smoothing fidelity
\begin{align}
 \Gamma_{A|\dot R}(\rho,q)
 :=\max_{\substack{\tau_{AR}\in\mathcal{S}_{\leq}(AR)\\
 \tau_R\leq\rho_R,\;\tau_{AR}\leq2^{-q}I_A\otimes\rho_R}}
 F(\rho_{AR},\tau_{AR}).
 \label{eq:Gamma-definition}
\end{align}
The maximum exists by compactness.  By \eqref{eq:partially-smoothed-min} and $P(\rho,\tau)^2=1-F(\rho,\tau)$, for every $\varepsilon\in[0,1]$,
\begin{align}
 H_{\min}^{\varepsilon,P}(A|\dot R)_\rho\geq q
 \quad\Longleftrightarrow\quad
 \Gamma_{A|\dot R}(\rho,q)\geq1-\varepsilon^2.
 \label{eq:Hmin-Gamma-equivalence}
\end{align}

\begin{lemma}[One-shot strong converse for partially smoothed conditional min-entropy]
\label{lem:partial-smoothing-one-shot}
For every $\rho_{AR}\in\mathcal{S}(AR)$, $q\in\mathbb{R}$, and $\alpha\in[1/2,1)$,
\begin{align}
 \Gamma_{A|\dot R}(\rho,q)
 \leq
 2^{-\frac{1-\alpha}{\alpha}
 \left[q-\widetilde H_\alpha^{\frac{1-2\alpha}{1-\alpha}}(A|R)_\rho\right]}.
 \label{eq:Gamma-one-shot-converse}
\end{align}
Equivalently, if $H_{\min}^{\varepsilon,P}(A|\dot R)_\rho\geq q$, then
\begin{align}
 1-\varepsilon^2
 \leq
 2^{-\frac{1-\alpha}{\alpha}
 \left[q-\widetilde H_\alpha^{\frac{1-2\alpha}{1-\alpha}}(A|R)_\rho\right]}.
 \label{eq:Hmin-one-shot-converse}
\end{align}
\end{lemma}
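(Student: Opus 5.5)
The plan is to bound $F(\rho_{AR},\tau_{AR})$ directly for an arbitrary feasible $\tau_{AR}$ in \eqref{eq:Gamma-definition} and an arbitrary full-support $\sigma_R\in\mathcal{S}(R)$, using a noncommutative H\"older inequality, and then to minimize over $\sigma_R$ at the end. Since $\rho_{AR}$ is normalized, the correction term in \eqref{eq:generalized-fidelity} vanishes, so
\begin{align}
 F(\rho_{AR},\tau_{AR})=\norm{\rho_{AR}^{1/2}\tau_{AR}^{1/2}}_1^2 .
\end{align}
Set $G:=\rho_R^{1/2}\sigma_R^{\frac{1-2\alpha}{\alpha}}\rho_R^{1/2}$, with the identity on $A$ suppressed. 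The constraint $\tau_{AR}\leq 2^{-q}I_A\otimes\rho_R$ implies $\supp(\tau_{AR})\subseteq\supp(I_A\otimes\rho_R)$. We may therefore insert $G^{1/2}G^{-1/2}$, with generalized inverses on $\supp(\rho_R)$, between the two factors.

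First take $\alpha\in(1/2,1)$, and use H\"older with exponents $p=2\alpha$ and $p'=\frac{2\alpha}{2\alpha-1}$. This gives
\begin{align}
 \norm{\rho^{1/2}_{AR}\tau^{1/2}_{AR}}_1^2\leq\norm{\rho_{AR}^{1/2}G\rho_{AR}^{1/2}}_\alpha\cdot\norm{\tau_{AR}^{1/2}G^{-1}\tau_{AR}^{1/2}}_s,
 \qquad s:=\frac{\alpha}{2\alpha-1}\geq1 .
\end{align}
By \eqref{eq:club-boundary}, the first factor equals $\exp\bigl(\frac{1-\alpha}{\alpha}\cdot(\text{objective at }\sigma_R)\bigr)$. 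Minimizing over $\sigma_R$ at the very end therefore produces the factor $2^{\frac{1-\alpha}{\alpha}\widetilde H_\alpha^{\frac{1-2\alpha}{1-\alpha}}(A|R)_\rho}$.

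It remains to show that the second factor is at most $2^{-q(s-1)/s}=2^{-q\frac{1-\alpha}{\alpha}}$.
\begin{itemize}
 \item Introduce $Z:=\rho_R^{-1/2}\tau_{AR}\rho_R^{-1/2}$. The two constraints in \eqref{eq:Gamma-definition} give $0\leq Z\leq 2^{-q}I$ and $\Tr_A Z=\rho_R^{-1/2}\tau_R\rho_R^{-1/2}\leq I_R$.
 \item Writing $Y:=\rho_R^{-1/2}\tau_{AR}^{1/2}$, we have $\tau^{1/2}G^{-1}\tau^{1/2}=Y^\dagger\sigma_R^{1/s}Y$. This has the same nonzero spectrum as $\sigma_R^{1/(2s)}Z\sigma_R^{1/(2s)}$, because $YY^\dagger=Z$.
 \item The Araki--Lieb--Thirring inequality with $s\geq1$ gives $\Tr[(\sigma_R^{1/(2s)}Z\sigma_R^{1/(2s)})^s]\leq\Tr[\sigma_R^{1/2}Z^s\sigma_R^{1/2}]$.
 \item From $Z^s\leq\norm{Z}_\infty^{s-1}Z\leq2^{-q(s-1)}Z$, the right-hand side is at most $2^{-q(s-1)}\Tr[\sigma_R\Tr_AZ]\leq2^{-q(s-1)}$.
\end{itemize}
Taking the $1/s$-th power and noting $(s-1)/s=(1-\alpha)/\alpha$ yields the claim. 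Combining with the first factor gives \eqref{eq:Gamma-one-shot-converse} for all feasible $\tau_{AR}$, so it holds for $\Gamma_{A|\dot R}$.

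The endpoint $\alpha=1/2$ corresponds to $s=\infty$. There I would argue directly: $G=\rho_R$, and $\norm{\tau^{1/2}\rho_R^{-1}\tau^{1/2}}_\infty=\norm{Z}_\infty\leq2^{-q}$. Alternatively, one can take the limit $\alpha\searrow1/2$ using the continuity convention for the club quantity. The equivalent form \eqref{eq:Hmin-one-shot-converse} then follows immediately from \eqref{eq:Hmin-Gamma-equivalence}.

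The main obstacle is choosing the split in H\"older so that both constraints enter: $\tau_{AR}\leq2^{-q}\rho_R$ through the operator norm, and $\tau_R\leq\rho_R$ through the trace against the normalized $\sigma_R$. Alongside this, the generalized-inverse and support bookkeeping for non-full-rank $\rho_R$ and $\sigma_R$ needs care. If $\sigma_R$ does not have full support, I would first restrict to full-rank $\sigma_R$ and conclude by continuity of the objective in \eqref{eq:club-boundary}.
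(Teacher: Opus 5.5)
Your proposal is correct and is essentially the paper's proof. Both use the same H\"older split with exponents $2\alpha$ and $\frac{2\alpha}{2\alpha-1}$, the same operator $Z=C_{AR}$ carrying both feasibility constraints, Araki--Lieb--Thirring followed by $Z^s\le 2^{-q(s-1)}Z$ and $\Tr[\sigma_R\Tr_A Z]\le 1$, and a final minimization over $\sigma_R$. The differences are cosmetic. The paper inserts the non-Hermitian pair $\rho_R^{1/2}\sigma_R^{\frac{1-2\alpha}{2\alpha}}$ and $\sigma_R^{\frac{2\alpha-1}{2\alpha}}\rho_R^{-1/2}$ instead of $G^{1/2}G^{-1/2}$; this avoids your identity $G^{-1}=\rho_R^{-1/2}\sigma_R^{1/s}\rho_R^{-1/2}$, which holds once $R$ is restricted to $\supp(\rho_R)$ as the paper does. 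The paper also handles $\alpha=1/2$ by one-sided continuity rather than your direct $p=1$, $p'=\infty$ argument.
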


\begin{proof}
Restrict $R$ to $\supp(\rho_R)$ and fix an operator $\tau_{AR}$ feasible in \eqref{eq:Gamma-definition}.  Fix $\alpha\in(1/2,1)$ and a faithful $\sigma_R$.  The following noncommutative H\"older inequality is the fully quantum analogue of the factorization used for the club-sandwiched converse in Ref.~\cite{rubboli2026strong}; we use the Schatten-norm H\"older inequality in the form recorded, for example, in Ref.~\cite{bhatia1997matrix}:
\begin{align}
 \sqrt{F(\rho_{AR},\tau_{AR})}
 &\leq
 \norm{\rho_{AR}^{1/2}
 \left(I_A\otimes\rho_R^{1/2}
 \sigma_R^{\frac{1-2\alpha}{2\alpha}}\right)}_{2\alpha}
 \times
 \norm{\left(I_A\otimes
 \sigma_R^{\frac{2\alpha-1}{2\alpha}}\rho_R^{-1/2}\right)
 \tau_{AR}^{1/2}}_{\frac{2\alpha}{2\alpha-1}}.
 \label{eq:Gamma-holder-factorization}
\end{align}
The square of the first factor equals
\begin{align}
 \left\{\Tr\left[
 \left(
 \rho_{AR}^{1/2}\rho_R^{1/2}
 \sigma_R^{\frac{1-2\alpha}{\alpha}}
 \rho_R^{1/2}\rho_{AR}^{1/2}
 \right)^\alpha\right]\right\}^{1/\alpha}.
 \label{eq:Gamma-first-holder-factor}
\end{align}
For the second factor, set
\begin{align}
 C_{AR}:=(I_A\otimes\rho_R^{-1/2})
 \tau_{AR}(I_A\otimes\rho_R^{-1/2}).
 \label{eq:Gamma-C-definition}
\end{align}
The two constraints in \eqref{eq:Gamma-definition} imply
\begin{align}
 0\leq C_{AR}\leq2^{-q}I_{AR},
 \qquad
 \Tr[(I_A\otimes\sigma_R)C_{AR}]\leq1.
 \label{eq:Gamma-C-bounds}
\end{align}
The Araki--Lieb--Thirring inequality~\cite{araki1990inequality} therefore yields
\begin{align}
\norm{\left(I_A\otimes
 \sigma_R^{\frac{2\alpha-1}{2\alpha}}\rho_R^{-1/2}\right)
 \tau_{AR}^{1/2}}_{\frac{2\alpha}{2\alpha-1}}^{\frac{2\alpha}{2\alpha-1}}
&\leq
 \Tr\left[(I_A\otimes\sigma_R^{1/2})
 C_{AR}^{\frac{\alpha}{2\alpha-1}}
 (I_A\otimes\sigma_R^{1/2})\right]
 \nonumber\\
 &\leq
 2^{-q\frac{1-\alpha}{2\alpha-1}}
 \Tr[(I_A\otimes\sigma_R)C_{AR}]
 \leq2^{-q\frac{1-\alpha}{2\alpha-1}}.
 \label{eq:Gamma-second-holder-factor}
\end{align}
Thus the square of the second factor in \eqref{eq:Gamma-holder-factorization} is at most $2^{-q(1-\alpha)/\alpha}$.  Combining this with \eqref{eq:Gamma-first-holder-factor}, minimizing over $\sigma_R$, and using \eqref{eq:club-boundary} proves \eqref{eq:Gamma-one-shot-converse} for $\alpha\in(1/2,1)$.  The bound for singular $\sigma_R$ follows by full-rank regularization, and the endpoint $\alpha=1/2$ follows by one-sided continuity.  Equation \eqref{eq:Hmin-one-shot-converse} is then immediate from \eqref{eq:Hmin-Gamma-equivalence}.
\end{proof}

\begin{lemma}[One-shot operational reduction]
\label{lem:one-shot-operational-reduction}
Let a state-merging protocol for $\psi_{ABR}$ have initial and final Schmidt ranks $K$ and $L$, coherent entanglement gain
\begin{align}
 q:=\log L-\log K,
 \label{eq:coherent-gain}
\end{align}
and squared fidelity $f$.  Then
\begin{align}
 f\leq\Gamma_{A|\dot R}(\psi_{AR},q).
 \label{eq:operational-Gamma-reduction}
\end{align}
\end{lemma}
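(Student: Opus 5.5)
The plan is to derive the claim directly from the one-shot converse of Proposition~\ref{prop:partially-smoothed-converse}, using the equivalence \eqref{eq:Hmin-Gamma-equivalence}. Given a protocol with squared fidelity $f$ with respect to the ideal output \eqref{eq:merging-target}, its purified-distance error is $\varepsilon_0:=P=\sqrt{1-f}$, by \eqref{eq:purified-distance}. Its net entanglement cost is $c=\log K-\log L=-q$.

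\textbf{Generic case.} If $f=0$ there is nothing to prove, since $\Gamma_{A|\dot R}(\psi_{AR},q)\geq0$ trivially (the zero operator is feasible in \eqref{eq:Gamma-definition}, and $F\geq0$). So assume $f\in(0,1)$. Then $\varepsilon_0\in(0,1)$ is an admissible error parameter in Proposition~\ref{prop:partially-smoothed-converse}, which gives
\begin{align}
 -q=c\geq-H_{\min}^{\varepsilon_0,P}(A|\dot R)_\psi,
 \qquad\text{i.e.}\qquad
 H_{\min}^{\varepsilon_0,P}(A|\dot R)_\psi\geq q.
\end{align}
Applying \eqref{eq:Hmin-Gamma-equivalence} with $\rho=\psi_{AR}$ and $\varepsilon=\varepsilon_0$ turns this into
\begin{align}
 \Gamma_{A|\dot R}(\psi_{AR},q)\geq1-\varepsilon_0^2=f,
\end{align}
which is \eqref{eq:operational-Gamma-reduction}.

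\textbf{Perfect protocol, $f=1$.} Here $\varepsilon_0=0$, which is excluded from the range $(0,1]$ in Proposition~\ref{prop:partially-smoothed-converse}. I would instead note that the protocol has error at most $\varepsilon$ for every $\varepsilon\in(0,1]$. The argument above then yields $\Gamma_{A|\dot R}(\psi_{AR},q)\geq1-\varepsilon^2$ for every such $\varepsilon$, and letting $\varepsilon\to0$ gives $\Gamma_{A|\dot R}(\psi_{AR},q)\geq1=f$. No compactness is needed for this step, because $\Gamma$ is a fixed number independent of $\varepsilon$.

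\textbf{Main point of care.} The only real obstacle is bookkeeping: the conventions of Ref.~\cite{anshu2020partially} must match the ones used here. Specifically, the error in Proposition~\ref{prop:partially-smoothed-converse} has to be the purified distance between the full output state \eqref{eq:merging-output} and the target \eqref{eq:merging-target}, taken with the squared-fidelity convention \eqref{eq:generalized-fidelity}. Both states are normalized, so the generalized fidelity reduces to the squared Uhlmann fidelity and $P^2=1-f$ holds exactly. It also matters that the relevant marginal is $\psi_{AR}$ (Alice's system together with the reference), which is exactly the state entering $H_{\min}^{\varepsilon,P}(A|\dot R)_\psi$.
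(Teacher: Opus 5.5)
Your proposal is correct and matches the paper's proof step for step. The case $f=0$ is trivial. For $0<f<1$, you apply Proposition~\ref{prop:partially-smoothed-converse} with $\varepsilon=\sqrt{1-f}$ and then \eqref{eq:Hmin-Gamma-equivalence}. For $f=1$, you run the same argument with every $\varepsilon>0$ and let $\varepsilon\downarrow0$, exactly as the paper does.
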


\begin{proof}
The claim is trivial when $f=0$.  If $0<f<1$, the purified-distance error is $\varepsilon=\sqrt{1-f}$.  Proposition~\ref{prop:partially-smoothed-converse} gives
\begin{align}
 H_{\min}^{\varepsilon,P}(A|\dot R)_\psi\geq q,
 \label{eq:operational-Hmin-lower-bound}
\end{align}
and \eqref{eq:Hmin-Gamma-equivalence} yields \eqref{eq:operational-Gamma-reduction}.  If $f=1$, the same conclusion follows by applying the argument with any $\varepsilon>0$ and letting $\varepsilon\downarrow0$.
\end{proof}

\begin{theorem}[Finite-block converse and exponent lower bound]
\label{thm:converse}
For every pure state $\psi_{ABR}$, every $r\in\mathbb{R}$, every $n\in\mathbb{N}$, and every $\alpha\in[1,2]$,
\begin{align}
 F_n^{\mathrm{merg}}(\psi,r)
 \leq 2^{-n \frac{2(\alpha-1)}{\alpha}
 \left[H_{\alpha,\alpha/2}^{\uparrow}(A|B)_\psi-r\right]}.
 \label{eq:finite-block-converse}
\end{align}
Consequently,
\begin{align}
 \liminf_{n\to\infty}-\frac{1}{n}\log F_n^{\mathrm{merg}}(\psi,r)
 \geq
 \max_{\alpha\in[1,2]}\frac{2(\alpha-1)}{\alpha}
 \left[H_{\alpha,\alpha/2}^{\uparrow}(A|B)_\psi-r\right].
 \label{eq:converse-exponent}
\end{align}
\end{theorem}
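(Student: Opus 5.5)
Our plan is to chain the one-shot results already established: the operational reduction of Lemma~\ref{lem:one-shot-operational-reduction} applied to $\psi^{\otimes n}$, the one-shot bound of Lemma~\ref{lem:partial-smoothing-one-shot}, additivity of the club-sandwiched entropy, and the duality of Proposition~\ref{prop:club-duality} to change variables from $\beta\in[1/2,1]$ to $\alpha\in[1,2]$.

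\emph{Step 1 (one-shot bound at blocklength $n$).} Fix $n$ and any protocol $\mathcal{M}_n$ for $\psi_{ABR}^{\otimes n}$ with $c(\mathcal{M}_n)\le nr$. View it as a one-shot merging protocol for the pure state $\psi^{\otimes n}$ with coherent gain $q_n=\log L_n-\log K_n\ge -nr$. Lemma~\ref{lem:one-shot-operational-reduction} gives that its squared fidelity is at most $\Gamma_{A^n|\dot R^n}(\psi_{AR}^{\otimes n},q_n)$. Since $\Gamma$ is nonincreasing in $q$ (the constraint $\tau\le2^{-q}I\otimes\rho_R$ becomes stricter as $q$ grows), this is at most $\Gamma_{A^n|\dot R^n}(\psi_{AR}^{\otimes n},-nr)$. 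Lemma~\ref{lem:partial-smoothing-one-shot} with $\beta\in[1/2,1)$ then bounds it by $2^{-\frac{1-\beta}{\beta}[-nr-\widetilde H_\beta^{\frac{1-2\beta}{1-\beta}}(A^n|R^n)_{\psi^{\otimes n}}]}$. Additivity of the club-sandwiched entropy in this parameter range (\cite[Theorem~4.1]{rubboli2024quantum}) replaces the entropy by $n\widetilde H_\beta^{\frac{1-2\beta}{1-\beta}}(A|R)_\psi$. Taking the supremum over protocols, we get for every $\beta\in[1/2,1)$
\begin{align}
 -\log F_n^{\mathrm{merg}}(\psi,r)\ \ge\ n\frac{1-\beta}{\beta}\Big[-r-\widetilde H_\beta^{\frac{1-2\beta}{1-\beta}}(A|R)_\psi\Big].
\end{align}

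\emph{Step 2 (reparametrization).} Set $\alpha=\frac{2\beta}{3\beta-1}$. This map is a decreasing bijection from $[1/2,1]$ onto $[1,2]$, and $\alpha/2=\frac{\beta}{3\beta-1}$, so Proposition~\ref{prop:club-duality} gives $\widetilde H_\beta^{\frac{1-2\beta}{1-\beta}}(A|R)_\psi=-H^{\uparrow}_{\alpha,\alpha/2}(A|B)_\psi$. Inverting, $\beta=\frac{\alpha}{3\alpha-2}$, and a short computation gives $\frac{1-\beta}{\beta}=\frac{2(\alpha-1)}{\alpha}$. Substituting turns the exponent into $\frac{2(\alpha-1)}{\alpha}[H^{\uparrow}_{\alpha,\alpha/2}(A|B)_\psi-r]$, which is exactly \eqref{eq:finite-block-converse} for $\alpha\in(1,2]$.

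\emph{Step 3 (endpoints and conclusion).} At $\alpha=1$, i.e.\ $\beta=1$, the right-hand side of \eqref{eq:finite-block-converse} equals $1$, so the bound is trivial because squared fidelities are at most one. Then \eqref{eq:converse-exponent} follows by taking $-\frac1n\log$, the $\liminf$, and the maximum over $\alpha$. The maximum is attained because the objective is continuous in $\alpha$ on the compact interval $[1,2]$; alternatively, the supremum suffices.

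The only potentially delicate points are the monotonicity of $\Gamma$ in $q$ and the endpoint $\beta=1/2$ ($\alpha=2$). The latter is already covered by the one-sided continuity clause in Lemma~\ref{lem:partial-smoothing-one-shot} together with the continuity convention in Proposition~\ref{prop:club-duality}. The remaining steps are bookkeeping.
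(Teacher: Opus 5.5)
Your proposal is correct and follows the paper's proof essentially step for step: the operational reduction to $\Gamma$, the one-shot H\"older bound, additivity of the club-sandwiched entropy, the duality reparametrization with $(1-\beta)/\beta=2(\alpha-1)/\alpha$, and the trivial endpoint $\alpha=1$. Your explicit observation that $\Gamma$ is nonincreasing in $q$ spells out the step the paper leaves implicit, namely passing from the protocol's actual gain $\log L_n-\log K_n\geq -nr$ to the rate $-r$.
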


\begin{proof}
Fix $n\in\mathbb{N}$, $\alpha\in[1/2,1)$, and an admissible protocol.  Its coherent gain satisfies
\begin{align}
 \log L_n-\log K_n\geq-nr.
 \label{eq:gain-rate-constraint}
\end{align}
Combining Lemmas~\ref{lem:one-shot-operational-reduction} and~\ref{lem:partial-smoothing-one-shot} with additivity of the club-sandwiched conditional entropy gives
\begin{align}
 F\!\left(
 \omega^{\mathcal{M}_n},
 \Phi_{L_{A,n}L_{B,n}}\otimes\psi_{\widehat A BR}^{\otimes n}
 \right)
 \leq
 2^{-n\frac{1-\alpha}{\alpha}
 \left[-r-
 \widetilde H_\alpha^{\frac{1-2\alpha}{1-\alpha}}(A|R)_\psi\right]}.
 \label{eq:uniform-converse}
\end{align}
The right-hand side is independent of the protocol, so the same bound holds for $F_n^{\mathrm{merg}}(\psi,r)$.  Set
\begin{align}
 \beta:=\frac{2\alpha}{3\alpha-1}.
 \label{eq:parameter-conversion}
\end{align}
As $\alpha$ ranges from $1/2$ to $1$, $\beta$ ranges from $2$ to $1$, and $(1-\alpha)/\alpha=2(\beta-1)/\beta$.  Proposition~\ref{prop:club-duality} therefore converts the protocol-independent one-shot bound into \eqref{eq:finite-block-converse} for $\beta\in(1,2]$.  At $\beta=1$, Eq.~\eqref{eq:finite-block-converse} reduces to the trivial estimate $F_n^{\mathrm{merg}}(\psi,r)\leq1$ and is included by continuity.  Taking negative logarithms, dividing by $n$, taking the liminf, and then optimizing over $\beta\in[1,2]$ yields \eqref{eq:converse-exponent}.
Since $\beta\in[1,2]$ is only a dummy variable, we write $\alpha\in[1,2]$ by convention
\end{proof}

\section{Achievability}
\label{sec:achievability}

The proof reduces the problem to fixed-output decoupling.  Relative-entropy decoupling enters through Proposition~\ref{prop:fixed-output-decoupling}.  We first obtain a Log-Euclidean exponent by a change of measure, then convert fixed-output fidelity into merging fidelity by a controlled Uhlmann argument, and finally use double pinching to replace the Log-Euclidean entropy by the club-sandwiched entropy.

\subsection{Log-Euclidean fixed-output exponent}

\begin{lemma}[Variational identity]
\label{lem:LE-exponent-variational}
For every $\rho_{AR}\in\mathcal{S}(AR)$ and every $q\in\mathbb{R}$,
\begin{align}
 &\max_{\alpha\in[1/2,1]}
 \frac{1-\alpha}{\alpha}
 \left[q-H_{\alpha,\infty}^{\frac{1-2\alpha}{1-\alpha}}(A|R)_\rho\right]
 \nonumber\\
 &\qquad=
 \min_{\tau_{AR}\in\mathcal{S}(AR)}
 \left\{
 D(\tau_{AR}\|\rho_{AR})+D(\tau_R\|\rho_R)
 +[q-H(A|R)_\tau]_+
 \right\}.
 \label{eq:LE-exponent-variational}
\end{align}
\end{lemma}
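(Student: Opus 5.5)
Set $\lambda=\frac{1-2\alpha}{1-\alpha}$, so that $1-\lambda=\frac{\alpha}{1-\alpha}$ and $(1-\lambda)(1-\alpha)=\alpha$. With this choice, Proposition~\ref{prop:LE-conditional-variational} gives, for $\alpha\in(1/2,1)$,
\begin{align}
 (\alpha-1)H_{\alpha,\infty}^{\lambda}(A|R)_\rho
 =\min_{\tau_{AR}}\Big\{\alpha\big[D(\tau_R\|\rho_R)+D(\tau_{AR}\|\rho_{AR})\big]+(\alpha-1)H(A|R)_\tau\Big\}.
\end{align}
Dividing by $\alpha>0$ and writing $s:=\frac{1-\alpha}{\alpha}\in(0,1)$, we obtain
\begin{align}
 s\left[q-H_{\alpha,\infty}^{\lambda}(A|R)_\rho\right]
 =\min_{\tau_{AR}}\Big\{G(\tau)+s\big[q-H(A|R)_\tau\big]\Big\},
 \qquad G(\tau):=D(\tau_{AR}\|\rho_{AR})+D(\tau_R\|\rho_R).
\end{align}
The left-hand side of \eqref{eq:LE-exponent-variational} is therefore $\max_{s\in[0,1]}\min_\tau \{G(\tau)+s[q-H(A|R)_\tau]\}$. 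The endpoints $s=0$ and $s=1$, which correspond to $\alpha=1$ and $\alpha=1/2$, are included by the stated continuous extensions, and at $s=0$ the value is $\min_\tau G(\tau)=0$.

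The right-hand side equals $\min_\tau \max_{s\in[0,1]}\{G(\tau)+s[q-H(A|R)_\tau]\}$, since $\max_{s\in[0,1]} s x=[x]_+$. The identity is thus a minimax exchange, and I will invoke Sion's minimax theorem. The objective is affine, hence concave and continuous, in $s$ on the compact convex set $[0,1]$. For convexity in $\tau$, note that $D(\tau_{AR}\|\rho_{AR})$ and $D(\tau_R\|\rho_R)$ are jointly convex, hence convex in $\tau$ for fixed $\rho$. Moreover, $-H(A|R)_\tau=D(\tau_{AR}\|I_A\otimes\tau_R)$ is jointly convex in its two arguments, which are both linear in $\tau$, so it is convex in $\tau$. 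Since $s\ge0$, the whole objective is convex in $\tau$ on the compact convex set $\mathcal{S}(AR)$.

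The main obstacle is lower semicontinuity in $\tau$, which Sion's theorem requires. The relative entropies are lower semicontinuous, with value $+\infty$ off the support, and the conditional entropy is continuous in finite dimensions. The objective is therefore lower semicontinuous and convex, with values in $(-\infty,+\infty]$, so Sion's theorem applies; alternatively one can restrict to $\tau$ supported in $\supp(\rho_{AR})$, where $G$ is finite and continuous. In either case the min over $\tau$ is attained, by compactness and lower semicontinuity. Finally, one checks that the max over $s\in[0,1]$ is attained. The function $s\mapsto\min_\tau\{\cdots\}$ is a pointwise minimum of affine functions, hence concave and upper semicontinuous on $[0,1]$, so its maximum exists and agrees with the continuous extension of $H_{\alpha,\infty}^{\lambda}$ at $\alpha\in\{1/2,1\}$. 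Should that extension be defined differently at $\alpha=1/2$, I would check that this concave upper-semicontinuous function's value at $s=1$ matches its limit.
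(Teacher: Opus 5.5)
Your proposal is correct and follows the paper's proof essentially step by step. The shared steps are the substitution $\lambda=\frac{1-2\alpha}{1-\alpha}$ with $(1-\lambda)(1-\alpha)=\alpha$, the reparametrization $t=\frac{1-\alpha}{\alpha}\in[0,1]$, Sion's minimax theorem on the support-restricted compact convex state space, and $\max_{0\le t\le 1}tx=[x]_+$. Your closing worry about $\alpha=1/2$ is unnecessary: there $\lambda=0$, so Proposition~\ref{prop:LE-conditional-variational} applies directly and no continuous extension is needed.
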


\begin{proof}
Apply Proposition~\ref{prop:LE-conditional-variational} with
$\lambda=(1-2\alpha)/(1-\alpha)$.  Since
\begin{align}
 (1-\lambda)(1-\alpha)=\alpha,
 \label{eq:LE-path-identity}
\end{align}
setting $t=(1-\alpha)/\alpha$ gives
\begin{align}
 \max_{0\leq t\leq1}\min_{\tau_{AR}\in\mathcal{S}(\supp\rho_{AR})}
 \Bigl\{
 D(\tau_{AR}\|\rho_{AR})+D(\tau_R\|\rho_R)
 +t\bigl(q-H(A|R)_\tau\bigr)
 \Bigr\}.
 \label{eq:LE-minimax}
\end{align}
Here $\mathcal{S}(\supp\rho_{AR})$ denotes the states supported on $\supp(\rho_{AR})$; states outside this set have infinite objective.  The endpoint $\alpha=1$ is represented by $t=0$ through continuous extension.  The state space is compact and convex.  For fixed $t$, the objective is convex and continuous in $\tau_{AR}$ on this support-restricted domain, and for fixed $\tau_{AR}$ it is affine and continuous in $t$.  Sion's minimax theorem~\cite{sion1958general} therefore yields
\begin{align}
 \min_{\tau_{AR}\in\mathcal{S}(\supp\rho_{AR})}
 \Bigl\{
 D(\tau_{AR}\|\rho_{AR})+D(\tau_R\|\rho_R)
 +\max_{0\leq t\leq1}t\bigl(q-H(A|R)_\tau\bigr)
 \Bigr\}.
 \label{eq:LE-minimax-exchanged}
\end{align}
Finally,
\begin{align}
 \max_{0\leq t\leq1}t\bigl(q-H(A|R)_\tau\bigr)
 =[q-H(A|R)_\tau]_+,
 \label{eq:positive-part-identity}
\end{align}
which proves the claim.
\end{proof}

\begin{proposition}[Log-Euclidean fixed-output achievability]
\label{prop:LE-fixed-output}
For every $\rho_{AR}\in\mathcal{S}(AR)$ and $q\in\mathbb{R}$, there exist one-Kraus-per-outcome instruments $\mathcal{A}_n:A^nK_{A,n}\to J_nL_{A,n}$ such that
\begin{align}
 \log|L_{A,n}|-\log|K_{A,n}|&\geq nq,
 \label{eq:LE-achievable-gain}\\
 \frac{1}{n}\left(\log|L_{A,n}|-\log|K_{A,n}|\right)&\longrightarrow q,
 \label{eq:LE-achievable-gain-limit}
\end{align}
and, with
\begin{align}
 \omega_n^\rho
 &:=(\mathcal{A}_n\otimes\mathrm{id}_{R^n})
 (\rho_{AR}^{\otimes n}\otimes\pi_{K_{A,n}}),
 \label{eq:omega-rho}\\
 \zeta_n^\rho
 &:=\pi_{J_n}\otimes\pi_{L_{A,n}}\otimes\rho_R^{\otimes n},
 \label{eq:zeta-rho}
\end{align}
we have
\begin{align}
 \limsup_{n\to\infty}-\frac{1}{n}\log F(\omega_n^\rho,\zeta_n^\rho)
 \leq
 \max_{\alpha\in[1/2,1]}
 \frac{1-\alpha}{\alpha}
 \left[q-H_{\alpha,\infty}^{\frac{1-2\alpha}{1-\alpha}}(A|R)_\rho\right].
 \label{eq:LE-fixed-output-achievability}
\end{align}
\end{proposition}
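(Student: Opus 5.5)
By Lemma~\ref{lem:LE-exponent-variational}, it suffices to show that for every $\tau_{AR}\in\mathcal{S}(\supp\rho_{AR})$,
\begin{align}
 \limsup_{n\to\infty}-\frac{1}{n}\log F(\omega_n^\rho,\zeta_n^\rho)
 \leq D(\tau_{AR}\|\rho_{AR})+D(\tau_R\|\rho_R)+[q-H(A|R)_\tau]_+ .
 \label{eq:plan-target}
\end{align}
The instruments must not depend on $\tau$, which we handle by fixing the minimizer $\tau^\star$ of the right-hand side of \eqref{eq:LE-exponent-variational}. The plan is a Mosonyi--Ogawa change of measure.

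\textbf{Case $q<H(A|R)_\tau$.} We take $\mathcal{A}_n$ from Proposition~\ref{prop:fixed-output-decoupling} applied to $\tau$. Then the output $\omega_n^\tau$ is $2^{-cn}$-close in relative entropy to $\zeta_n^\tau=\pi_{J_n}\otimes\pi_{L_{A,n}}\otimes\tau_R^{\otimes n}$. Pinsker and Fuchs--van de Graaf give $F(\omega_n^\tau,\zeta_n^\tau)\to1$.

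To pass back to $\rho$, we use the data-processing-type bound
\begin{align}
 -\log F(\omega_n^\rho,\zeta_n^\rho)\leq -\log F(\omega_n^\tau,\zeta_n^\tau)+\frac{1}{\cdots}\Big[D(\omega_n^\tau\|\omega_n^\rho)+D(\zeta_n^\tau\|\zeta_n^\rho)\Big]+o(n).
 \label{eq:plan-change-measure}
\end{align}
The cleanest way to obtain it is through the Gibbs variational formula \eqref{eq:LE-Gibbs} at $\alpha=1/2$ combined with \eqref{eq:fidelity-LE}, taking $\omega_n^\tau$ (suitably cut) as the trial state. That gives
\begin{align}
 -\log F(\omega_n^\rho,\zeta_n^\rho)\leq D(\tau'\|\omega_n^\rho)+D(\tau'\|\zeta_n^\rho)
 \label{eq:plan-gibbs}
\end{align}
for any state $\tau'$.

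Two naive choices of $\tau'$ both fail:
\begin{itemize}
 \item[(i)] With $\tau'=\omega_n^\tau$, the first term is at most $nD(\tau_{AR}\|\rho_{AR})$ by data processing. The second term equals $D(\omega_n^\tau\|\zeta_n^\tau)+\Tr\,\omega_n^\tau(\log\zeta_n^\tau-\log\zeta_n^\rho)$, and the cross term is $n\Tr\tau_R(\log\tau_R-\log\rho_R)=nD(\tau_R\|\rho_R)$ because the $R$-marginal of $\omega_n^\tau$ is exactly $\tau_R^{\otimes n}$. However, the Gibbs form forces $\tau'$ to have support inside $\supp\omega_n^\rho\cap\supp\zeta_n^\rho$, and the bound does not use the decoupling.
 \item[(ii)] With $\tau'=\zeta_n^\tau$, the second term is $nD(\tau_R\|\rho_R)$. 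The first term $D(\zeta_n^\tau\|\omega_n^\rho)$ is not controlled by data processing.
\end{itemize}
I will therefore use a joint-convexity/interpolation trick: choose $\tau'$ close to both $\omega_n^\tau$ and $\zeta_n^\tau$. Since $D(\omega_n^\tau\|\zeta_n^\tau)\le 2^{-cn}$, continuity of $\tau'\mapsto D(\tau'\|\cdot)$ on states with bounded log-ratios should let the swap cost only $o(n)$. Whether that continuity holds is the main obstacle, discussed below. If it does, \eqref{eq:plan-gibbs} yields the bound $D(\tau_{AR}\|\rho_{AR})+D(\tau_R\|\rho_R)$, which is \eqref{eq:plan-target} in this case.

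\textbf{Case $q\geq H(A|R)_\tau$.} Choose $q'<H(A|R)_\tau$ and build a protocol for rate $q'$. Enlarge the output $L_{A,n}$ by a factor $2^{n(q-q')}$, embedding isometrically as Proposition~\ref{prop:fixed-output-decoupling} permits, to reach gain $nq$. Diluting $\pi_{L}$ to a larger maximally mixed state costs fidelity exactly $2^{-n(q-q')}$ against the target, which accounts for the $[q-H(A|R)_\tau]_+$ term as $q'\uparrow H(A|R)_\tau$. The double-blocking of Mosonyi--Ogawa (blocks of length $m$, then $n=km$) ensures \eqref{eq:LE-achievable-gain}--\eqref{eq:LE-achievable-gain-limit} hold exactly with limits taken properly.

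\textbf{Main obstacle.} The hard step is controlling $D(\tau'\|\omega_n^\rho)$ for a $\tau'$ that is near $\zeta_n^\tau$ rather than equal to $\omega_n^\tau$. I would first try smoothing: set $\tau'=\omega_n^\tau$ and use the Log-Euclidean formula with $\log\zeta_n^\rho$, bounding $\Tr\,\omega_n^\tau\log\zeta_n^\rho=\Tr\,\zeta_n^\tau\log\zeta_n^\rho$. This identity holds exactly, since $\zeta_n^\rho$ is a product with maximally mixed $J_nL_{A,n}$ factors and $\omega_n^\tau$ has $R$-marginal $\tau_R^{\otimes n}$. It gives
\begin{align}
 D(\omega_n^\tau\|\zeta_n^\rho)=D(\omega_n^\tau\|\zeta_n^\tau)+nD(\tau_R\|\rho_R)\leq 2^{-cn}+nD(\tau_R\|\rho_R),
 \label{eq:plan-exact}
\end{align}
which is already exact and sufficient. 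So (i) actually works, provided the support conditions hold; those I will handle by restricting to $\supp\rho$ and regularizing with a vanishing full-rank perturbation.
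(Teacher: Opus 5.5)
Once the detours are removed, your argument is the paper's proof. You use $\omega_n^\tau$ as the trial state in \eqref{eq:LE-Gibbs} at $\alpha=1/2$, bound $D(\omega_n^\tau\|\omega_n^\rho)\le nD(\tau_{AR}\|\rho_{AR})$ by data processing, and use the exact chain rule $D(\omega_n^\tau\|\zeta_n^\rho)=D(\omega_n^\tau\|\zeta_n^\tau)+nD(\tau_R\|\rho_R)$ with the decoupling bound on the first term. In the high-gain case you embed isometrically into a larger declared $L_{A,n}$, at an exact fidelity cost of $L_n'/L_n$. Three remarks:

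\begin{itemize}
\item Your objection to choice (i), that the bound ``does not use the decoupling'', is mistaken: the decoupling enters precisely through the $2^{-cn}$ bound on $D(\omega_n^\tau\|\zeta_n^\tau)$, as your own final display shows.
\item The support condition holds automatically, so no regularization or ``interpolation trick'' is needed. Indeed $\tau_{AR}^{\otimes n}\le c\,\rho_{AR}^{\otimes n}$ gives $\omega_n^\tau\le c\,\omega_n^\rho$, and $(\omega_n^\tau)_{R^n}=\tau_R^{\otimes n}$ puts $\omega_n^\tau$ inside $\supp\zeta_n^\rho$.
\item Letting $q'\uparrow H(A|R)_\tau$ while keeping a single instrument sequence needs a diagonal selection, which is the paper's $\delta_j$, $N_j$, $j(n)$ construction. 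Mosonyi--Ogawa double blocking plays no role in this proposition, and for fixed $q'$ the gain conditions already hold without it.
\end{itemize}
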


\begin{proof}
	Choose a minimizer $\tau_{AR}$ in \eqref{eq:LE-exponent-variational}, which exists by compactness and lower semicontinuity.
	
	Suppose first that $q<H(A|R)_\tau$.  Apply Proposition~\ref{prop:fixed-output-decoupling} to $\tau_{AR}$ at coherent gain $q$, and then apply the resulting instrument to $\rho_{AR}^{\otimes n}\otimes\pi_{K_{A,n}}$.  Denote
	\begin{align}
		\omega_n^\tau
		&:=(\mathcal{A}_n\otimes\mathrm{id}_{R^n})
		(\tau_{AR}^{\otimes n}\otimes\pi_{K_{A,n}}),
		&
		\zeta_n^\tau
		&:=\pi_{J_n}\otimes\pi_{L_{A,n}}\otimes\tau_R^{\otimes n}.
		\label{eq:tau-output-and-target}
	\end{align}
	Using \eqref{eq:fidelity-LE} and evaluating the variational formula \eqref{eq:LE-Gibbs} at $\omega_n^\tau$, we obtain
	\begin{align}
		-\log F(\omega_n^\rho,\zeta_n^\rho)
		&\leq D_{1/2,\infty}(\omega_n^\rho\|\zeta_n^\rho)
		\nonumber\\
		&\leq D(\omega_n^\tau\|\omega_n^\rho)
		+D(\omega_n^\tau\|\zeta_n^\rho).
		\label{eq:change-of-measure}
	\end{align}
	Data processing gives
	\begin{align}
		D(\omega_n^\tau\|\omega_n^\rho)
		\leq nD(\tau_{AR}\|\rho_{AR}).
		\label{eq:first-change-of-measure-term}
	\end{align}
	Moreover, $(\omega_n^\tau)_{R^n}=\tau_R^{\otimes n}$, and therefore
	\begin{align}
		D(\omega_n^\tau\|\zeta_n^\rho)
		=D(\omega_n^\tau\|\zeta_n^\tau)
		+nD(\tau_R\|\rho_R).
		\label{eq:fixed-target-chain-rule}
	\end{align}
	The first term on the right of \eqref{eq:fixed-target-chain-rule} converges to zero exponentially by Proposition~\ref{prop:fixed-output-decoupling}.  Hence
	\begin{align}
		\limsup_{n\to\infty}-\frac{1}{n}\log F(\omega_n^\rho,\zeta_n^\rho)
		\leq D(\tau_{AR}\|\rho_{AR})+D(\tau_R\|\rho_R).
		\label{eq:LE-low-gain}
	\end{align}
	
	Suppose now that $q\geq H(A|R)_\tau$.  Choose numbers $\delta_j\downarrow0$.  For each $j$, apply the preceding argument at gain $H(A|R)_\tau-\delta_j$.  By Proposition~\ref{prop:fixed-output-decoupling}, there is an integer $N_j$ such that, for every $n\geq N_j$, the corresponding decoupling estimate holds and the physical coherent gain $g_{n,j}:=n^{-1}(\log L'_{n,j}-\log K_{n,j})$ satisfies
	\begin{align}
		\left|g_{n,j}-\bigl(H(A|R)_\tau-\delta_j\bigr)\right|\leq\delta_j.
		\label{eq:diagonal-gain-control}
	\end{align}
	Take the $N_j$ strictly increasing.  For $n\geq N_1$, set $j(n):=\max\{j:N_j\leq n\}$ and $\delta_n:=\delta_{j(n)}$; for the finitely many $n<N_1$, choose any admissible instrument and set $\delta_n:=\delta_1$.  Then $\delta_n\to0$, and for every $n\geq N_1$ the instrument from the $j(n)$th family satisfies both the decoupling estimate and \eqref{eq:diagonal-gain-control}.  Using the harmless common input--output padding allowed in Proposition~\ref{prop:fixed-output-decoupling}, we may also arrange that integer rounding contributes $o(n)$ to all logarithmic dimensions.
	
	Keep the same initial entanglement rank and isometrically embed the physical output of dimension $L_n':=L'_{n,j(n)}$ into a declared system of dimension
	\begin{align}
		L_n:=\left\lceil |K_{A,n}|2^{nq}\right\rceil.
		\label{eq:declared-output-dimension}
	\end{align}
	For all sufficiently large $n$, $L_n\geq L_n'$.  Let $\omega_{n,\mathrm{phys}}^\rho$ denote the source output on $J_nL'_{A,n}R^n$, and set $\zeta_{n,\mathrm{phys}}^\rho:=\pi_{J_n}\otimes\pi_{L_n'}\otimes\rho_R^{\otimes n}$.  Let $\iota_n:L_{A,n}'\hookrightarrow L_{A,n}$ be an isometric embedding and compose the physical instrument with $\mathrm{id}_{J_n}\otimes\iota_n$.  The resulting declared output and target are
	\begin{align}
		\omega_n^\rho
		&:= (\mathrm{id}_{J_n}\otimes\iota_n\otimes\mathrm{id}_{R^n})
		\omega_{n,\mathrm{phys}}^\rho
		(\mathrm{id}_{J_n}\otimes\iota_n^\dagger\otimes\mathrm{id}_{R^n}),
		&
		\zeta_n^\rho
		&:=\pi_{J_n}\otimes\pi_{L_n}\otimes\rho_R^{\otimes n}.
		\label{eq:declared-embedded-output}
	\end{align}
	Since $\omega_n^\rho$ remains supported on the embedded $L_n'$-dimensional subspace, squared fidelity satisfies
	\begin{align}
		F(\omega_n^\rho,\zeta_n^\rho)
		&=\frac{L_n'}{L_n}
		F(\omega_{n,\mathrm{phys}}^\rho,\zeta_{n,\mathrm{phys}}^\rho).
		\label{eq:fidelity-embedding-scaling}
	\end{align}
	The declared gain is at least $q$ and converges to $q$, while \eqref{eq:diagonal-gain-control} gives
	\begin{align}
		\limsup_{n\to\infty}\frac{1}{n}\log\frac{L_n}{L_n'}
		\leq q-H(A|R)_\tau.
		\label{eq:diagonal-embedding-penalty}
	\end{align}
	Consequently,
	\begin{align}
		\limsup_{n\to\infty}-\frac{1}{n}\log F(\omega_n^\rho,\zeta_n^\rho)
		\leq D(\tau_{AR}\|\rho_{AR})+D(\tau_R\|\rho_R)
		+q-H(A|R)_\tau.
		\label{eq:LE-high-gain}
	\end{align}
	In each regime, the corresponding right-hand side in \eqref{eq:LE-low-gain} or \eqref{eq:LE-high-gain} equals the objective in \eqref{eq:LE-exponent-variational} evaluated at the chosen minimizer.  This proves \eqref{eq:LE-fixed-output-achievability}, while the dimension choices give \eqref{eq:LE-achievable-gain} and \eqref{eq:LE-achievable-gain-limit}.
\end{proof}

\subsection{Controlled Uhlmann recovery}

\begin{lemma}[Controlled Uhlmann recovery]
\label{lem:controlled-Uhlmann}
Let $\mathcal{A}:A^nK_A\to JL_A$ be a one-Kraus-per-outcome instrument applied to
$\psi_{ABR}^{\otimes n}\otimes\Phi_{K_AK_B}$, and define
\begin{align}
 \omega_{JL_AR^n}
 &:=(\mathcal{A}\otimes\mathrm{id}_{R^n})
 (\psi_{AR}^{\otimes n}\otimes\pi_{K_A}),
 \label{eq:controlled-Uhlmann-output}\\
 \zeta_{JL_AR^n}
 &:=\pi_J\otimes\pi_{L_A}\otimes\psi_R^{\otimes n}.
 \label{eq:controlled-Uhlmann-target}
\end{align}
There exist outcome-dependent recovery isometries for Bob that produce a state-merging protocol whose initial and final Schmidt ranks are $|K_A|$ and $|L_A|$ and whose squared fidelity is at least
\begin{align}
 F(\omega,\zeta).
 \label{eq:controlled-Uhlmann-conclusion}
\end{align}
The conclusion remains valid when the physical output of $\mathcal{A}$ is isometrically embedded into a larger declared output space $L_A$.
\end{lemma}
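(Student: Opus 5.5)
The approach is the standard Uhlmann argument, applied outcome by outcome so that Bob's recovery can depend on the classical message $j$. Write the instrument as $\mathcal{A}(\cdot)=\sum_j |j\rangle\!\langle j|_J\otimes A_j(\cdot)A_j^\dagger$ with a single Kraus operator $A_j:A^nK_A\to L_A$ per outcome.

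\emph{Step 1: post-measurement states.} Let $|\Psi\rangle:=|\psi\rangle_{ABR}^{\otimes n}\otimes|\Phi_K\rangle_{K_AK_B}$. For each outcome $j$, Alice's action produces the unnormalized pure state
\begin{align}
 |\Psi_j\rangle:=(A_j\otimes I_{B^nK_BR^n})|\Psi\rangle
\end{align}
on $L_A B^nK_B R^n$. Its marginal on $L_AR^n$ is exactly $\omega_j:=\langle j|\omega_{JL_AR^n}|j\rangle$, because tracing out $B^nK_B$ from $\Psi$ gives $\psi_{AR}^{\otimes n}\otimes\pi_{K_A}$. The target $\zeta$ has $j$-block $\zeta_j=|J|^{-1}\pi_{L_A}\otimes\psi_R^{\otimes n}$. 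This block is purified by
\begin{align}
 |\Xi_j\rangle:=|J|^{-1/2}\,|\Phi_L\rangle_{L_AL_B}\otimes|\psi\rangle^{\otimes n}_{\widehat A BR},
\end{align}
where $L_B\widehat A^nB^n$ serves as the purifying system held by Bob.

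\emph{Step 2: per-outcome Uhlmann.} By Uhlmann's theorem, applied to unnormalized operators, there is a partial isometry, which I extend to an isometry $V_j:B^nK_B\to L_B\widehat A^nB^n$ (after padding Bob's output space if needed), such that
\begin{align}
 |\langle\Xi_j|(I\otimes V_j)|\Psi_j\rangle|=\norm{\sqrt{\omega_j}\sqrt{\zeta_j}}_1.
\end{align}
Bob applies $\mathcal{D}_j(\cdot)=V_j(\cdot)V_j^\dagger$ on receiving $j$. The protocol output is $\sum_j (I\otimes V_j)\Psi_j(I\otimes V_j)^\dagger$, with $J$ discarded as in \eqref{eq:merging-output}. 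Its squared fidelity with the pure target $\Phi_L\otimes\psi^{\otimes n}$ is
\begin{align}
 \sum_j|\langle\Phi_L\otimes\psi^{\otimes n}|(I\otimes V_j)|\Psi_j\rangle|^2
 =|J|\sum_j\norm{\sqrt{\omega_j}\sqrt{\zeta_j}}_1^2.
\end{align}

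\emph{Step 3: comparison with $F(\omega,\zeta)$.} Both $\omega$ and $\zeta$ are block diagonal in $J$ and normalized, so
\begin{align}
 \sqrt{F(\omega,\zeta)}=\sum_j\norm{\sqrt{\omega_j}\sqrt{\zeta_j}}_1 .
\end{align}
Cauchy--Schwarz over the $|J|$ outcomes then gives
\begin{align}
 \Bigl(\sum_j x_j\Bigr)^2\leq|J|\sum_j x_j^2,
\end{align}
so the merging squared fidelity is at least $F(\omega,\zeta)$. This step is the main point to get right. The sum of per-outcome squared overlaps must be compared with the square of the summed overlap, and it is the uniform weight $\pi_J$ in the target that produces exactly the factor $|J|$. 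The Schmidt ranks are $|K_A|$ initially and $|L_A|$ at the end, as required.

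\emph{Embedded outputs.} If the physical output is isometrically embedded into a larger declared space $L_A$, I compose each $A_j$ with the embedding $\iota$. This is again a one-Kraus instrument into the declared space, so the argument above applies verbatim with $|\Phi_L\rangle$ of the declared rank, and no new estimate is needed.
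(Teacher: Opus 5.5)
Your proposal is correct and follows the paper's proof essentially step by step: per-outcome Uhlmann isometries toward the branch target $|J|^{-1/2}\Phi_{L}\otimes\psi^{\otimes n}$, the factor $|J|$ from renormalizing each branch, and Cauchy--Schwarz to compare $|J|\sum_j f_j^2$ with $(\sum_j f_j)^2=F(\omega,\zeta)$. One cosmetic point: when Bob's output must be padded with an environment that is later discarded, your displayed equality for the merging fidelity becomes an inequality $\geq$, which is still the direction you need.
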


\begin{proof}
	Let $N=|J|$ and write the block-diagonal states as
	\begin{align}
		\omega=\sum_{j=1}^N\ket{j}\!\bra{j}\otimes\omega_{L_AR^n}^j,
		\qquad
		\zeta=\sum_{j=1}^N\ket{j}\!\bra{j}\otimes
		\frac{1}{N}\pi_{L_A}\otimes\psi_R^{\otimes n}.
		\label{eq:controlled-Uhlmann-blocks}
	\end{align}
	Set
	\begin{align}
		f_j:=\norm{\sqrt{\omega^j}
			\sqrt{N^{-1}\pi_{L_A}\otimes\psi_R^{\otimes n}}}_1.
		\label{eq:branch-fidelity}
	\end{align}
	Block diagonality gives
	\begin{align}
		\sqrt{F(\omega,\zeta)}=\sum_{j=1}^Nf_j.
		\label{eq:block-fidelity-sum}
	\end{align}
	Because outcome $j$ is represented by a single Kraus operator, the corresponding subnormalized global post-measurement vector purifies $\omega^j$, and its complementary system is held by Bob.  Uhlmann's theorem~\cite{uhlmann1976transition} therefore provides an isometry on Bob's system whose overlap with
	$N^{-1/2}\Phi_{L_AL_B}\otimes\psi_{\widehat A BR}^{\otimes n}$ is $f_j$.  The branch target has squared norm $1/N$, so the overlap with the corresponding normalized ideal branch is $\sqrt{N}f_j$; summing the squared branch overlaps produces the factor $N$ below.  After the outcome register and decoder environments are discarded, the fidelity with the pure ideal target is at least
	\begin{align}
		N\sum_{j=1}^Nf_j^2
		\geq\left(\sum_{j=1}^Nf_j\right)^2
		=F(\omega,\zeta),
		\label{eq:controlled-Uhlmann-proof}
	\end{align}
	where the inequality follows from the Cauchy--Schwarz inequality.
\end{proof}

\subsection{Double pinching}

The construction in this subsection adapts the double-pinching method of Rubboli and Tomamichel~\cite{rubboli2026strong}, itself embedded in the Log-Euclidean strong-converse strategy of Mosonyi and Ogawa~\cite{mosonyi2017strong}.  The new point required here is to preserve the quantum marginal relevant to state merging while simultaneously commuting the joint state, its marginal, and the auxiliary universal state.

Restrict $R$ to $\supp(\rho_R)$ and write $d_R:=|R|$.  For each block size $m$, let $\omega_{R^m}$ be the universal state in Proposition~\ref{prop:universal-state}, and define
\begin{align}
 \widetilde\rho_{A^mR^m}^{(m)}
 :=\left(\mathrm{id}_{A^m}\otimes
 \P_{\rho_R^{\otimes m}}\circ\P_{\omega_{R^m}}\right)
 (\rho_{AR}^{\otimes m}).
 \label{eq:double-pinched-state}
\end{align}
The state $\widetilde\rho^{(m)}$ is invariant under simultaneous permutations of the $m$ copies.  Since $\omega_{R^m}$ commutes with $\rho_R^{\otimes m}$, their spectral projections commute, and therefore the pinching maps $\P_{\omega_{R^m}}$ and $\P_{\rho_R^{\otimes m}}$ commute.  Consequently,
\begin{align}
 \widetilde\rho_{R^m}^{(m)}=\rho_R^{\otimes m},
 \label{eq:double-pinched-marginal}
\end{align}
and the three operators
\begin{align}
 \widetilde\rho_{A^mR^m}^{(m)},
 \qquad I_{A^m}\otimes\rho_R^{\otimes m},
 \qquad I_{A^m}\otimes\omega_{R^m}
 \label{eq:three-commuting-operators}
\end{align}
commute pairwise.

\begin{lemma}[Log-Euclidean-to-club-sandwiched comparison]
\label{lem:double-pinching-entropy}
Let $\alpha\in[1/2,1)$ and
$\frac{1-2\alpha}{1-\alpha}\leq\lambda\leq0$.  Then
\begin{align}
 H_{\alpha,\infty}^{\lambda}(A^m|R^m)_{\widetilde\rho^{(m)}}
 \geq
 m\widetilde H_\alpha^\lambda(A|R)_\rho
 +\lambda(d_R^2-1)\log(m+1).
 \label{eq:double-pinching-entropy}
\end{align}
\end{lemma}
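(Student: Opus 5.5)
The plan is to reduce the Log-Euclidean quantity for the double-pinched state to a commuting expression, then compare it with the club-sandwiched quantity for $\rho_{AR}^{\otimes m}$. We will use the pinching inequality, the universal-state domination, and additivity.

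\textbf{Step 1 (restrict the optimizer).} By Proposition~\ref{prop:universal-state}, $\widetilde\rho^{(m)}$ is invariant under simultaneous permutations. Hence the minimizer $\sigma_{R^m}$ in \eqref{eq:LE-conditional} can be taken permutation-invariant. For such $\sigma_{R^m}$, \eqref{eq:universal-domination} gives $\sigma_{R^m}\leq(m+1)^{d_R^2-1}\omega_{R^m}$. Since $\lambda\leq0$, the map $\sigma\mapsto\lambda\log\sigma$ is operator antimonotone. This yields
\begin{align*}
\lambda\log\sigma_{R^m}\geq\lambda\log\omega_{R^m}+\lambda(d_R^2-1)\log(m+1).
\end{align*}
The function $X\mapsto\Tr\exp(H+X)$ is monotone, so the objective for $\sigma$ is at least the objective for $\omega_{R^m}$. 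After the prefactor $\frac{1}{1-\alpha}$ and the weight $(1-\alpha)$ cancel, the loss is exactly $\lambda(d_R^2-1)\log(m+1)$. This gives
\begin{align*}
H_{\alpha,\infty}^\lambda(A^m|R^m)_{\widetilde\rho^{(m)}}\geq\frac{1}{1-\alpha}\log\Tr\Big[\exp\big(\alpha\log\widetilde\rho^{(m)}+(1-\alpha)((1-\lambda)\log\rho_R^{\otimes m}+\lambda\log\omega_{R^m})\big)\Big]+\lambda(d_R^2-1)\log(m+1).
\end{align*}
The projection $Q$ is harmless because everything commutes on the support.

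\textbf{Step 2 (commutative evaluation).} By \eqref{eq:three-commuting-operators}, all three operators commute. The exponential therefore factorizes as
\begin{align*}
(\widetilde\rho^{(m)})^{\alpha}\,(\rho_R^{\otimes m})^{(1-\lambda)(1-\alpha)}\,\omega_{R^m}^{\lambda(1-\alpha)}.
\end{align*}
This trace equals the club-sandwiched trace functional of \eqref{eq:club-entropy} evaluated at $\widetilde\rho^{(m)}$ with $\sigma=\omega_{R^m}$, because in the commuting case $(\rho^{1/2}X\rho^{1/2})^\alpha=\rho^\alpha X^\alpha$. Writing $\Sigma:=(\rho_R^{\otimes m})^{\frac{1-\lambda}{2}\frac{1-\alpha}{\alpha}}\omega_{R^m}^{\lambda\frac{1-\alpha}{\alpha}}(\rho_R^{\otimes m})^{\frac{1-\lambda}{2}\frac{1-\alpha}{\alpha}}$, the trace is
\begin{align*}
\Tr[((\widetilde\rho^{(m)})^{1/2}\Sigma(\widetilde\rho^{(m)})^{1/2})^\alpha].
\end{align*}

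\textbf{Step 3 (remove the pinching).} We need to show that this trace is at least the trace with $\widetilde\rho^{(m)}$ replaced by $\rho_{AR}^{\otimes m}$. Then the right-hand side is at least $\widetilde H^\lambda_\alpha(A^m|R^m)_{\rho^{\otimes m}}$, which equals $m\widetilde H^\lambda_\alpha(A|R)_\rho$ by additivity in the stated range~\cite{rubboli2024quantum}. The prefactor $\frac{1}{1-\alpha}>0$ preserves the direction.

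This is the main obstacle. Since $\Sigma$ commutes with both pinchings, rewrite the trace as $\Tr[(\Sigma^{1/2}\rho\,\Sigma^{1/2})^\alpha]$, with $\rho$ either $\rho_{AR}^{\otimes m}$ or $\widetilde\rho^{(m)}$. Also $\Sigma^{1/2}\P(\rho)\Sigma^{1/2}=\P(\Sigma^{1/2}\rho\,\Sigma^{1/2})$, where $\P=\P_{\rho_R^{\otimes m}}\circ\P_{\omega_{R^m}}$ is a unital pinching acting on the $R^m$ factor. For $\alpha\in(0,1)$, $\Tr[f(\cdot)]$ with $f(x)=x^\alpha$ concave satisfies $\Tr f(\P(X))\geq\Tr f(X)$, because pinching is a mixture of unitary conjugations. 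This gives the desired monotonicity with no loss.

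If the identification in Step 2 is wrong by a support subtlety, I would instead invoke data processing of $\widetilde H^\lambda_\alpha$ under $\P$ on the conditioning system, valid by~\cite[Thm.~5.1]{rubboli2024quantum}. That gives $\widetilde H^\lambda_\alpha(A^m|R^m)_{\widetilde\rho}\geq\widetilde H^\lambda_\alpha(A^m|R^m)_{\rho^{\otimes m}}$. I would then bound the $\omega$-evaluated expression below by the minimum over $\sigma$.
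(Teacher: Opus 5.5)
Your proof is correct, and Steps 1 and 2 are the same as the paper's: a permutation-invariant optimizer, domination by the universal state with $\lambda\le0$, trace monotonicity of $\exp$, and the commutative identification of the Log-Euclidean and club-sandwiched traces at $\sigma=\omega_{R^m}$.

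Step 3 takes a different route. The paper first minimizes over $\sigma$ for the pinched state, which gives the intermediate bound \eqref{eq:LE-to-club-pinched} in terms of $\widetilde H_\alpha^\lambda(A^m|R^m)_{\widetilde\rho^{(m)}}$. It then invokes the general data-processing inequality for the club-sandwiched entropy under channels on the conditioning system \cite[Thm.~5.1]{rubboli2024quantum}, which is exactly your fallback.

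You instead keep $\sigma=\omega_{R^m}$ fixed throughout. Your argument runs as follows:
\begin{itemize}
\item $\Sigma$ is a function of the commuting pair $\rho_R^{\otimes m},\omega_{R^m}$, so it commutes with every joint spectral projection of the double pinching.
\item Hence $\Sigma^{1/2}\widetilde\rho^{(m)}\Sigma^{1/2}=\P(\Sigma^{1/2}\rho_{AR}^{\otimes m}\Sigma^{1/2})$.
\item Because $\P$ fixes $\rho_R^{\otimes m}$, the same $\Sigma$ appears on both sides.
\item Concavity of $x\mapsto x^\alpha$, with pinching written as a mixture of unitary conjugations, gives the monotonicity of the trace directly.
\item You then lower-bound the fixed-$\omega_{R^m}$ value by the minimum over $\sigma$ for $\rho_{AR}^{\otimes m}$ and apply additivity.
\end{itemize}

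What your route buys: it replaces the data-processing theorem with an elementary majorization step. That step works precisely because both the auxiliary state and the marginal are invariant under the pinching. The constraint $\lambda\ge\frac{1-2\alpha}{1-\alpha}$ is then used only through additivity.

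What the paper's route buys: it is more modular. It bounds the Log-Euclidean quantity by the pinched state's own club-sandwiched entropy, and then passes to $\rho^{\otimes m}$ by a general, cited property rather than a construction-specific computation.
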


\begin{proof}
By Proposition~\ref{prop:universal-state}, an optimizer $\sigma_{R^m}^\star$ in the Log-Euclidean conditional entropy can be chosen permutation-invariant and hence satisfies
\begin{align}
 \sigma_{R^m}^\star\leq(m+1)^{d_R^2-1}\omega_{R^m}.
 \label{eq:optimizer-domination}
\end{align}
Operator monotonicity of the logarithm and $\lambda\leq0$ imply
\begin{align}
 \lambda\log\sigma_{R^m}^\star
 \geq\lambda\log\omega_{R^m}
 +\lambda(d_R^2-1)\log(m+1)I_{R^m}.
 \label{eq:negative-lambda-comparison}
\end{align}
Trace exponential is monotone in the L\"owner order: $X\leq Y$ implies $\Tr\exp X\leq\Tr\exp Y$~\cite{bhatia1997matrix}.  This trace monotonicity, rather than operator monotonicity of the matrix exponential, allows us to replace the optimizer by $\omega_{R^m}$ in \eqref{eq:LE-conditional}, with the additive correction shown in \eqref{eq:double-pinching-entropy}.  Because the three operators in \eqref{eq:three-commuting-operators} commute, simultaneous diagonalization gives the common Log-Euclidean/club-sandwiched trace
\begin{align}
 \Tr\!\left[
 \left(
 (\widetilde\rho^{(m)})^{1/2}
 \left(I_{A^m}\otimes
 (\rho_R^{\otimes m})^{\frac{(1-\lambda)(1-\alpha)}{2\alpha}}
 \omega_{R^m}^{\frac{\lambda(1-\alpha)}{\alpha}}
 (\rho_R^{\otimes m})^{\frac{(1-\lambda)(1-\alpha)}{2\alpha}}
 \right)
 (\widetilde\rho^{(m)})^{1/2}
 \right)^\alpha
 \right].
 \label{eq:commuting-LE-club-trace}
\end{align}
Thus the Log-Euclidean expression evaluated at this candidate equals the corresponding club-sandwiched expression.  Minimizing the latter over all states on $R^m$ gives
\begin{align}
 H_{\alpha,\infty}^{\lambda}(A^m|R^m)_{\widetilde\rho^{(m)}}
 \geq
 \widetilde H_\alpha^\lambda(A^m|R^m)_{\widetilde\rho^{(m)}}
 +\lambda(d_R^2-1)\log(m+1).
 \label{eq:LE-to-club-pinched}
\end{align}
Both pinchings act on the conditioning system.  Data processing and additivity of the club-sandwiched conditional entropy now yield
\begin{align}
 \widetilde H_\alpha^\lambda(A^m|R^m)_{\widetilde\rho^{(m)}}
 \geq
 \widetilde H_\alpha^\lambda(A^m|R^m)_{\rho^{\otimes m}}
 =m\widetilde H_\alpha^\lambda(A|R)_\rho,
 \label{eq:club-DPI-additivity}
\end{align}
which proves the claim.  The case of singular optimizers follows by the standard full-rank regularization used in Ref.~\cite{rubboli2026strong}.
\end{proof}

\begin{lemma}[Operational insertion of the double pinching]
\label{lem:operational-pinching}
Let $\mathcal{A}_{m,k}$ be any Alice-side instrument on $A^{mk}K_A$, and let
\begin{align}
 \zeta_{m,k}:=\pi_J\otimes\pi_L\otimes\rho_R^{\otimes mk}.
 \label{eq:block-fixed-target}
\end{align}
Then
\begin{align}
 &-\log F\!\left(
 (\mathcal{A}_{m,k}\otimes\mathrm{id}_{R^{mk}})
 (\rho_{AR}^{\otimes mk}\otimes\pi_{K_A}),
 \zeta_{m,k}\right)
 \nonumber\\
 &\quad\leq
 -\log F\!\left(
 (\mathcal{A}_{m,k}\otimes\mathrm{id}_{R^{mk}})
 ((\widetilde\rho^{(m)})^{\otimes k}\otimes\pi_{K_A}),
 \zeta_{m,k}\right)
 +2k(d_R-1)\log(m+1).
 \label{eq:operational-pinching}
\end{align}
\end{lemma}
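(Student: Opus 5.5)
We will apply Proposition~\ref{prop:fidelity-pinching} twice, once for each pinching. The key structural fact is that the instrument $\mathcal{A}_{m,k}$ acts only on $A^{mk}K_A$. Any pinching acting only on $R^{mk}$ therefore commutes with $\mathcal{A}_{m,k}\otimes\mathrm{id}_{R^{mk}}$. Concretely, set
\begin{align}
 \P_1:=\mathrm{id}\otimes\P_{\omega_{R^m}}^{\otimes k},
 \qquad
 \P_2:=\mathrm{id}\otimes\P_{\rho_R^{\otimes m}}^{\otimes k},
\end{align}
where both act on $R^{mk}=(R^m)^k$ and trivially on $J L_A$. Then
\begin{align}
 (\mathcal{A}_{m,k}\otimes\mathrm{id})\bigl((\widetilde\rho^{(m)})^{\otimes k}\otimes\pi_{K_A}\bigr)
 =\P_2\circ\P_1\Bigl((\mathcal{A}_{m,k}\otimes\mathrm{id})(\rho_{AR}^{\otimes mk}\otimes\pi_{K_A})\Bigr).
\end{align}
Each tensor power of pinchings is itself a pinching. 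Its projections are the products of the single-block spectral projections. By \eqref{eq:spectral-cardinality}, it has at most $(m+1)^{k(d_R-1)}$ projections.

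Next we check that the target is fixed. We have $\zeta_{m,k}=\pi_J\otimes\pi_L\otimes(\rho_R^{\otimes m})^{\otimes k}$. The operator $\rho_R^{\otimes m}$ commutes with its own spectral projections, and by Proposition~\ref{prop:universal-state} it also commutes with $\omega_{R^m}$, hence with the spectral projections of $\omega_{R^m}$. Therefore $\P_1(\zeta_{m,k})=\zeta_{m,k}$ and $\P_2(\zeta_{m,k})=\zeta_{m,k}$.

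Write $\eta_0$ for the unpinched output, $\eta_1:=\P_1(\eta_0)$, and $\eta_2:=\P_2(\eta_1)$. All three are normalized states. Applying Proposition~\ref{prop:fidelity-pinching} first to $(\eta_0,\P_1)$ and then to $(\eta_1,\P_2)$ gives
\begin{align}
 -\log F(\eta_0,\zeta)
 &\leq -\log F(\eta_1,\zeta)+k(d_R-1)\log(m+1)\nonumber\\
 &\leq -\log F(\eta_2,\zeta)+2k(d_R-1)\log(m+1).
\end{align}
The first step uses $\P_1(\zeta)=\zeta$ and the second uses $\P_2(\zeta)=\zeta$. This is exactly \eqref{eq:operational-pinching}.

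The main point needing care is the bookkeeping of the pinching sizes. Pinching with respect to $(\rho_R^{\otimes m})^{\otimes k}$ would have far fewer projections than the product pinching. However, the statement is about the blockwise product pinching $\P_{\rho_R^{\otimes m}}^{\otimes k}$, so we must count its projections, not the eigenvalues of the $mk$-fold tensor power. The product count $\bigl((m+1)^{d_R-1}\bigr)^k$ is an upper bound on that number, and the proof of Proposition~\ref{prop:fidelity-pinching} holds for any pinching $\sum_i P_i\cdot P_i$ with $v$ orthogonal projections summing to the identity. Hence the bound applies as stated.

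We also need the commutation of the pinchings with $\mathcal{A}_{m,k}\otimes\mathrm{id}$. This holds because the instrument acts only on $A^{mk}K_A$ while the pinchings act only on $R^{mk}$, and it holds even after the outcomes are summed with the classical register $J$.
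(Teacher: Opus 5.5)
Your proposal is correct and follows essentially the same argument as the paper. You commute the blockwise pinchings $\P_{\omega_{R^m}}^{\otimes k}$ and $\P_{\rho_R^{\otimes m}}^{\otimes k}$ through the instrument, use that $\zeta_{m,k}$ is invariant under both because $\omega_{R^m}$ commutes with $\rho_R^{\otimes m}$, and apply Proposition~\ref{prop:fidelity-pinching} twice with the projection count $(m+1)^{k(d_R-1)}$ from \eqref{eq:spectral-cardinality}. Your remarks on the product pinching only make explicit what the paper leaves implicit: you count its projections rather than the spectrum of $\rho_R^{\otimes mk}$, and you note that the proposition needs only an orthogonal resolution of the identity.
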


\begin{proof}
The instrument acts trivially on $R^{mk}$ and therefore commutes with both blockwise pinchings.  The target in \eqref{eq:block-fixed-target} is invariant under both pinchings by \eqref{eq:double-pinched-marginal} and the commutation of $\rho_R^{\otimes m}$ with $\omega_{R^m}$.  Apply Proposition~\ref{prop:fidelity-pinching} first to $\P_{\omega_{R^m}}^{\otimes k}$ and then to $\P_{\rho_R^{\otimes m}}^{\otimes k}$.  Equation \eqref{eq:spectral-cardinality} bounds the loss in each application by $k(d_R-1)\log(m+1)$.
\end{proof}

\subsection{The achievability proof}

\begin{theorem}[Upper bound on the strong converse exponent]
\label{thm:achievability}
For every pure $\psi_{ABR}$ and every $r\in\mathbb{R}$,
\begin{align}
 \limsup_{n\to\infty}-\frac{1}{n}\log F_n^{\mathrm{merg}}(\psi,r)
 \leq
 \max_{\alpha\in[1,2]}\frac{2(\alpha-1)}{\alpha}
 \left[H_{\alpha,\alpha/2}^{\uparrow}(A|B)_\psi-r\right].
 \label{eq:achievability-exponent}
\end{align}
\end{theorem}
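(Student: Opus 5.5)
We will prove the bound by assembling the tools of this section in a double-blocking scheme, with block size $m$ fixed first and the number of blocks $k\to\infty$ afterwards. By Proposition~\ref{prop:club-duality} and the substitution $\beta=2\alpha/(3\alpha-1)$ used in the proof of Theorem~\ref{thm:converse}, the right-hand side of \eqref{eq:achievability-exponent} equals $\max_{\alpha\in[1/2,1]}\frac{1-\alpha}{\alpha}[q-\widetilde H_\alpha^{\frac{1-2\alpha}{1-\alpha}}(A|R)_\psi]$ with $q=-r$. It therefore suffices to achieve this club-sandwiched expression, where we write $\rho_{AR}:=\psi_{AR}$.

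\textbf{Block construction.} Fix $m$ and set $\widetilde\rho^{(m)}$ as in \eqref{eq:double-pinched-state}. Apply Proposition~\ref{prop:LE-fixed-output} to the state $\widetilde\rho^{(m)}$ on $A^mR^m$ with per-block gain $mq$ and blocklength $k$. This gives instruments $\mathcal A_{m,k}$ with gain at least $kmq$ and
\[
\limsup_{k\to\infty}-\tfrac1k\log F\big((\mathcal A_{m,k}\otimes\mathrm{id})((\widetilde\rho^{(m)})^{\otimes k}\otimes\pi_{K_A}),\zeta_{m,k}\big)\le \max_{\alpha\in[1/2,1]}\tfrac{1-\alpha}{\alpha}\big[mq-H_{\alpha,\infty}^{\frac{1-2\alpha}{1-\alpha}}(A^m|R^m)_{\widetilde\rho^{(m)}}\big].
\]
The target here is correct because $\widetilde\rho^{(m)}_{R^m}=\rho_R^{\otimes m}$ by \eqref{eq:double-pinched-marginal}. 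Lemma~\ref{lem:operational-pinching} then transfers this bound to the true input $\rho_{AR}^{\otimes mk}$, at an additive cost of $2k(d_R-1)\log(m+1)$. Lemma~\ref{lem:double-pinching-entropy}, applied with $\lambda=\frac{1-2\alpha}{1-\alpha}\in[-1,0]$, bounds the Log-Euclidean entropy from below by $m\widetilde H_\alpha^\lambda(A|R)_\rho-(d_R^2-1)\log(m+1)$, uniformly in $\alpha$. Since $\frac{1-\alpha}{\alpha}\le1$, dividing by $mk$ gives
\[
\limsup_{k}-\tfrac{1}{mk}\log F\le \max_{\alpha}\tfrac{1-\alpha}{\alpha}\big[q-\widetilde H_\alpha^{\lambda}(A|R)_\rho\big]+\tfrac{(d_R^2-1+2(d_R-1))\log(m+1)}{m}.
\]

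\textbf{Operational conversion.} Lemma~\ref{lem:controlled-Uhlmann} turns each $\mathcal A_{m,k}$, which has one Kraus operator per outcome (inherited from Proposition~\ref{prop:LE-fixed-output}), into a merging protocol on $\psi^{\otimes mk}$. This protocol has squared fidelity at least the fixed-output fidelity and net cost $\log K-\log L\le -kmq=kmr$. For blocklengths $n$ that are not multiples of $m$, write $n=mk+s$ with $0\le s<m$. On the $s$ leftover copies we use a trivial protocol: teleport Alice's $A^s$ using $s\log|A|$ extra ebits, or equivalently absorb this cost into the rate slack. The fidelity then factorizes, but the cost increases by $O(m)$. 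To keep the cost constraint $c\le nr$, we run the block scheme at a slightly smaller rate, $q+\delta$ in place of $q$, which is legitimate since the gain is allowed to exceed the requirement. The exponent expression is continuous in $q$, so we let $\delta\downarrow0$ at the end. Finally we take $\limsup_n$, then $m\to\infty$, then $\delta\to0$, which yields \eqref{eq:achievability-exponent}.

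\textbf{Main obstacle.} The step we expect to be most delicate is the uniformity of the Proposition~\ref{prop:LE-fixed-output} limit and the rounding of the cost. The gain in that proposition is only guaranteed asymptotically in $k$, so for each fixed $m$ the cost overhead must be handled as $o(k)$. The interleaving of the $\limsup$ over $n$ with the subsequent $m\to\infty$ must also be justified. We will handle both by a diagonal choice, as in the proof of Proposition~\ref{prop:LE-fixed-output}. Along the way we must verify that maximizing over $\alpha$ commutes with the $m$-dependent correction; this holds because the correction $\frac{(d_R^2-1+2(d_R-1))\log(m+1)}{m}$ is uniform in $\alpha$.
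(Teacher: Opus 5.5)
Your architecture matches the paper's: the duality reduction, Proposition~\ref{prop:LE-fixed-output} on $k$ copies of $\widetilde\rho^{(m)}$ at gain $mq$, Lemmas~\ref{lem:operational-pinching}, \ref{lem:double-pinching-entropy} and~\ref{lem:controlled-Uhlmann}, and teleportation of the leftover copies. One step, however, is justified incorrectly.

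The boundary parameter $\lambda=\frac{1-2\alpha}{1-\alpha}$ is not in $[-1,0]$. It sweeps $(-\infty,0]$ as $\alpha$ runs over $[1/2,1)$; for example, $\lambda=-2$ at $\alpha=3/4$. So for $\alpha>2/3$, Lemma~\ref{lem:double-pinching-entropy} does not give $H_{\alpha,\infty}^{\lambda}(A^m|R^m)_{\widetilde\rho^{(m)}}\geq m\widetilde H_\alpha^\lambda(A|R)_\rho-(d_R^2-1)\log(m+1)$. The loss it does give, $\lambda(d_R^2-1)\log(m+1)$, blows up as $\alpha\nearrow1$, so it is not uniform in $\alpha$. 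What is uniform is the loss after multiplying by the prefactor: $\frac{1-\alpha}{\alpha}(-\lambda)=\frac{2\alpha-1}{\alpha}\leq1$. This is the paper's \eqref{eq:boundary-loss}, and with that one-line fix you recover exactly the correction $(d_R^2+2d_R-3)\log(m+1)/m$ that you wrote.

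The genuine difference from the paper is the rate bookkeeping at $n=mk+s$. You run the block code at gain $q+\delta$, so the slack $mk\delta$ absorbs the deficit $s(\log|A|-r)\leq m(\log|A|+|r|)$ once $k$ is large. You then let $\delta\downarrow0$, using that the club-sandwiched exponent is $1$-Lipschitz in $q$ because $\frac{1-\alpha}{\alpha}\leq1$. This is valid.

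The paper instead keeps gain $q$ and repairs the rate exactly. It embeds the final entanglement registers into ones larger by a factor $M_n$ with $\log M_n\leq m(|q|+\log|A|)+1$. By \eqref{eq:entanglement-embedding-overlap}, this costs only the $k$-independent fidelity factor $M_n^{-1}$. The paper's route avoids an extra limit and any continuity argument in $q$; yours avoids the embedding computation.

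The obstacles you anticipate are not real. Proposition~\ref{prop:LE-fixed-output} guarantees gain at least $nq$ for every $n$, not only asymptotically. No diagonal construction is needed either: the left-hand side of \eqref{eq:achievability-exponent} depends on neither $m$ nor $\delta$. You may therefore let $m\to\infty$ and $\delta\downarrow0$ directly in the bound on the $\limsup$.
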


\begin{proof}
	Set $\rho_{AR}=\psi_{AR}$ and $q=-r$.  Fix $m$ and apply Proposition~\ref{prop:LE-fixed-output} to $k$ copies of the block state $\widetilde\rho^{(m)}$ at coherent gain $mq$ per block.  Run the resulting instrument on the original source $\rho_{AR}^{\otimes mk}$, apply Lemma~\ref{lem:operational-pinching}, and then apply Lemma~\ref{lem:controlled-Uhlmann} to a purification of the original source.  This gives a state-merging protocol of cost at most $mkr$ and
	\begin{align}
		&\limsup_{k\to\infty}-\frac{1}{mk}\log F_{mk}^{\mathrm{merg}}(\psi,r)
		\nonumber\\
		&\quad\leq
		\max_{\alpha\in[1/2,1]}
		\frac{1-\alpha}{\alpha}
		\left[q-\frac{1}{m}
		H_{\alpha,\infty}^{\frac{1-2\alpha}{1-\alpha}}
		(A^m|R^m)_{\widetilde\rho^{(m)}}\right]
		+\frac{2(d_R-1)\log(m+1)}{m}.
		\label{eq:block-LE-achievability}
	\end{align}
	
	Apply Lemma~\ref{lem:double-pinching-entropy} along the boundary path, with the endpoint $\alpha=1$ understood by one-sided continuity.  Since
	\begin{align}
		\frac{1-\alpha}{\alpha}
		\left(-\frac{1-2\alpha}{1-\alpha}\right)
		=\frac{2\alpha-1}{\alpha}\leq1,
		\label{eq:boundary-loss}
	\end{align}
	we obtain
	\begin{align}
		&\limsup_{k\to\infty}-\frac{1}{mk}\log F_{mk}^{\mathrm{merg}}(\psi,r)
		\nonumber\\
		&\quad\leq
		\max_{\alpha\in[1/2,1]}
		\frac{1-\alpha}{\alpha}
		\left[q-
		\widetilde H_\alpha^{\frac{1-2\alpha}{1-\alpha}}(A|R)_\rho\right]
		+\frac{(d_R^2+2d_R-3)\log(m+1)}{m}.
		\label{eq:multiples-of-m-bound}
	\end{align}
	
	It remains to define protocols for every blocklength while enforcing the rate without slack.  Write $n=mk+t$ with $0\leq t<m$.  Use the block protocol on the first $mk$ copies and teleport the remaining $t$ copies exactly, consuming $t\log|A|$ ebits.  If the block protocol has initial Schmidt rank $K_{mk}$ and final Schmidt rank $L_{mk}$, the resulting preliminary protocol has ranks
	\begin{align}
		K'_n&:=K_{mk}|A|^t,
		&L'_n&:=L_{mk},
		\label{eq:preliminary-Schmidt-ranks}
	\end{align}
	and hence actual preliminary gain
	\begin{align}
		g'_n:=\log L'_n-\log K'_n\geq mkq-t\log|A|.
		\label{eq:preliminary-gain}
	\end{align}
	Define the smallest enlargement required by the actual preliminary gain,
	\begin{align}
		M_n:=\left\lceil
		2^{[nq-g'_n]_+}
		\right\rceil.
		\label{eq:rate-repair-factor}
	\end{align}
	Take the declared final entanglement registers on both Alice's and Bob's sides to have dimension $L_n:=M_nL'_n$.  Let $V_A$ and $V_B$ be the canonical isometries that embed the $L'_n$-dimensional preliminary final registers into the first $L'_n$ basis vectors of the corresponding $L_n$-dimensional registers.  Applying $V_A\otimes V_B$ to the preliminary output gives
	\begin{align}
		\log L_n-\log K'_n
		\geq nq,
		\label{eq:repaired-gain}
	\end{align}
	so the exact rate constraint is satisfied.  Moreover,
	\begin{align}
		\left\langle\Phi_{L_n}\middle|
		(V_A\otimes V_B)\middle|\Phi_{L'_n}\right\rangle
		=M_n^{-1/2}.
		\label{eq:entanglement-embedding-overlap}
	\end{align}
	Because the embedded output is supported on the ranges of $V_A\otimes V_B$, its squared fidelity with the enlarged target is therefore exactly $M_n^{-1}$ times its squared fidelity with the preliminary target.  Teleportation of the remainder is exact, so it introduces no additional fidelity loss.  Finally, \eqref{eq:preliminary-gain} implies
	\begin{align}
		[nq-g'_n]_+
		&\leq[t(q+\log|A|)]_+
		\leq m\big(|q|+\log|A|\big),
		\nonumber\\
		\log M_n
		&\leq m\big(|q|+\log|A|\big)+1.
		\label{eq:rate-repair-loss}
	\end{align}
	For fixed $m$, the remainder $t<m$ and the logarithmic penalty in \eqref{eq:rate-repair-loss} are bounded independently of $k$.  Thus $mk/n\to1$ and $n^{-1}\log M_n\to0$ as $k\to\infty$, and neither the asymptotic rate nor the fidelity exponent is changed.  Letting $m\to\infty$ in \eqref{eq:multiples-of-m-bound}, substituting $q=-r$, and using Proposition~\ref{prop:club-duality} together with \eqref{eq:parameter-conversion} yields \eqref{eq:achievability-exponent}.
\end{proof}

\begin{proof}[Proof of Theorem~\ref{thm:main}]
	Theorem~\ref{thm:converse} lower-bounds the liminf and
	Theorem~\ref{thm:achievability} upper-bounds the limsup by the same
	quantity.  Therefore the limit exists and
	\eqref{eq:main-conditional-entropy} follows.
	
	If $r<H(A|B)_\psi=-H(A|R)_\psi$, then $q=-r>H(A|R)_\psi$.
	By the one-sided limit in \eqref{eq:club-limit}, the objective in
	\eqref{eq:main-club-form} is positive for some $\alpha<1$ sufficiently
	close to one.  Conversely, if $r\geq H(A|B)_\psi$, then
	\eqref{eq:club-lower-bound} and purity imply
	\begin{align}
		-r-
		\widetilde H_\alpha^{\frac{1-2\alpha}{1-\alpha}}(A|R)_\psi
		\leq-r-H(A|R)_\psi\leq0
		\label{eq:positivity-converse}
	\end{align}
	for every $\alpha<1$, while the endpoint $\alpha=1$ contributes zero.
	This proves \eqref{eq:positive-region}.
\end{proof}

\section{Conclusion and discussion}
\label{sec:conclusion}

We have determined the exact strong converse exponent of quantum state
merging for every finite-dimensional pure state and every net
entanglement-cost rate.  The exponent is positive precisely when the
rate lies below the conditional entropy $H(A|B)_\psi$, thereby
quantifying the exponential decay of the optimal squared fidelity
throughout the strong converse regime.
On the $AB$ system, the exponent is governed by the optimized $\alpha$-$z$ conditional
R\'enyi entropy along the path $z=\alpha/2$, with
$\alpha\in[1,2]$.  This identifies a genuinely two-parameter $\alpha$-$z$ quantity
as the large-deviation rate governing a fundamental quantum
information-processing task.
Equivalently, on the purifying $AR$ system, the exponent is governed by the club-sandwiched
conditional entropy of Ref.~\cite{rubboli2024quantum}.

We also identified the exact strong converse exponent of the partially smoothed conditional min-entropy for every finite-dimensional bipartite state.  
Since partial smoothing imposes the additional constraint $\tau_R\leq\rho_R$, one has $H_{\min}^{\varepsilon,P}(A|\dot R)_{\rho} \leq H_{\min}^{\varepsilon,P}(A|R)_{\rho}$.
This makes a
given conditional min-entropy threshold for $H_{\min}^{\varepsilon,P}(A|\dot R)_{\rho}$ harder to attain.  
The same
ordering appears directly in the single-letter expressions for the
strong converse exponents.  Indeed, choosing $\sigma_R=\rho_R$ in the
variational definition of the club-sandwiched conditional
entropy gives
\begin{align}
	\widetilde H_\alpha^{\frac{1-2\alpha}{1-\alpha}}
	(A|R)_\rho
	\leq
	\widetilde H_\alpha^{\downarrow}(A|R)_\rho,
	\qquad
	\alpha\in[1/2,1].
	\label{eq:club-versus-arrow-down}
\end{align}
Consequently,
\begin{align}
	&\max_{\alpha\in[1/2,1]}
	\frac{1-\alpha}{\alpha}
	\left[
	q-
	\widetilde H_\alpha^{\frac{1-2\alpha}{1-\alpha}}
	(A|R)_\rho
	\right]
	\geq
	\max_{\alpha\in[1/2,1]}
	\frac{1-\alpha}{\alpha}
	\left[
	q-\widetilde H_\alpha^{\downarrow}(A|R)_\rho
	\right].
	\label{eq:partial-global-exponent-order}
\end{align}
Accordingly, the partially smoothed success quantity decays at least as fast as its globally smoothed counterpart, although both quantities have the same first-order threshold $H(A|R)_\rho$.

The difference between the strong converse exponents can be seen at the Log-Euclidean level.  Setting $\lambda=0$ and $\lambda=(1-2\alpha)/(1-\alpha)$, respectively, in Proposition~\ref{prop:LE-conditional-variational}, and applying the same minimax argument as in Lemma~\ref{lem:LE-exponent-variational}, gives the global and partial rate functions
\begin{align}
	\text{global:}\quad
	&\min_{\eta_{AR}}
	\left\{D(\eta_{AR}\|\rho_{AR})+
	\left[q-H(A|R)_\eta+D(\eta_R\|\rho_R)\right]_+\right\},
	\nonumber\\
	\text{partial:}\quad
	&\min_{\eta_{AR}}
	\left\{D(\eta_{AR}\|\rho_{AR})+D(\eta_R\|\rho_R)
	+\left[q-H(A|R)_\eta\right]_+\right\}.
	\label{eq:global-partial-rate-functions}
\end{align}
At the entropy-threshold level, global smoothing imposes only the joint max-relative-entropy cap, so a favorable increase of $H(A|R)_\eta$ can compensate for a mismatch between $\eta_R$ and $\rho_R$.  Partial smoothing forbids this tradeoff: the marginal constraint must be enforced separately, and the full cost $D(\eta_R\|\rho_R)$ is paid independently.  In the general Log-Euclidean variational formula, the coefficient of this marginal divergence is $(1-\lambda)(1-\alpha)/\alpha$.  Requiring it to equal one uniquely selects $\lambda=(1-2\alpha)/(1-\alpha)$, namely the boundary club-sandwiched entropy; global smoothing corresponds instead to $\lambda=0$, which yields the ordinary sandwiched entropy after pinching.  The double-pinching argument transfers this distinction to the fully quantum exponents.  Consequently, whenever the inequality above is strict, the globally smoothed sandwiched exponent is not achievable under partial smoothing.

Quantum state merging is governed by partial smoothing because the reference system $R$ is inaccessible.  Every trace-nonincreasing branch $\mathcal{N}$ acting on Alice's and Bob's systems produces a subnormalized state $\tau$ satisfying $\tau_R\leq\psi_R$, since $\mathcal{N}^*(I)\leq I$.  Hence every physical successful component obeys the fixed-marginal constraint entering $H_{\min}^{\varepsilon,P}(A|\dot R)_\psi$ and the one-shot merging converse of Ref.~\cite{anshu2020partially}.  Global smoothing drops this constraint and is therefore only a relaxation of the operational problem.  The same physical marginal appears in achievability, where decoupling and Uhlmann recovery must reconstruct the original purification with marginal $\psi_R$.

\begin{acknowledgments}

The authors used generative AI, ChatGPT 5.6 Sol, to help organize and assemble the authors’ proof ideas, as well as to prepare the manuscript. This work was supported in part by a grant of access to OpenAI models through the ChatGPT for Academic Researchers program. Figure~\ref{fig:QSM} is generated with the help of Google Gemini.
MB acknowledges support from the European Research Council (ERC Grant Agreement No.~948139) and the Excellence Cluster Matter and Light for Quantum Computing (ML4Q-2).
HC acknowledges support from National Science and Technology Council (NSTC 115-2628-E-002-005, NSTC 114-2119-M-001-002, and NSTC 115-2124-M-002-014) and Ministry of Education (NTU-115V2016-1, NTU-CC115L893705, and NTU-115L900702).
RR acknowledges financial support from the ERC grant
GIFNEQ 101163938.
MT acknowledges support from the National Research Foundation Investigatorship Award
(NRF-NRFI10-2024-0006) and the National Research Foundation, Singapore, through
the National Quantum Office, hosted by A*STAR, under its Centre for Quantum Technologies Funding
Initiative (S24Q2d0009).
\end{acknowledgments}

\appendix
\renewcommand{\thetheorem}{\Alph{section}.\arabic{theorem}}

\section{Exact strong converse exponent of the partially smoothed conditional min-entropy}
\label{app:partial-smoothing-exponent}

Exact partially smoothed strong-converse exponents were previously obtained for pure quantum states and for classical states by Berta and Yao~\cite{berta2025strong}.  The theorem below extends the exponent statement to an arbitrary finite-dimensional, possibly mixed and fully quantum, bipartite state.  In the pure-state limit, it is also consistent with the earlier large-deviation theory surrounding entanglement concentration~\cite{hayashi2003entanglement}; that literature is contextual rather than a substitute for the arbitrary-state theorem proved here.

For $\rho_{AR}\in\mathcal{S}(AR)$ and $q\in\mathbb{R}$, set
\begin{align}
 \Gamma_n(\rho,q)
 :=\Gamma_{A^n|\dot R^n}(\rho_{AR}^{\otimes n},nq)
 \label{eq:Gamma-n}
\end{align}
and define the smallest partial-smoothing error needed to reach the threshold $nq$ by
\begin{align}
 \varepsilon_n^\star(\rho,q)
 :=\inf\left\{\varepsilon\in[0,1]:
 H_{\min}^{\varepsilon,P}(A^n|\dot R^n)_{\rho^{\otimes n}}\geq nq\right\}.
 \label{eq:optimal-partial-smoothing-error}
\end{align}
Equation~\eqref{eq:Hmin-Gamma-equivalence} and compactness give
\begin{align}
 \Gamma_n(\rho,q)=1-\bigl(\varepsilon_n^\star(\rho,q)\bigr)^2.
 \label{eq:Gamma-smoothing-error}
\end{align}

\begin{theorem}[Exact strong converse exponent of the partially smoothed conditional min-entropy]
\label{thm:partial-smoothing-exponent}
For every finite-dimensional bipartite state $\rho_{AR}\in\mathcal{S}(AR)$ and every $q\in\mathbb{R}$,
\begin{align}
 &\lim_{n\to\infty}-\frac{1}{n}\log
 \left[1-\bigl(\varepsilon_n^\star(\rho,q)\bigr)^2\right]
=
 \max_{\alpha\in[1/2,1]}
 \frac{1-\alpha}{\alpha}
 \left[q-
 \widetilde H_\alpha^{\frac{1-2\alpha}{1-\alpha}}(A|R)_\rho\right],
 \label{eq:partial-smoothing-exponent}
\end{align}
where the contribution at $\alpha=1$ is zero by continuity.  In particular, the exponent is positive if and only if $q>H(A|R)_\rho$.
\end{theorem}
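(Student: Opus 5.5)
The plan is to prove matching bounds on $\Gamma_n(\rho,q)$ and then read off the result through \eqref{eq:Gamma-smoothing-error}.

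\emph{Converse.} Apply Lemma~\ref{lem:partial-smoothing-one-shot} to $\rho_{AR}^{\otimes n}$ at threshold $nq$, for each $\alpha\in[1/2,1)$. Additivity of the club-sandwiched entropy on the boundary path then gives
\begin{align}
 \Gamma_n(\rho,q)\leq 2^{-n\frac{1-\alpha}{\alpha}\left[q-\widetilde H_\alpha^{\frac{1-2\alpha}{1-\alpha}}(A|R)_\rho\right]} .
\end{align}
Optimizing over $\alpha$ yields the liminf lower bound on $-\frac1n\log\Gamma_n$. The endpoint $\alpha=1$ contributes zero.

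\emph{Achievability.} I would reuse the merging machinery, but stop before the Uhlmann step and instead build a feasible smoothing candidate directly. The key observation is the following. Let $\mathcal A_n$ be an Alice-side instrument with one Kraus operator $M_j$ per outcome, mapping $A^nK_A\to J L_A$. Let $\omega_n^\rho$ be its output and $\zeta_n^\rho=\pi_J\otimes\pi_{L}\otimes\rho_R^{\otimes n}$ the fixed target, as in Proposition~\ref{prop:LE-fixed-output}. The merging converse of Ref.~\cite{anshu2020partially} is exactly the statement that a good fixed-output fidelity can be converted into a candidate $\tau_{A^nR^n}$ for $\Gamma_n(\rho,q)$. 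Under this conversion, $\tau_{R^n}\leq\rho_R^{\otimes n}$ holds because $\tau$ is obtained from a purification by a trace-nonincreasing map on the non-$R$ systems. Likewise, $\tau\leq 2^{-nq}I\otimes\rho_R^{\otimes n}$ holds because the target is maximally mixed on $L$ with gain $\log L-\log K\geq nq$.

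The cleanest route I see is operational. Lemma~\ref{lem:controlled-Uhlmann} turns $\mathcal A_n$ into a merging protocol for a purification $\psi_{ABR}$ of $\rho_{AR}$, with squared fidelity at least $F(\omega_n^\rho,\zeta_n^\rho)$ and gain at least $nq$. Lemma~\ref{lem:one-shot-operational-reduction} then gives $F(\omega_n^\rho,\zeta_n^\rho)\leq\Gamma_n(\rho,q)$. This chain requires $\psi$ pure, but any $\rho_{AR}$ admits a purification with $B$ the purifying system, so it works in general.

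With the chain $\Gamma_n\geq F(\omega_n^\rho,\zeta_n^\rho)$ in hand, I repeat the proof of Theorem~\ref{thm:achievability} verbatim:
\begin{itemize}
\item Proposition~\ref{prop:LE-fixed-output} on the double-pinched block state $\widetilde\rho^{(m)}$, together with Lemma~\ref{lem:operational-pinching}, gives the Log-Euclidean exponent up to an $O(\log m/m)$ loss.
\item Lemma~\ref{lem:double-pinching-entropy} converts this to the club-sandwiched exponent.
\item Letting $m\to\infty$ removes the loss.
\end{itemize}
This gives the limsup upper bound on $-\frac1n\log\Gamma_n$.

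\emph{Non-multiples of $m$.} Monotonicity is not automatic here, so I would handle blocklengths $n=mk+t$ directly. Tensor a block candidate with a candidate on the $t$ leftover copies, taking $\tau=\rho$ scaled by $2^{-t(q+\log|A|)_+}$ or similar. Such a candidate is feasible in $\Gamma_t(\rho,q)$ with fidelity bounded below by a constant depending only on $m$. Feasibility is preserved under tensor products because both constraints are multiplicative, and fidelity is multiplicative, so the loss is negligible. Alternatively, the teleportation-and-repair step of Theorem~\ref{thm:achievability} can be transported through the operational chain.

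\emph{Positivity.} This follows exactly as in the proof of Theorem~\ref{thm:main}, using \eqref{eq:club-limit} and \eqref{eq:club-lower-bound}.

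The main obstacle I expect is justifying the reduction $F(\omega_n^\rho,\zeta_n^\rho)\leq\Gamma_n(\rho,q)$ for mixed $\rho$. Specifically, one must check that Lemma~\ref{lem:one-shot-operational-reduction} and Proposition~\ref{prop:partially-smoothed-converse} only use the $AR$ marginal of the purification, and that the exact-gain bookkeeping, $q$ versus the actual $\log L-\log K$, survives the embedding trick.
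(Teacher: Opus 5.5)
Your proposal is correct, and your converse is the paper's, but your achievability takes a genuinely different route. The paper does not pass through state merging at all. Proposition~\ref{prop:LE-partial-smoothing-achievability} constructs explicitly feasible candidates $\tau_n\propto\eta_nC_n(I_{A^n}\otimes B_n)$ from a pinched auxiliary state $\eta_{AR}^{\otimes n}$ with commuting joint and marginal soft caps. It bounds $-\log F(\rho_{AR}^{\otimes n},\tau_n)$ by a pinching loss plus the Gibbs formula \eqref{eq:LE-Gibbs}, and controls the clipping terms with Petz R\'enyi divergences. It then transfers to the double-pinched block source via $\Gamma_{mk}(\rho,q)\geq(m+1)^{-2k(|R|-1)}\Gamma_k(\widetilde\rho^{(m)},mq)$. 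Existence of the limit comes separately from supermultiplicativity and Fekete's lemma, so only multiples of $m$ are needed. You instead apply Lemma~\ref{lem:controlled-Uhlmann} and then Lemma~\ref{lem:one-shot-operational-reduction} to a purification, with monotonicity of $\Gamma$ in $q$ absorbing any surplus gain. This gives $F_n^{\mathrm{merg}}(\psi,-q)\leq\Gamma_n(\psi_{AR},q)$ for every $n$, so the exponent upper bound is inherited from merging achievability. The step you flag as the main obstacle is harmless: Proposition~\ref{prop:partially-smoothed-converse} holds for every pure state, and $\Gamma$ depends only on the $AR$ marginal. With that inequality you could cite Theorem~\ref{thm:achievability} directly (translated back to the $AR$ side by Proposition~\ref{prop:club-duality}), since it already covers all blocklengths. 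Your padding argument for $n=mk+t$ is still valid: it uses $\rho_{AR}\leq|A|\,I_A\otimes\rho_R$ and is essentially the paper's supermultiplicativity step. Your route is shorter and shows directly why merging and partial smoothing share the same strong converse exponent. The cost is that it imports the relative-entropy decoupling theorem (Proposition~\ref{prop:fixed-output-decoupling}), the one-shot merging converse of Ref.~\cite{anshu2020partially}, and Uhlmann's theorem into what is purely a statement about an entropic optimization. The paper's soft-cap construction is self-contained, exhibits explicit near-optimal smoothing operators, and keeps Theorem~\ref{thm:partial-smoothing-exponent} logically independent of the decoupling-based merging analysis.
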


The converse part follows from Lemma~\ref{lem:partial-smoothing-one-shot}.  It remains to prove the matching achievability bound.

\begin{proposition}[Log-Euclidean achievability for partial smoothing]
\label{prop:LE-partial-smoothing-achievability}
For every $\rho_{AR}\in\mathcal{S}(AR)$ and $q\in\mathbb{R}$,
\begin{align}
 \limsup_{n\to\infty}-\frac{1}{n}\log\Gamma_n(\rho,q)
 \leq
 \max_{\alpha\in[1/2,1]}
 \frac{1-\alpha}{\alpha}
 \left[q-H_{\alpha,\infty}^{\frac{1-2\alpha}{1-\alpha}}(A|R)_\rho\right].
 \label{eq:LE-partial-smoothing-achievability}
\end{align}
\end{proposition}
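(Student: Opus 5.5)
The plan is to reduce the statement to the operational achievability already established in Proposition~\ref{prop:LE-fixed-output} and Lemma~\ref{lem:controlled-Uhlmann}. The reduction goes through the partially smoothed converse in reverse: any good merging-type decoupling construction yields a feasible $\tau$ in \eqref{eq:Gamma-definition}. The cleanest route, however, is to build the feasible operator directly from the change-of-measure argument.

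First, fix a minimizer $\eta_{AR}$ of the right-hand side of \eqref{eq:LE-exponent-variational}. By Lemma~\ref{lem:LE-exponent-variational}, it suffices to show that
\begin{align}
\limsup_{n\to\infty}-\frac1n\log\Gamma_n(\rho,q)\leq D(\eta_{AR}\|\rho_{AR})+D(\eta_R\|\rho_R)+[q-H(A|R)_\eta]_+ .
\end{align}

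Second, purify $\rho_{AR}$ to $\psi_{ABR}$ and apply Proposition~\ref{prop:LE-fixed-output} (with $\rho=\psi_{AR}$) to obtain instruments $\mathcal A_n$ with gain at least $nq$. Then apply Lemma~\ref{lem:controlled-Uhlmann} to get a merging protocol with squared fidelity at least $F(\omega_n^\rho,\zeta_n^\rho)$ and coherent gain $\log L_n-\log K_n\geq nq$.

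Third, invoke Lemma~\ref{lem:one-shot-operational-reduction}. It gives $F(\omega_n^\rho,\zeta_n^\rho)\leq\Gamma_{A^n|\dot R^n}(\psi_{AR}^{\otimes n},\log L_n-\log K_n)$. Since $\Gamma$ is nonincreasing in its threshold argument (the constraint $\tau\leq2^{-q}I\otimes\rho_R$ only tightens as $q$ grows), this is at most $\Gamma_n(\rho,q)$. Combining with \eqref{eq:LE-fixed-output-achievability} gives the claim. This route relies on Proposition~\ref{prop:partially-smoothed-converse}, which is stated for pure tripartite states, and a purification always exists. Mixedness of $\rho_{AR}$ is therefore harmless: the reference is $R$ and Bob holds the purifying $B$.

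The main obstacle is checking that this operational detour is legitimate at the level of generality required. Proposition~\ref{prop:LE-fixed-output} and Lemma~\ref{lem:controlled-Uhlmann} are formulated for any state $\rho_{AR}$ with purification, so this should work. If one prefers a self-contained argument, I would instead construct $\tau$ explicitly. For each outcome $j$, take the Uhlmann-recovered branch of $N^{-1/2}\Phi_L\otimes\rho^{\otimes n}$ pulled back through the Kraus operator. Summing over $j$ gives a subnormalized $\tau_{A^nR^n}$ with $\tau_{R^n}\leq\rho_R^{\otimes n}$, because the operations are trace-nonincreasing on the non-$R$ systems. The bound $\tau\leq2^{-(\log L-\log K)}I\otimes\rho_R^{\otimes n}$ follows from the maximally mixed $K_A$ input. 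This is exactly the content of Theorem~6 of Ref.~\cite{anshu2020partially}, which is why I would cite the operational reduction rather than redo it.
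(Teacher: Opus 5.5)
Your argument is correct, but it follows a genuinely different route from the paper. For a purification $\psi_{ABR}$ of $\rho_{AR}$, you establish the sandwich
\[
F(\omega_n^\rho,\zeta_n^\rho)\leq F_n^{\mathrm{merg}}(\psi,-q)\leq\Gamma_n(\rho,q),
\]
and conclude with \eqref{eq:LE-fixed-output-achievability}. The first inequality comes from Lemma~\ref{lem:controlled-Uhlmann} applied to the instruments of Proposition~\ref{prop:LE-fixed-output}. The second comes from Lemma~\ref{lem:one-shot-operational-reduction} together with the (correct) monotonicity of $\Gamma$ in its threshold. There is no circularity, since neither Section~\ref{sec:achievability} nor Proposition~\ref{prop:partially-smoothed-converse} uses the appendix. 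Your first step, fixing a minimizer $\eta_{AR}$, is superfluous, because Proposition~\ref{prop:LE-fixed-output} already delivers the maximum form.

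The paper instead argues directly and non-operationally. It fixes $\eta_{AR}$ and pinches $\eta_{AR}^{\otimes n}$ with respect to $\rho_R^{\otimes n}$ and the pinched marginal $\bar\eta_{R^n}$, so that everything commutes. It then multiplies by commuting soft caps $C_n$ and $B_n$, which, after the rescaling in \eqref{eq:soft-cap-feasible-operator}, enforce both $\tau_n\leq2^{-nq}I_{A^n}\otimes\rho_R^{\otimes n}$ and $(\tau_n)_{R^n}\leq\rho_R^{\otimes n}$. Finally it bounds $-\log F(\rho_{AR}^{\otimes n},\tau_n)$ through the pinching loss, the Araki--Lieb--Thirring comparison and the Gibbs formula at $\eta_n$, controlling the clipped tails with Petz R\'enyi divergences $D_{1+s,1}$.

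Your route buys brevity and a clean conceptual statement: partial smoothing is at least as easy as merging. Combined with Theorem~\ref{thm:achievability}, the same sandwich even gives the full club-sandwiched upper bound of Theorem~\ref{thm:partial-smoothing-exponent} at once, making the transfer \eqref{eq:Gamma-double-pinching-transfer} unnecessary. The price is that a purely entropic smoothing statement then rests on two heavy imported results: the relative-entropy decoupling theorem (Proposition~\ref{prop:fixed-output-decoupling}) and the one-shot merging converse of Ref.~\cite{anshu2020partially}. The paper's proof, by contrast, is self-contained matrix analysis. It exhibits an explicit near-optimal smoothed operator and shows transparently how the marginal cap forces $D(\eta_R\|\rho_R)$ to be paid in full, mirroring the discussion in Section~\ref{sec:conclusion}.

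Your fallback ``self-contained'' construction is only a heuristic sketch of that converse; neither the joint cap nor the fidelity bound for the pulled-back operator is verified. Nothing in your main argument depends on it.
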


\begin{proof}
This proof adapts the Log-Euclidean change-of-measure and pinching architecture of Refs.~\cite{mosonyi2017strong,berta2025strong,rubboli2026strong}; the joint and marginal soft caps below enforce the additional fixed-marginal constraint.  Restrict $R$ to $\supp(\rho_R)$.  Fix $\delta>0$ and $\eta_{AR}\in\mathcal{S}(AR)$ with $\supp(\eta_{AR})\subseteq\supp(\rho_{AR})$.  For each $n$, define
\begin{align}
 \bar\eta_{R^n}&:=\P_{\rho_R^{\otimes n}}(\eta_R^{\otimes n}),
 &
 \mathcal{Q}_n&:=\P_{\bar\eta_{R^n}}\circ\P_{\rho_R^{\otimes n}},
 \label{eq:soft-cap-pinching}\\
 \eta_n&:=(\mathrm{id}_{A^n}\otimes\mathcal{Q}_n)(\eta_{AR}^{\otimes n}).
 \label{eq:soft-cap-eta}
\end{align}
By construction, $\bar\eta_{R^n}$ commutes with $\rho_R^{\otimes n}$.  Their spectral projections therefore commute, and so do the two pinching maps.  Consequently, $(\eta_n)_{R^n}=\bar\eta_{R^n}$, and $\eta_n$ commutes with both $I_{A^n}\otimes\rho_R^{\otimes n}$ and $I_{A^n}\otimes\bar\eta_{R^n}$.  Let $v_n$ be the number of nonzero joint spectral projections in $\mathcal{Q}_n$.  It is at most the product of the two spectral cardinalities.  The state $\rho_R^{\otimes n}$ has at most $(n+1)^{|R|-1}$ distinct eigenvalues, whereas the permutation-invariant operator $\bar\eta_{R^n}$ has at most $\binom{n+|R|^2-1}{|R|^2-1}$ distinct eigenvalues.  Hence
\begin{align}
 \log v_n=o(n).
 \label{eq:soft-cap-polynomial-pinching}
\end{align}

On the common support, define
\begin{align}
 C_n
 &:=2^{-\left[\log\eta_n-\log(I_{A^n}\otimes\rho_R^{\otimes n})
 -n\left(D(\eta_{AR}\|I_A\otimes\rho_R)+\delta\right)I\right]_+},
 \label{eq:joint-soft-cap}\\
 B_n
 &:=2^{-\left[\log \bar\eta_{R^n}-\log\rho_R^{\otimes n}
 -n\left(D(\eta_R\|\rho_R)+\delta\right)I\right]_+}.
 \label{eq:marginal-soft-cap}
\end{align}
All factors commute, and $0<C_n,B_n\leq I$.  The scalar inequality $2^x2^{-[x-a]_+}\leq2^a$, applied by functional calculus, gives
\begin{align}
 \eta_n C_n(I_{A^n}\otimes B_n)
 &\leq
 2^{n(D(\eta_{AR}\|I_A\otimes\rho_R)+\delta)}
 I_{A^n}\otimes\rho_R^{\otimes n},
 \label{eq:joint-cap-bound}\\
 \Tr_{A^n}\!\left[\eta_n C_n(I_{A^n}\otimes B_n)\right]
 &\leq
 2^{n(D(\eta_R\|\rho_R)+\delta)}\rho_R^{\otimes n}.
 \label{eq:marginal-cap-bound}
\end{align}
Indeed, the first inequality follows directly from \eqref{eq:joint-soft-cap}; for the second, use $\eta_nC_n\leq\eta_n$, take the partial trace, and apply \eqref{eq:marginal-soft-cap} to $\bar\eta_{R^n}$.

Now set
\begin{align}
 \tau_n
 :=2^{-n\left(D(\eta_R\|\rho_R)+[q-H(A|R)_\eta]_++\delta\right)}
 \eta_nC_n(I_{A^n}\otimes B_n).
 \label{eq:soft-cap-feasible-operator}
\end{align}
Equation~\eqref{eq:marginal-cap-bound} gives $(\tau_n)_{R^n}\leq\rho_R^{\otimes n}$.  In particular, $\tau_n$ is subnormalized.  Moreover,
\begin{align}
 D(\eta_{AR}\|I_A\otimes\rho_R)
 =D(\eta_R\|\rho_R)-H(A|R)_\eta,
 \label{eq:conditional-relative-entropy-identity}
\end{align}
and therefore \eqref{eq:joint-cap-bound} gives
\begin{align}
 \tau_n\leq2^{-nq}I_{A^n}\otimes\rho_R^{\otimes n}.
 \label{eq:soft-cap-joint-feasibility}
\end{align}
Thus $\tau_n$ is feasible for $\Gamma_n(\rho,q)$.

The operator $\tau_n$ is invariant under $\mathrm{id}_{A^n}\otimes\mathcal{Q}_n$.  Applying Proposition~\ref{prop:fidelity-pinching}, followed by \eqref{eq:fidelity-LE} and the Gibbs formula \eqref{eq:LE-Gibbs} evaluated at $\eta_n$, gives
\begin{align}
 -\log F(\rho_{AR}^{\otimes n},\tau_n)
 &\leq
 D\!\left(\eta_n\middle\|
 (\mathrm{id}_{A^n}\otimes\mathcal{Q}_n)(\rho_{AR}^{\otimes n})\right)
 +D(\eta_n\|\tau_n)+\log v_n
 \nonumber\\
 &\leq nD(\eta_{AR}\|\rho_{AR})+D(\eta_n\|\tau_n)+\log v_n,
 \label{eq:soft-cap-change-of-measure}
\end{align}
where the second inequality follows from data processing.  Since the attenuation operators are strictly positive on $\supp(\eta_n)$,
\begin{align}
 D(\eta_n\|\tau_n)
 &=n\left(D(\eta_R\|\rho_R)+[q-H(A|R)_\eta]_++\delta\right)
 \nonumber\\
 &\quad+
 \Tr\eta_n\left[\log\eta_n-\log(I_{A^n}\otimes\rho_R^{\otimes n})
 -n\left(D(\eta_{AR}\|I_A\otimes\rho_R)+\delta\right)I\right]_+
 \nonumber\\
 &\quad+
 \Tr \bar\eta_{R^n}\left[\log \bar\eta_{R^n}-\log\rho_R^{\otimes n}
 -n\left(D(\eta_R\|\rho_R)+\delta\right)I\right]_+.
 \label{eq:soft-cap-relative-entropy}
\end{align}
The last two terms converge to zero exponentially.  For the first, choose $s\in(0,1]$ so small that
\begin{align}
 D_{1+s,1}(\eta_{AR}\|I_A\otimes\rho_R)
 <D(\eta_{AR}\|I_A\otimes\rho_R)+\frac{\delta}{2}.
 \label{eq:soft-cap-Renyi-choice}
\end{align}
Using $[x]_+\leq2^{sx}/(s\ln2)$, commutativity, data processing of the Petz R\'enyi divergence $D_{1+s,1}$~\cite{petz1986quasi}, and additivity, this term is at most
\begin{align}
 \frac{1}{s\ln2}
 2^{-sn\left(D(\eta_{AR}\|I_A\otimes\rho_R)+\delta
 -D_{1+s,1}(\eta_{AR}\|I_A\otimes\rho_R)\right)}
 \leq\frac{1}{s\ln2}2^{-sn\delta/2}.
 \label{eq:soft-cap-joint-clipping}
\end{align}
By decreasing $s>0$ if necessary, the same $s$ may be chosen so that $D_{1+s,1}(\eta_R\|\rho_R)<D(\eta_R\|\rho_R)+\delta/2$.  The marginal term is then handled analogously using $D_{1+s,1}(\eta_R\|\rho_R)$.  Combining \eqref{eq:soft-cap-polynomial-pinching}, \eqref{eq:soft-cap-change-of-measure}, and \eqref{eq:soft-cap-relative-entropy}, and then letting $n\to\infty$ and $\delta\downarrow0$, yields
\begin{align}
 \limsup_{n\to\infty}-\frac{1}{n}\log\Gamma_n(\rho,q)
 \leq
 D(\eta_{AR}\|\rho_{AR})+D(\eta_R\|\rho_R)
 +[q-H(A|R)_\eta]_+.
 \label{eq:soft-cap-auxiliary-bound}
\end{align}
Minimizing over $\eta_{AR}$ and invoking Lemma~\ref{lem:LE-exponent-variational} proves \eqref{eq:LE-partial-smoothing-achievability}.
\end{proof}

\begin{proof}[Proof of Theorem~\ref{thm:partial-smoothing-exponent}]
For each $n$, the feasible set in $\Gamma_n(\rho,q)$ is a closed subset of the compact set $\mathcal{S}_{\leq}(A^nR^n)$, so the maximum is attained.  It is also strictly positive.  Indeed, $\supp(\rho_{AR}^{\otimes n})\subseteq\supp(I_{A^n}\otimes\rho_R^{\otimes n})$, so one may choose
\begin{align}
 0<c_n\leq
 \min\left\{1,
 2^{-nq-D_{\max}(\rho_{AR}^{\otimes n}\|I_{A^n}\otimes\rho_R^{\otimes n})}
 \right\}.
 \label{eq:Gamma-positive-scaling}
\end{align}
Then $c_n\rho_{AR}^{\otimes n}$ is feasible and
$F(\rho_{AR}^{\otimes n},c_n\rho_{AR}^{\otimes n})=c_n>0$.

Let $\tau_n$ and $\tau_m$ be maximizers for blocklengths $n$ and $m$.  Their tensor product obeys
\begin{align}
 (\tau_n\otimes\tau_m)_{R^{n+m}}
 &\leq\rho_R^{\otimes(n+m)},
 &
 \tau_n\otimes\tau_m
 &\leq2^{-(n+m)q}I_{A^{n+m}}\otimes\rho_R^{\otimes(n+m)},
 \label{eq:Gamma-tensor-feasibility}
\end{align}
so it is feasible at blocklength $n+m$.  Because the first fidelity argument is normalized, squared fidelity is multiplicative under tensor products.  Therefore
\begin{align}
 \Gamma_{n+m}(\rho,q)
 &\geq F(\rho_{AR}^{\otimes(n+m)},\tau_n\otimes\tau_m)
 =\Gamma_n(\rho,q)\Gamma_m(\rho,q).
 \label{eq:Gamma-supermultiplicativity}
\end{align}
Thus $a_n:=-\log\Gamma_n(\rho,q)$ is finite and subadditive.  Fekete's lemma gives
\begin{align}
 \lim_{n\to\infty}\frac{a_n}{n}
 =\inf_{n\geq1}\frac{a_n}{n}.
 \label{eq:Gamma-Fekete-limit}
\end{align}
In particular, the limit in \eqref{eq:partial-smoothing-exponent} exists.  Lemma~\ref{lem:partial-smoothing-one-shot} and additivity of the club-sandwiched conditional entropy give the lower bound on this limit.

For the matching upper bound, fix a block size $m$, let $\widetilde\rho^{(m)}$ be the double-pinched state in \eqref{eq:double-pinched-state}, and apply Proposition~\ref{prop:LE-partial-smoothing-achievability} to this block source at threshold $mq$.  If $\tau$ is feasible for $k$ copies of the block source, apply the two blockwise pinchings to $\tau$.  The resulting operator remains feasible because both pinchings fix $I_{A^m}\otimes\rho_R^{\otimes m}$ and preserve the marginal order.  Applying data processing of fidelity to the pinched source and then Proposition~\ref{prop:fidelity-pinching} twice to the original source gives
\begin{align}
 \Gamma_{mk}(\rho,q)
 \geq(m+1)^{-2k(|R|-1)}
 \Gamma_k(\widetilde\rho^{(m)},mq).
 \label{eq:Gamma-double-pinching-transfer}
\end{align}
Consequently,
\begin{align}
 \lim_{n\to\infty}-\frac{1}{n}\log\Gamma_n(\rho,q)
 &\leq
 \frac{1}{m}\max_{\alpha\in[1/2,1]}
 \frac{1-\alpha}{\alpha}
 \left[mq-H_{\alpha,\infty}^{\frac{1-2\alpha}{1-\alpha}}
 (A^m|R^m)_{\widetilde\rho^{(m)}}\right]
 \nonumber\\
 &\quad+\frac{2(|R|-1)\log(m+1)}{m}.
 \label{eq:Gamma-block-LE-bound}
\end{align}
Lemma~\ref{lem:double-pinching-entropy} and
\begin{align}
 \frac{1-\alpha}{\alpha}
 \left(-\frac{1-2\alpha}{1-\alpha}\right)
 =\frac{2\alpha-1}{\alpha}\leq1
 \label{eq:partial-smoothing-boundary-loss}
\end{align}
therefore yield
\begin{align}
 \lim_{n\to\infty}-\frac{1}{n}\log\Gamma_n(\rho,q)
 &\leq
 \max_{\alpha\in[1/2,1]}
 \frac{1-\alpha}{\alpha}
 \left[q-\widetilde H_\alpha^{\frac{1-2\alpha}{1-\alpha}}(A|R)_\rho\right]
 \nonumber\\
 &\quad+\frac{(|R|^2+2|R|-3)\log(m+1)}{m}.
 \label{eq:Gamma-finite-block-loss}
\end{align}
Letting $m\to\infty$ proves the upper bound and hence \eqref{eq:partial-smoothing-exponent}.  The positivity criterion follows from \eqref{eq:club-limit} and \eqref{eq:club-lower-bound}.
\end{proof}

\bibliography{my}

@article{anshu2020partially,
  author  = {Anshu, Anurag and Berta, Mario and Jain, Rahul and Tomamichel, Marco},
  title   = {Partially smoothed information measures},
  journal = {IEEE Transactions on Information Theory},
  volume  = {66},
  number  = {8},
  pages   = {5022--5036},
  year    = {2020},
  doi     = {10.1109/TIT.2020.2981573}
}

@article{audenaert13_alphaz,
  author  = {Audenaert, Koenraad M. R. and Datta, Nilanjana},
  title   = {$\alpha$-$z$ R{\'e}nyi relative entropies},
  journal = {Journal of Mathematical Physics},
  volume  = {56},
  number  = {2},
  pages   = {022202},
  year    = {2015},
  doi     = {10.1063/1.4906367}
}

@article{berta2026tight,
  author  = {Berta, Mario and Cheng, Hao-Chung and Yao, Yongsheng},
  title   = {Tight any-shot quantum decoupling},
  journal = {arXiv:2602.17430},
  year    = {2026}
}

@book{bhatia1997matrix,
  author    = {Bhatia, Rajendra},
  title     = {Matrix Analysis},
  series    = {Graduate Texts in Mathematics},
  volume    = {169},
  publisher = {Springer},
  address   = {New York},
  year      = {1997},
  doi       = {10.1007/978-1-4612-0653-8}
}

@article{hayashitomamichel15c,
  author  = {Hayashi, Masahito and Tomamichel, Marco},
  title   = {Correlation detection and an operational interpretation of the R{\'e}nyi mutual information},
  journal = {Journal of Mathematical Physics},
  volume  = {57},
  number  = {10},
  pages   = {102201},
  year    = {2016},
  doi     = {10.1063/1.4964755}
}

@article{horodecki2005partial,
  author  = {Horodecki, Micha{\l} and Oppenheim, Jonathan and Winter, Andreas},
  title   = {Partial quantum information},
  journal = {Nature},
  volume  = {436},
  number  = {7051},
  pages   = {673--676},
  year    = {2005},
  doi     = {10.1038/nature03909}
}

@article{rubboli2024quantum,
  author  = {Rubboli, Roberto and Goodarzi, Milad M. and Tomamichel, Marco},
  title   = {Quantum conditional entropies from convex trace functionals},
  journal = {arXiv:2410.21976},
  year    = {2024}
}

@article{rubboli2026strong,
  author  = {Rubboli, Roberto and Tomamichel, Marco},
  title   = {The strong converse exponent of composable randomness extraction against quantum side information},
  journal = {arXiv:2601.19182},
  year    = {2026}
}

@article{sion1958general,
  author  = {Sion, Maurice},
  title   = {On general minimax theorems},
  journal = {Pacific Journal of Mathematics},
  volume  = {8},
  number  = {1},
  pages   = {171--176},
  year    = {1958},
  doi     = {10.2140/pjm.1958.8.171}
}

@book{tomamichel16_book,
  author    = {Tomamichel, Marco},
  title     = {Quantum Information Processing with Finite Resources},
  series    = {SpringerBriefs in Mathematical Physics},
  volume    = {5},
  publisher = {Springer International Publishing},
  year      = {2016},
  doi       = {10.1007/978-3-319-21891-5}
}

@article{uhlmann1976transition,
  author  = {Uhlmann, Armin},
  title   = {The transition probability in the state space of a $*$-algebra},
  journal = {Reports on Mathematical Physics},
  volume  = {9},
  number  = {2},
  pages   = {273--279},
  year    = {1976},
  doi     = {10.1016/0034-4877(76)90060-4}
}

@article{petz1986quasi,
  author  = {Petz, D{\'e}nes},
  title   = {Quasi-entropies for finite quantum systems},
  journal = {Reports on Mathematical Physics},
  volume  = {23},
  number  = {1},
  pages   = {57--65},
  year    = {1986},
  doi     = {10.1016/0034-4877(86)90067-4}
}

@article{horodecki2007merging,
  author  = {Horodecki, Micha{\l} and Oppenheim, Jonathan and Winter, Andreas},
  title   = {Quantum state merging and negative information},
  journal = {Communications in Mathematical Physics},
  volume  = {269},
  number  = {1},
  pages   = {107--136},
  year    = {2007},
  doi     = {10.1007/s00220-006-0118-x}
}

@article{abeyesinghe2009mother,
  author  = {Abeyesinghe, Anura and Devetak, Igor and Hayden, Patrick and Winter, Andreas},
  title   = {The mother of all protocols: restructuring quantum information's family tree},
  journal = {Proceedings of the Royal Society A: Mathematical, Physical and Engineering Sciences},
  volume  = {465},
  number  = {2108},
  pages   = {2537--2563},
  year    = {2009},
  doi     = {10.1098/rspa.2009.0202}
}

@article{berta2009single,
  author  = {Berta, Mario},
  title   = {Single-shot quantum state merging},
  journal = {arXiv:0912.4495},
  year    = {2009}
}

@article{sharma2014strong,
  author  = {Sharma, Naresh},
  title   = {A strong converse for the quantum state merging protocol},
  journal = {arXiv:1404.5940},
  year    = {2014}
}

@article{mosonyiogawa2015hypothesis,
  author  = {Mosonyi, Mil{\'a}n and Ogawa, Tomohiro},
  title   = {Quantum hypothesis testing and the operational interpretation of the quantum R{\'e}nyi relative entropies},
  journal = {Communications in Mathematical Physics},
  volume  = {334},
  number  = {3},
  pages   = {1617--1648},
  year    = {2015},
  doi     = {10.1007/s00220-014-2248-x}
}

@article{leditzky2016strong,
  author  = {Leditzky, Felix and Wilde, Mark M. and Datta, Nilanjana},
  title   = {Strong converse theorems using R{\'e}nyi entropies},
  journal = {Journal of Mathematical Physics},
  volume  = {57},
  number  = {8},
  pages   = {082202},
  year    = {2016},
  doi     = {10.1063/1.4960099}
}

@article{mosonyi2017strong,
  author  = {Mosonyi, Mil{\'a}n and Ogawa, Tomohiro},
  title   = {Strong converse exponent for classical-quantum channel coding},
  journal = {Communications in Mathematical Physics},
  volume  = {355},
  number  = {1},
  pages   = {373--426},
  year    = {2017},
  doi     = {10.1007/s00220-017-2928-4}
}

@article{li2023strong,
  author  = {Li, Ke and Yao, Yongsheng},
  title   = {Strong converse exponent for entanglement-assisted communication},
  journal = {IEEE Transactions on Information Theory},
  volume  = {70},
  number  = {7},
  pages   = {5017--5029},
  year    = {2024},
  doi     = {10.1109/TIT.2023.3335219}
}

@article{li2024operational,
  author  = {Li, Ke and Yao, Yongsheng},
  title   = {Operational interpretation of the sandwiched R{\'e}nyi divergence of order $1/2$ to $1$ as strong converse exponents},
  journal = {Communications in Mathematical Physics},
  volume  = {405},
  number  = {2},
  pages   = {22},
  year    = {2024},
  doi     = {10.1007/s00220-023-04890-8}
}

@article{li2022tight,
  author  = {Li, Ke and Yao, Yongsheng and Hayashi, Masahito},
  title   = {Tight exponential analysis for smoothing the max-relative entropy and for quantum privacy amplification},
  journal = {IEEE Transactions on Information Theory},
  volume  = {69},
  number  = {3},
  pages   = {1680--1694},
  year    = {2023},
  doi     = {10.1109/TIT.2022.3217671}
}

@article{berta2024strong,
  author  = {Berta, Mario and Yao, Yongsheng},
  title   = {Strong converse exponent of quantum dichotomies},
  journal = {arXiv:2410.12576},
  year    = {2024}
}

@article{berta2025strong,
  author  = {Berta, Mario and Yao, Yongsheng},
  title   = {Strong converse exponents of partially smoothed information measures},
  journal = {arXiv:2505.06050},
  year    = {2025}
}

@article{verhagen2026conditions,
  author  = {Verhagen, Frits and Tomamichel, Marco and Haapasalo, Erkka},
  title   = {Conditions for large-sample majorization of pairs of flat states in terms of $\alpha$-$z$ relative entropies},
  journal = {Communications in Mathematical Physics},
  volume  = {407},
  number  = {7},
  pages   = {157},
  year    = {2026},
  doi     = {10.1007/s00220-026-05675-5}
}

@article{Wilde3,
  title={Strong converse for the classical capacity of entanglement-breaking and Hadamard channels via a sandwiched R{\'e}nyi relative entropy},
  author={Wilde, Mark M and Winter, Andreas and Yang, Dong},
  journal={Communications in Mathematical Physics},
  volume={331},
  number={2},
  pages={593--622},
  year={2014},
  publisher={Springer},
  doi={10.1007/s00220-014-2122-x}
}

@article{MDS+13,
	doi = {10.1063/1.4838856},
	year = 2013,
	publisher = {{AIP} Publishing},
	volume = {54},
	number = {12},
	pages = {122203},
	author = {Martin M{\"u}ller-Lennert and Fr{\'e}d{\'e}ric Dupuis and Oleg Szehr and Serge Fehr and Marco Tomamichel},
	title = {On quantum {R{\'e}nyi} entropies: A new generalization and some properties},
	journal = {Journal of Mathematical Physics}
}

@article{CHDH-2018,
  author  = {Cheng, Hao-Chung and Hanson, Eric P. and Datta, Nilanjana and Hsieh, Min-Hsiu},
  title   = {Non-asymptotic classical data compression with quantum side information},
  journal = {IEEE Transactions on Information Theory},
  volume  = {67},
  number  = {2},
  pages   = {902--930},
  year    = {2021},
  doi     = {10.1109/TIT.2020.3038517}
}

@article{CHDH2-2018,
  author  = {Cheng, Hao-Chung and Hanson, Eric P. and Datta, Nilanjana and Hsieh, Min-Hsiu},
  title   = {Duality between source coding with quantum side information and classical--quantum channel coding},
  journal = {IEEE Transactions on Information Theory},
  volume  = {68},
  number  = {11},
  pages   = {7315--7345},
  year    = {2022},
  doi     = {10.1109/TIT.2022.3182748},
  eprint  = {1809.11143},
  archivePrefix = {arXiv},
  primaryClass = {quant-ph}
}

@article{araki1990inequality,
  author  = {Araki, Huzihiro},
  title   = {On an inequality of Lieb and Thirring},
  journal = {Letters in Mathematical Physics},
  volume  = {19},
  number  = {2},
  pages   = {167--170},
  year    = {1990},
  doi     = {10.1007/BF01045887}
}

@article{petz1988variational,
  author  = {Petz, D{\'e}nes},
  title   = {A variational expression for the relative entropy},
  journal = {Communications in Mathematical Physics},
  volume  = {114},
  number  = {2},
  pages   = {345--349},
  year    = {1988},
  doi     = {10.1007/BF01225040}
}

@article{hayashi2003entanglement,
  author  = {Hayashi, Masahito and Koashi, Masato and Matsumoto, Keiji and Morikoshi, Fumiaki and Winter, Andreas},
  title   = {Error exponents for entanglement concentration},
  journal = {Journal of Physics A: Mathematical and General},
  volume  = {36},
  number  = {2},
  pages   = {527--553},
  year    = {2003},
  doi     = {10.1088/0305-4470/36/2/316},
  eprint  = {quant-ph/0206097},
  archivePrefix = {arXiv}
}

\end{document}